\documentclass[letterpaper,UKenglish]{lipics}
\usepackage{stmaryrd}
\usepackage{amssymb}
\usepackage{empheq}
\usepackage{verbatim}
\usepackage{graphicx}
\usepackage{color}
\usepackage{xcolor,pgf,tikz,pgflibraryarrows,pgffor,pgflibrarysnakes}
\usetikzlibrary{fit} % fitting shapes to coordinates
\usetikzlibrary{backgrounds} % drawing the background after the foreground

\usepackage{blkarray}% http://ctan.org/pkg/blkarray
\usepackage{slashbox}
\usepgflibrary{shapes}
\usetikzlibrary{snakes,automata}

\tikzstyle{background}=[rectangle,fill=gray!10, inner sep=0.1cm, rounded corners=0mm]

\newcommand\inter[1]{\llbracket #1 \rrbracket}

\newcommand{\matindex}[1]{\mbox{\scriptsize#1}}%

\newcommand{\Ss}{\mathcal{S}}
\newcommand{\Aa}{\mathcal{A}}
\newcommand{\Mm}{\mathcal{M}}

\newcommand{\suc}{\text{succ}}
\newcommand{\seq}[1]{\langle #1 \rangle}
\newcommand{\ktype}[2]{\ensuremath{\langle #1 \rangle_{#2}}}

\newcommand{\N}{\mathbb N}
\newcommand{\Nat}{\mathbb N}

\newcommand{\set}[1]{\left\{ #1 \right\}}

\newcommand{\struc}[1]{\Xi_{#1}}

\newcommand{\wsst}{\textnormal{SST}}

\newcommand{\sstla}{\ensuremath{\textnormal{SST-la}}} 
\newcommand{\sst}{\wsst{}}

\newcommand{\msot}{\textnormal{MSOT}} 
\newcommand{\fot}{\textnormal{FOT}}

\newcommand{\la}{\ensuremath{{\mathrm{la}}}}

\newcommand{\fst}{\mathrm{first}}
\newcommand{\lst}{\mathrm{last}}

\newcommand{\halve}{\mathrm{halve}}

\newcommand{\istring}{\mathrm{is\_string}}

\newcommand{\MSO}{\mathrm{MSO}}
\newcommand{\FO}{\mathrm{FO}}
\newcommand{\dom}{\mathrm{dom}}
\newcommand{\nodes}{\mathrm{pos}}
\newcommand{\positions}{\mathrm{pos}}

\newcommand{\flow}{\mathrm{flow}}

\newcommand{\addr}{\ensuremath{{\alpha}}}
\newcommand{\Addr}{\ensuremath{{\mathcal{A}}}}

\newcommand{\absaddr}{\textsc{hd}}
\newcommand{\tailaddr}{\textsc{tl}}

\newcommand{\flows}{\rightsquigarrow}

\newcommand{\varsst}{\mathcal{X}}

\newcommand{\contribute}{\text{useful}}

\newcommand{\useful}{\text{useful}}

\newtheorem{proposition}{Proposition}

\usepackage{tikz}
\tikzstyle{nloc}=[draw, text badly centered, rectangle, rounded corners, minimum size=2em,inner sep=0.5em]
\tikzstyle{loc}=[draw,rectangle,minimum size=1.4em,inner sep=0em]
\tikzstyle{trans}=[-latex, rounded corners]
\tikzstyle{trans2}=[-latex, dashed, rounded corners]

 % semantic brackets
\def\rmdef{\stackrel{\mbox{\rm {\tiny def}}}{=}} % equals definition

\title{First-order definable string transformations}

\author[1]{Emmanuel Filiot}
\author[2]{Shankara Narayanan Krishna}
\author[2]{Ashutosh Trivedi}

\affil[1]{F.N.R.S. Research Associate \\
  Universit\'e Libre de Bruxelles\\
  Bruxelles, Belgium\\
  \texttt{efiliot@ulb.ac.be}}
\affil[2]{Indian Institute of Technology Bombay\\
  Powai, Mumbai, India\\
  \texttt{krishnas,trivedi@cse.iitb.ac.in}}

\authorrunning{Filiot, Krishna, and Trivedi} %mandatory. First: Use abbreviated first/middle names. Second (only in severe cases): Use first author plus 'et. al.'

\Copyright{Filiot, Krishna, and Trivedi}%mandatory. LIPIcs license is "CC-BY";  http://creativecommons.org/licenses/by/3.0/
\EventShortName{arXiv}

\begin{document}

\maketitle
\begin{abstract}
  The connection between languages defined by computational models and
  logic for languages is well-studied. Monadic second-order logic and
  finite automata are shown to closely correspond to each-other for the
  languages of strings, trees, and partial-orders. Similar connections
  are shown for first-order logic and finite automata with certain
  aperiodicity restriction. Courcelle in 1994 proposed a way to use
  logic to define functions over structures where the output structure
  is defined using logical formulas interpreted over the input
  structure. Engelfriet and Hoogeboom discovered the corresponding
  "automata connection" by showing that two-way generalised sequential
  machines capture the class of monadic-second order definable
  transformations. Alur and Cerny further refined the result by
  proposing a one-way deterministic transducer model with string
  variables---called the streaming string transducers---to capture the
  same class of transformations. In this paper we establish a
  transducer-logic correspondence for Courcelle's first-order definable
  string transformations. We propose a new notion of transition monoid
  for streaming string transducers that involves structural properties
  of both underlying input automata and variable dependencies. By
  putting an aperiodicity restriction on the transition monoids, we
  define a class of streaming string transducers that captures exactly the class of first-order
  definable transformations.
\end{abstract}

\section{Introduction}
\label{sec:introduction}
The class of regular languages is among one of the most well-studied concept in
the theory of formal languages. 
Regular languages have been precisely characterized widely by differing formalisms like
monadic second-order logic (MSO), finite state automata, regular expressions, and finite monoids.  
In particular, the connection~\cite{Bu60} between finite state automata and 
monadic second-order logic is  one of the celebrated results of formal language theory. 
Over the years, there has been substantial research  to establish similar connections for 
the languages definable using first-order logic (FO)~\cite{dg08SIWT}.
In particular, first-order definable languages have been shown to be precisely captured by, among others, 
aperiodic finite state automata. 
Aperiodic automata are restrictions of finite automata with certain aperiodicity restrictions 
on their transition matrices defined through aperiodicity of their transition monoid. 
Other formalisms capturing first-order definable languages include counter-free automata, 
star-free regular expressions, and very weak alternating automata.

Starting with the work of Courcelle~\cite{Cour94}, logic and automata connections have also been 
established for the theory of string transformations. 
The first result in this direction is by Engelfriet and Hoogeboom~\cite{EH01}, where 
MSO-definable transformations  have been shown to be equivalent to
two-way finite transducers. This result has then been extended to
trees and macro-tree transducers \cite{Engelfriet03}.
Recently, Alur and  {\v C}ern\'y~\cite{AC10,AC11} introduced \emph{streaming string transducers}, 
a one-way finite transducer model extended with variables, and showed that they precisely capture 
MSO-definable transformations not only in finite string-to-string case, but also for 
infinite strings~\cite{FiliotTrivediLics12} and tree~\cite{AD12,ADT13}
transformations. In this paper, we show a logic and transducer connection for first-order definable string 
transformations, by introducing an appropriate notion of aperiodic transition monoid for 
streaming string transducers.   

Streaming string transducers (SSTs) manipulate a finite set of string variables to 
compute their output as they read the input string in one left-to-right pass. 
Instead of appending symbols to the output tape, SSTs concurrently update all
string variables using a concatenation of output symbols and string variables in
a \emph{copyless} fashion, i.e. no variable occurs more than once in each
concurrent variable update.   
The transformation of a string is then defined using an output (partial)
function $F$ that associates states with a copyless concatenation of string
variables, s.t.\ if the state $q$ is reached after reading the string and
$F(q) {=} XY$, then the output string is the final valuation of $X$ concatenated
with that of $Y$. 
It has been shown that SSTs have good algorithmic properties (such as decidable type-checking, 
equivalence)~\cite{AC10,AC11} and naturally generalize to various settings like 
trees and nested words \cite{AD12,ADT13}, infinite strings~\cite{FiliotTrivediLics12}, 
and quantitative languages \cite{ADD+11}. 

\subsection{Aperiodic Streaming String Transducers} 
Let us consider transformation $f_{\halve}$ defined as $a^n \mapsto a^{\lceil \frac{n}{2}\rceil}$. 
Intuitively, it can be shown (see Appendix \ref{halve} for a proof) that $f_\halve$ is not FO-definable since it requires to distinguish based 
on the parity of the input. 
Consider, the following \sst{} $T_1$ with $2$ accepting states and $1$ variable.
\begin{center}
\begin{tikzpicture}[->,>=stealth',shorten >=1pt,auto,node distance=1cm,
 semithick]
  \tikzstyle{every state}=[fill=blue!20!white,minimum size=2em]
 \node[initial, accepting, initial text={},state,fill=blue!20!white] at (0,0) (A) {1} ;
 \node[state,accepting, fill=blue!20!white] at (5,0) (B) {2} ;
 \path(A) edge  [bend right=10] node[anchor=south,above]{$a \mid X := aX$} (B);
  \path(B) edge [bend right=10] node[anchor=north,above]  {$a \mid X := X$} (A);
 \node[draw=none] at (-2,0) {$T_1:$};
   \end{tikzpicture}
\end{center}
Readers familiar with aperiodic automata may notice that the automata corresponding to $T_1$ is 
not aperiodic, but indeed has period 2. 
Formally such aperiodicity is captured by the notion of automata transition monoid.
The transition monoid of an automaton $A$ is the set of Boolean transition matrices $M_s$, for
all strings $s$, indexed by states of $A$: $M_s[p][q]=1$ iff there
exists a run from $p$ to $q$ on $s$. The set of matrices $M_s$ is a finite
monoid. It is aperiodic if there exists $m\geq 0$ such that for all
$s\in\Sigma^*$, $M_{s^m} = M_{s^{m+1}}$. Aperiodic automata define
exactly first-order languages \cite{Strau94,dg08SIWT}. 
It seems a valid conjecture that SSTs whose transition monoid of underlying automaton is aperiodic characterize 
first-order definable transformations. 
However, unfortunately this is not a sufficient condition as shown by the following \sst{} $T_0$ which 
also implements $f_\halve$ (its output is $F(1) = X$).
\begin{center}
\begin{tikzpicture}[->,>=stealth',shorten >=1pt,auto,node distance=1cm,
 semithick]
 \tikzstyle{every state}=[fill=purple!20!white,minimum size=2em]
 \node[initial, accepting, initial text={},state,fill=blue!20!white] at (5,0) (A) {1} ;
 \path(A) edge [loop right] node {$a \mid (X,Y) := (aY,X)$} (A); 
 \node[draw=none] at (3,0) {$T_0:$};
\end{tikzpicture}
\end{center}
In this example, although the underlying automaton is aperiodic, variables contribute to certain 
non aperiodicity.  
We capture this idea by introducing the notion of variable flow. 
In this SST, we say that by reading letter $a$, variable $X$ flows to $Y$ (since the update of variable 
$Y$ is based on variable $X$) while $Y$ flows to $X$.
We extend the notion of transition monoid for SSTs to take both state and variable flow into 
account. 
We define transition matrices $M_s$ indexed by pairs $(p,X)$ where $p$ is a state and $X$
is a variable. 
Since in general, for copy-full SSTs, a variable $X$ might be copied in more than one variable,
it could be that $X$ flows into $Y$ several times.  
Our notion of transition monoid also takes into account, the number of times 
a variable flows into another.  
In particular, $M_s[p,X][q,Y] = i$ means that there exists a run from
$p$ to $q$ on $s$ on which $X$ flows to $Y$ $i$ times. 
Hence the transition monoid of an \sst{} may not be finite. 

\subsection{Main results}
In this paper we introduce a new concept of transition monoid for \sst{}, used to
define the notion of aperiodic \sst{}. 
FO transformations, although weaker than MSO transducers, still enjoy a lot of expressive
power: for instance they can still double, reverse, and swap strings, and are closed under FO look-ahead.
We show that FO string transformations are exactly the transformations definable by \sst{}
whose transition monoid is aperiodic with matrix values ranging over
$\set{0,1}$ (called $1$-bounded transition monoid). 
We also show that checking aperiodicity of an \sst{} is \textsc{PSpace-complete}. 
Simple restrictions on \sst{} transition monoids nicely capture restrictions
on variable updates that has been considered in other works. 
For instance, \emph{bounded copy} of \cite{FiliotTrivediLics12} correspond to finiteness of 
the transition monoid, while \emph{restricted copy} of \cite{AD12}
correspond to its $1$-boundedness. 
Finally, unlike \cite{AC10}, our proof is not based on the intermediate model of two-way 
transducers and is more direct. 
We give a logic-based proof that simplifies that of~\cite{ADT13} by restricting it 
to string-to-string transformations. 

\subsection{Related work}
Diekert and Gastin~\cite{dg08SIWT} presented a detailed survey of several automata, logical, 
and algebraic characterisations of  first-order definable languages.
As mentioned earlier the connection between MSO and transducers have been investigated 
in \cite{AC10,EH01}. 
Connection between two-way transducers and FO-transformations has been mentioned 
in~\cite{Carton13} in an oral communication, where they left the SST connection as an open question. 
First-order transformations are considered in \cite{McKenzieEtAl06}, but
not in the sense of \cite{Cour94}. In particular, they are weaker, as they
cannot double strings or mirror them, and are definable by one-way
(variable-free) finite state transducers. 
Finally, \cite{DBLP:journals/corr/Bojanczyk13} considers first-order
definable transformations \emph{with origin information}. The
semantics is different from ours, because these transformations are
not just mapping from string to strings, but they also connect output
symbols with input symbols from where they originate. 

The first-order definability problem for regular languages is known to be decidable. 
In particular, given a deterministic automaton $A$, deciding whether $A$ defines a first-order language can 
be decided in \textsf{PSpace}. 
Although we make an important and necessary step in answering this question in the context of regular string transformation, 
the decidability remains an open problem. 

\section{Preliminaries}
\label{sec:prelims}
\subsection{Alphabets, Strings, and Languages}
An alphabet $\Sigma$ is a finite set of letters. 
A finite string over $\Sigma$ is defined as a finite sequence  of letters from
$\Sigma$.  
We denote by $\epsilon$ the empty string.
We write $\Sigma^*$ for the set of finite strings over $\Sigma$.  
A (string) language over an alphabet $\Sigma$ is defined as a set of finite 
strings. 

For a string $s \in \Sigma^*$ we write $|s|$ for its length and
$\dom(s)$ for the set $\{ 1,\dots,|s|\}$. For all $i\in \dom(s)$ we write $s[i]$ 
for the $i$-th letter of the string $s$. For any $j\in dom(s)$, the
substring starting at position $i$ and ending at position $j$ is
defined as $\epsilon$ if $j<i$ and by the sequence of letters
$s[i]s[i+1]\dots s[j]$ otherwise.
We write $s[i{:}j]$, $s(i{:}j)$, $s[i{:}j)$, and
$s  (i{:}j]$, to denote substrings of $s$ respectively starting at
$i$ and ending at $j$, starting at $i{+}1$ and ending at $j{-}1$, and so
on. 
For instance, $s[1{:}x)$ denotes the prefix ending at $x-1$ (it
is $\epsilon$ if $x = 1$), while $s(x{:}|s|]$ denotes the suffix starting
at $x+1$.

\subsection{First-order logic for strings}
We represent a string $s \in \Sigma^*$ by the relational structure 
$\struc{s} {=} (\dom(s), \preceq^s, (L^s_a)_{a \in \Sigma})$, called the string
model of $s$, where 
\begin{itemize}
\item 
  $\dom(s) = \set{1, 2, \ldots, |s|}$ is the set of positions in $s$, 
\item 
  $\preceq^s$ is a binary relation over the positions in $s$ characterizing the
  natural order, i.e. $(x, y) \in \preceq^s$ if $x \leq y$;
\item
  $L^s_a$, for all $a \in \Sigma$, are the unary predicates that hold for the
  positions in $s$ labeled with the alphabet $a$, i.e., $L^s_a(i)$ iff 
  $s[i]=a$, for all $i\in \dom(s)$.
\end{itemize}
When it is clear from context we will drop the superscript $s$ from the
relations $\preceq^s$ and $L^s_a$. 

Properties of string models over the alphabet $\Sigma$ can be formalized by
first-order logic denoted by $\FO(\Sigma)$ (or $\FO$ when $\Sigma$ is clear from
the context). 
Formulas of $\FO(\Sigma)$ are built up from variables $x, y, \ldots$ ranging
over positions of string models along with \emph{atomic formulas} of the
form   
$x{=} y, x {\preceq} y$, and  $L_a(x)$  for all $a \in \Sigma$  
where formula $x{=}y$ states that variables $x$ and $y$ points to the same
position, the formula $x \preceq y$ states that position corresponding to
variable $x$ is not bigger than that of $y$, 
and the formula $L_a(x)$ states that position $x$ has the label $a \in
\Sigma$.
Atomic formulas are connected with \emph{propositional connectives} $\neg$,
$\wedge$, $\lor$, $\to$, and \emph{quantifiers} $\forall$ and $\exists$ that
range over node variables.  
We say that a variable is \emph{free} in a formula if it does not occur in the
scope of some quantifier.  
A \emph{sentence} is a formula with no free variables.
We write $\phi(x_1, x_2, \ldots, x_k)$ to denote that  at most the variables
$x_1, \ldots, x_k$ occur free in $\phi$.
  For a string $s\in \Sigma^*$ and for positions $n_1, n_2, \ldots, n_k \in
\dom(s)$ we say that $s$ with valuation $\nu = (n_1, n_2, \ldots, n_k)$
satisfies the formula $\phi(x_1, x_2, \ldots, x_k)$ and we write 
$(s, \nu) \models \phi(x_1, x_2, \ldots, x_k)$ or  $s \models
\phi(n_1, n_2, \ldots, n_k)$ if formula $\phi$ with $n_i$ as the
interpretation of $x_i$ is satisfied in the string model $\struc{s}$. 
We define the following useful FO-shorthands. 
\begin{itemize}
\item
  $x \succ y \rmdef \neg (x \preceq y)$ and $x \prec y \rmdef (x \preceq y) \wedge
  \neg (x = y)$,
\item
  $S(x, y) \rmdef (x \prec y) \wedge \forall z ((z \prec y) \to (z \preceq
  x))$   
\item
  $\lst(x) \rmdef \neg \exists y. S(x, y)$ and $\fst(x) \rmdef \neg \exists
  y. S(y, x)$
\item
  The sentence $\istring$ characterizes valid string models and is
  defined as 
  \[
  \istring \rmdef   \forall x,y,z. (\vee_{a\in\Sigma} L_a(x))\wedge\wedge_{a \not = b \in \Sigma} (L_a(x) \to \neg
  L_b(x)) \wedge (S(x, y) \wedge S(x, z) \to y=z) 
  \wedge (\fst(x) \wedge \fst(y) \to x = y).
  \]
\end{itemize}

  It is easy to see that a structure satisfying $\istring$ property uniquely
  characterizes a string.
The language defined by an FO sentence $\phi$ is $L(\phi) \rmdef \set{s \in \Sigma^* \::\:
  \struc{s} \models \phi}$. 
We say that a language $L$ is FO-definable if there is an FO sentence $\phi$
such that $L = L(\phi)$. 
\begin{example}
  Let $\Sigma = \set{a, b}$. 
  Consider the language $L_1 \subseteq \Sigma^*$ of strings ending with $b$
  definable using the following formula  
  $
  \forall x. (\lst(x)\to L_b(x))$.
  The language $L_2 = \set{(ab)^n \::\: n \geq 0}$ is definable using the
  following FO formula:
  \[ 
  \forall x.(\fst(x) \to L_a(x)) \wedge \forall x.(\lst(x) \to L_b(x)) 
  \wedge \forall y. (L_a(x) {\wedge} S(x, y) {\to} L_b(y)) 
  \wedge  \forall y. (L_b(x) {\wedge} S(x, y) {\to} L_a(y))) 
\]
\end{example}

First-order logic can be used, in an analogous manner, to define languages of
trees and graphs by defining appropriate relational structures~\cite{Tho96}. 
Monadic second-order logic extends first-order logic by permitting
variables to range over sets of positions (monadic second-order variables) and
quantification over such variables. 
We say that a language is MSO-definable if it can be characterized by an MSO
sentence. 
\subsection{Properties of first-order logic}
\label{fo:prop}

The \emph{quantifier rank}, $qr(\phi)$, of an FO-formula $\phi$ is defined as
the maximal number of nested quantifiers in $\phi$, formally:
\begin{eqnarray*}
qr(\phi) = 
\begin{cases}
  0 & \text{ if $\phi$ is atomic} \\
  \max \set{qr(\phi_1), qr(\phi_2)} & \text{ if $\phi = \phi_1 \vee \phi_2$ or
    $\phi_1 \wedge \phi_2$}\\
  qr(\phi_1) & \text{ if $\phi = \neg \phi_1$}\\
  1 + qr(\phi_1) & \text{ if $\phi = \exists x \phi_1$ or $\phi = \forall x \phi_1$}\\
\end{cases}
\end{eqnarray*}

A fundamental property~\cite{Strau94} of first-order logic states
that for a given $k \in \Nat$, there are only finitely many sentences---up to
logical equivalence---of quantifier rank lesser than $k$. 
Based on this property one defines the notion of first-order $k$-type for
strings. 
The first-order $k$-type of a string $s$, denoted by $\ktype{s}{k}$, is the set
of FO-sentences of quantifier rank at most $k$ that are satisfied by $s$. 
Formally,
\[
\ktype{s}{k} = \set{ \phi \::\: \phi\text{ is an FO-sentence s.t. } qr(\phi)\leq
  k\ \wedge\ s \models \phi} 
\]
We write $\Theta_k = \set{\ktype{s}{k}\ |\ s \in \Sigma^* }$ for the
set of $k$-types. Since there are only finitely many sentences of
quantifier rank lesser than $k$, $\Theta_k$ is finite.  

We say that two strings $s, s'\in\Sigma^*$ are $k$-equivalent, denoted by
$s \equiv_k s'$, if they have the same $k$-type,
i.e. $\ktype{s}{k}=\ktype{s'}{k}$. 
In other words, $s$ and $s'$ are $k$-equivalent if they satisfy the same
FO-sentences of quantifier rank at most $k$. 
It is also well-known~\cite{Strau94} that $\equiv_k$ is a congruence relation
of finite index.
\begin{figure}
\begin{center}
\begin{tikzpicture}[->,>=stealth',shorten >=1pt,auto]

\tikzstyle{graphnode}=[circle,fill=black,thick,inner
sep=0pt,minimum size=1.5mm]

\draw [|-(] (0,0) -- node[above] {$s_1$} (1.4, 0) ;

\node (a1) at (1.5,0) {$a$} ;
\node (i1) at (1.5,0.3) {$i_1$} ;

\draw [)-(] (1.6,0) -- node[above] {$s_2$} (6.9, 0) ;

\node (b1) at (7,0) {$b$} ;
\node (i2) at (7,0.3) {$i_2$} ;

\draw [)-|] (7.1,0) -- node[below] {$s_3$} (8, 0) ;

\draw [|-(] (0,-1) -- node[below] {$s'_1$} (2.9, -1) ;

\node (a2) at (3,-1) {$a$} ;
\node (i1p) at (3,-1.4) {$i'_1$} ;

\draw [)-(] (3.1,-1) -- node[below] {$s'_2$} (5.9, -1) ;

\node (b2) at (6,-1) {$b$} ;
\node (i2p) at (6,-1.4) {$i'_2$} ;

\draw [)-|] (6.1,-1) -- node[below] {$s'_3$} (8, -1) ;

\draw [-, dotted] (a1) -- (a2) ;
\draw [-, dotted] (b1) -- (b2) ;

\node  at (1.2, -0.5) {$\equiv_{k+2}$};
\node  at (4, -0.5) {$\equiv_{k+2}$};
\node  at (7.3, -0.5) {$\equiv_{k+2}$};

\end{tikzpicture}
\end{center}
\vspace{-4mm}
\caption{\label{fig:indisformulas} String decomposition of Proposition~\ref{prop:ktype}.2}
\vspace{-4mm}
\end{figure}
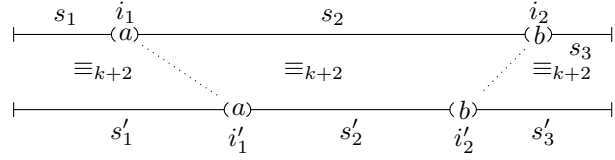

\begin{proposition}[Properties of FO-formulas of bounded
  quantifier-depth~\cite{Strau94}] 
  \label{prop:ktype}
  In this paper we use the following fundamental properties of FO formulas.  
\begin{enumerate}
\item \label{prop:compo}
  For all strings $s_1,s_2,s'_1,s'_2\in\Sigma^*$, if $s_1\equiv_k s'_1$ and
  $s_2\equiv_k s'_2$, then $s_1s_2\equiv_k s'_1s'_2$. 
\item\label{prop:indis}
  For all $k\geq 0$. Let $s_1,s_2,s_3,s'_1,s'_2,s'_3\in\Sigma^*$, and
  $a,b\in\Sigma$, such that $s_i\equiv_{k+2} s'_i$, $i=1,2,3$.  
  Let $i_1 =|s_1|+1$, $i_2=i_1+|s_2|+1$, $i'_1 =|s'_1|+1$, $i'_2=i'_1+|s'_2|+1$
  (see Fig. \ref{fig:indisformulas}).  
  Let $\phi(x,y)$, $\psi(x)$ be two FO formulas of quantifier rank at most
  $k$. We have 
  \begin{eqnarray*}
    s_1as_2\models \psi(i_1) \text{ iff }  s'_1as'_2\models \psi(i'_1) & \text{ and
    } &
    s_1as_2bs_3\models \phi(i_1,i_2) \text{ iff }  s'_1as'_2bs'_3\models \phi(i'_1,i'_2).
  \end{eqnarray*}

\item\label{prop:ape}\cite{Strau94}
  For all $k\geq 0$ and  all $m\geq 2^k$, for all strings $s,s'\in\Sigma^*$, $s^m \equiv_k
  s^{m+1}$, or in other words $\ktype{s^m}{k} = \ktype{s^{m+1}}{k}$. 
\end{enumerate}
\end{proposition}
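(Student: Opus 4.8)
The uniform tool I would use for all three parts is the Ehrenfeucht--Fra\"iss\'e (EF) game characterisation of $\equiv_k$: two string models satisfy the same FO-sentences of quantifier rank at most $k$ (equivalently, have the same $k$-type $\ktype{\cdot}{k}$) if and only if Duplicator has a winning strategy in the $k$-round EF game played on them, and more generally $(s,\bar n)$ and $(s',\bar n')$ agree on all rank-$k$ formulas with free variables interpreted at $\bar n,\bar n'$ iff Duplicator wins the $k$-round game with those positions pre-placed as pebbles. The plan is to recall this characterisation (including the pebbled variant) and then obtain each part by assembling an explicit Duplicator strategy from the strategies granted by the hypotheses. Throughout, the only atomic relations to be preserved are equality, the order $\preceq$, and the labels $L_a$.

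For part~\ref{prop:compo} I would prove that $\equiv_k$ is a congruence for concatenation by the ``disjoint sum of games'' argument. Given Duplicator strategies on $(s_1,s'_1)$ and on $(s_2,s'_2)$ for the $k$-round game, I play the $k$-round game on $(s_1s_2,\, s'_1s'_2)$ by routing each of Spoiler's moves to the corresponding component: a move landing among positions $1,\dots,|s_1|$ is answered by the first strategy, a move among the remaining positions by the second. Labels are preserved within each component, and the order $\preceq$ is preserved because every first-part position precedes every second-part position in \emph{both} strings, while the order inside a part is preserved by the component strategy. Hence after $k$ rounds the pebbled substructures form a partial isomorphism, Duplicator wins, and $s_1s_2\equiv_k s'_1s'_2$.

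For part~\ref{prop:indis} I would use the pebbled game. To handle $\psi(x)$ at the marked position $i_1$ (resp.\ $i'_1$) I start the $k$-round game on $s_1 a s_2$ and $s'_1 a s'_2$ with a single pebble pre-placed on the inserted $a$'s, and for $\phi(x,y)$ I pre-place two pebbles, on the $a$'s and on the $b$'s of $s_1 a s_2 b s_3$ and $s'_1 a s'_2 b s'_3$. These initial pebblings are partial isomorphisms (the markers carry matching labels and occur in the same order), and Duplicator again plays the disjoint-sum strategy across the blocks $s_1 \mid a \mid s_2$ (resp.\ $s_1 \mid a \mid s_2 \mid b \mid s_3$): markers map to markers, and each $s_i$-block is played using a component strategy. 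Any new pebble inside a block lies strictly between the same two markers on both sides, so its order to the markers is automatic, and its order to pebbles in other blocks is fixed by block membership; only the intra-block consistency needs a strategy, which the hypothesis $s_i\equiv_{k+2}s'_i$ supplies with room to spare. The slack of two quantifier ranks comfortably absorbs the bookkeeping around the at most two distinguished positions, and a careful accounting of the pebbled game shows it suffices, so neither $\psi$ nor $\phi$ can distinguish the two structures.

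For part~\ref{prop:ape} I would prove the stronger statement that $s^m\equiv_k s^n$ whenever $m,n\ge 2^k$, of which the claim is the case $n=m+1$, by induction on $k$. The base case $k=0$ is immediate: the signature has no closed atomic formula, so every string shares the same $0$-type. For the inductive step I analyse the $k$-round game on $s^m$ and $s^n$ through the invariant that, after each round, every maximal gap of consecutive $s$-copies between adjacent pebbles (and at the two ends) either has exactly equal length on the two sides or has at least $2^{r}$ copies on both sides, where $r$ is the number of rounds still to be played. When Spoiler plays inside a gap, Duplicator copies the position within its local $s$-copy (using part~\ref{prop:compo} to make the within-copy matching rigorous) and splits the gap so that each resulting sub-gap keeps the invariant for $r-1$ rounds; this is possible precisely because a gap of $\ge 2^{r}$ copies splits into two sub-gaps that can each be kept $\ge 2^{r-1}$ or matched exactly, and the starting hypothesis $m,n\ge 2^k$ establishes the invariant before round one. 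Note that finiteness of the index of $\equiv_k$ alone only yields eventual periodicity $t^p=t^{p+d}$ of the residue $t=[s]$, not the aperiodicity $t^m=t^{m+1}$, so this counting argument is genuinely needed. I expect part~\ref{prop:indis} to be the main obstacle: pinning down the exact pebbled-game bookkeeping and confirming that the hypothesis $\equiv_{k+2}$ (rather than a smaller slack) is what the uniform statement requires is the delicate point, whereas the cross-block order consistency is routine, and the counting invariant of part~\ref{prop:ape} is a textbook EF-game argument.
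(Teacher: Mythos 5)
Your proposal is correct, but on the one part the paper actually proves --- part~\ref{prop:indis} --- you take a genuinely different route. The paper reduces part~\ref{prop:indis} to part~\ref{prop:compo}: it extends the alphabet to $\Sigma\times\{0,1\}^2$, marks the distinguished positions $i_1,i_2$ with extra bits, observes that the marked blocks are still equivalent (bijective renaming) so the whole marked strings are $\equiv_{k+2}$-equivalent by concatenation, and then closes the free variables of $\phi(x,y)$ by two existential quantifiers that locate the marked letters. Those two quantifiers are precisely where the hypothesis $\equiv_{k+2}$ (rather than $\equiv_k$) is consumed. You instead play the $k$-round Ehrenfeucht--Fra\"iss\'e game with the two pebbles pre-placed and run a disjoint-sum strategy over the five blocks; this is self-contained, avoids the alphabet extension, and --- as you half-suspect --- actually needs only $s_i\equiv_k s'_i$, so it proves a slightly sharper statement of which the paper's is a weakening. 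What the paper's argument buys in exchange is that it uses part~\ref{prop:compo} as a black box and never needs the pebbled-game characterisation of rank-$k$ equivalence with parameters. Parts~\ref{prop:compo} and~\ref{prop:ape} are only cited to Straubing in the paper; your disjoint-sum game for the former and your gap-splitting induction for the latter are the standard proofs, and your remark that finiteness of the index of $\equiv_k$ alone would give only eventual periodicity (not $x^m=x^{m+1}$ with threshold $2^k$) correctly identifies why the counting argument cannot be replaced by a purely algebraic one.
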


Thanks to Proposition \ref{prop:ktype}.\ref{prop:compo}, one can extend
the concatenation operator to types: for all
$\tau_1,\tau_2\in\Theta_k$, $\tau_1.\tau_2 = \ktype{s_1.s_2}{k}$ where
$s_1,s_2\in\Sigma^*$ are such that $\tau_i = \ktype{s_i}{k}$,
$i=1,2$. The operator ``$.$'' on $k$-types is called \emph{type composition}. 

The following proposition states that $k$-types can be represented by an FO
sentence of quantifier-depth at most $k$. 
Moreover, the $k$-types of a substring of $s$ between two positions $i_1$ and
$i_2$ such that $i_1<i_2$ can also be characterized by some FO-formula with two
free variables by guarding all quantifications of any variable $z$ in
$\Phi_\tau$ ($\tau$ is a $k$-type) by the predicate $guard(z) = x\preceq z\preceq y$.  

\begin{proposition}[\cite{Strau94}]\label{prop:hintikka}
  Let $\Theta_k$ be the set of all $k$-types. 
\begin{enumerate}
\item
  For all $k$-types $\tau\in\Theta_k$, there exists an FO-sentence $\Phi_\tau$
  of quantifier rank at most $k$, such that for all strings $s \in\Sigma^*$,
  $s \models \Phi_\tau$ iff $\ktype{s}{k} = \tau$.  
\item 
  For all $k$-types $\tau\in\Theta_k$, there exists an FO-formula
  $\Psi_\tau(x,y)$ of quantifier rank at most $k$ such that for all strings
  $s \in\Sigma^*$ and all positions $i_1<i_2$ of $s$, $s\models
  \Psi_\tau(i_1,i_2)$ iff $\ktype{s[i_1:i_2]}{k} = \tau$.  
\end{enumerate}
\end{proposition}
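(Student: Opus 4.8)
The plan is to use the finiteness of $\Theta_k$ (already noted above) together with the standard Hintikka-formula construction, and then to obtain the two-variable version by relativizing (``guarding'') to an interval. For Part~1, recall that up to logical equivalence there are only finitely many FO-sentences of quantifier rank at most $k$; fix a finite set $\psi_1,\dots,\psi_N$ of representatives, one per equivalence class, so that a $k$-type $\tau$ is completely determined by which representatives it contains. I would define
\[
\Phi_\tau \;\rmdef\; \bigwedge_{\psi_i\in\tau}\psi_i \;\wedge\; \bigwedge_{\psi_i\notin\tau}\neg\psi_i .
\]
Since negation preserves quantifier rank and conjunction takes the maximum, $qr(\Phi_\tau)\le k$. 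For correctness, write $\tau=\ktype{w}{k}$ for some witness string $w$; then $s\models\Phi_\tau$ holds iff $s$ and $w$ satisfy exactly the same representatives $\psi_i$, and since every sentence of rank $\le k$ is equivalent to some $\psi_i$, this is equivalent to $s\equiv_k w$, i.e.\ to $\ktype{s}{k}=\tau$. (Equivalently, one may take $\Phi_\tau$ to be the rank-$k$ Hintikka sentence of $w$, built by the usual induction on $k$, whose defining property is precisely $s\models\Phi_\tau \iff s\equiv_k w$.)

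For Part~2, I would obtain $\Psi_\tau(x,y)$ from $\Phi_\tau$ by relativizing every quantifier to the interval $[x,y]$: replace each $\exists z.\,\chi$ by $\exists z.\,(x\preceq z\preceq y)\wedge\chi$ and each $\forall z.\,\chi$ by $\forall z.\,(x\preceq z\preceq y)\to\chi$, leaving the atomic formulas untouched. The guards are quantifier-free, so this transformation neither changes the quantifier rank (hence $qr(\Psi_\tau)\le k$) nor introduces free variables other than $x$ and $y$.

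The correctness of Part~2 is the only genuinely technical point, and it rests on a relativization/substructure lemma. Fix $s$ and positions $i_1<i_2$, and set $x=i_1,\,y=i_2$. The guards restrict all quantification to the convex block of positions $\{i_1,i_1+1,\dots,i_2\}$, and the structure induced on this block by $\preceq$ and the predicates $L_a$ is order-isomorphic, via the shift $p\mapsto p-i_1+1$, to the string model $\struc{s[i_1:i_2]}$ (order and labels are inherited, and the block is an interval, so the restricted order and the derived successor agree with those of the substring). A routine induction on the structure of $\Phi_\tau$ then shows that $s\models\Psi_\tau(i_1,i_2)$ iff $\struc{s[i_1:i_2]}\models\Phi_\tau$; combined with Part~1 this yields $s\models\Psi_\tau(i_1,i_2)$ iff $\ktype{s[i_1:i_2]}{k}=\tau$, as required. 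The hard part is purely bookkeeping: one must check that the atomic order and label facts match under the shift isomorphism and that the relativized quantifier ranges coincide with the substring's domain. No idea beyond the guarding trick announced in the paragraph preceding the statement is needed.
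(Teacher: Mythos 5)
Your proposal is correct and follows exactly the route the paper intends: Part~1 is the standard Hintikka-sentence construction from the finiteness of rank-$k$ sentences up to equivalence, and Part~2 is obtained by guarding every quantified variable $z$ in $\Phi_\tau$ with $x\preceq z\preceq y$, which is precisely the relativization the paper describes in the sentence preceding the proposition (the paper itself only cites Straubing and does not spell out the details). Your additional check that the guarded structure on the interval is isomorphic to the string model of $s[i_1{:}i_2]$ is the right bookkeeping step and nothing further is needed.
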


 \subsection{Aperiodic finite automata}
A finite automaton is a tuple $\Aa = (Q, q_0, \Sigma, \delta, F)$ where $Q$ is
a finite set of states, $q_0 \in Q$ is the initial state, $\Sigma$ is an input
alphabet, $\delta: Q \times \Sigma \to Q$ is a transition function, and
$F \subseteq Q$ is the set of accepting states.
For states $q, q' \in Q$ and letter $a \in \Sigma$ we say that $(q, a, q')$ is a
transition of the automaton $\Aa$ if $\delta(q,a ) = q'$ and we write $q
\xrightarrow{a} q'$.  
A run of $\Aa$ over a finite string $s = a_1 a_2 \ldots a_n \in \Sigma^*$  is a
finite sequence of transitions $\seq{(q_0, a_1, q_1), (q_1, a_2, q_2), \ldots,
  (q_{n-1}, a_n, q_n)} \in (Q \times \Sigma \times Q)^*$ starting from the initial state
$q_0$ and we represent such runs as $q_0 \xrightarrow{a_1} q_1 \xrightarrow{a_2}
q_2 \cdots q_n$; also, in this case we say that there is a run of $\Aa$ from $q_0$
to $q_n$ over the string $s$ and we write $q_0 \rightsquigarrow^s_\Aa q_n$ (or
$q_0 \rightsquigarrow^s q_n$ if the automaton is clear from the context).
A string $s$ is accepted by a finite automaton $\Aa$ if there exists $q_n \in F$
such that $ q_0 \rightsquigarrow^{s} q_n$. 
The language defined by a finite automaton $\Aa$ is $L(\Aa) = \set{s \::\:  q_0
  \rightsquigarrow^{s} q_n \text{ and } q_n \in F}$. 

B\"uchi-Elgot-Trakhtenbrot~\cite{Bu60,Elg61,Tra62} first established the
connection between mathematical logic and automata theory  by showing that the
deterministic finite state automata accept the same class  of languages as
monadic second order logic (MSO) interpreted over finite strings.
This class of languages is also known as \emph{regular languages}.
\begin{theorem}[\cite{Bu60,Elg61,Tra62}]
  \label{thm:mso-finite}
  A language $L\subseteq \Sigma^*$ is \ensuremath{\MSO}-definable iff it is
  accepted by some finite automaton.
\end{theorem}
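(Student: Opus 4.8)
The plan is to prove the two directions separately: the forward direction (automaton implies $\MSO$) is a direct encoding of runs, while the backward direction (\MSO{} implies automaton) proceeds by structural induction on the formula.

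For the forward direction, suppose $L = L(\Aa)$ for a finite automaton $\Aa = (Q, q_0, \Sigma, \delta, F)$ with $Q = \{p_1, \ldots, p_n\}$. I would existentially quantify over monadic second-order variables $X_{p_1}, \ldots, X_{p_n}$, where the intended meaning of $X_p$ is the set of positions $i$ such that the run of $\Aa$ on $s$ is in state $p$ after reading $s[1{:}i]$. The sentence $\phi$ then asserts that these sets (i) partition $\dom(s)$, using $\fst$ and $S$ to talk about positions; (ii) are consistent with the first transition, i.e. if $\fst(x)$ and $L_a(x)$ hold then $x \in X_q$ where $q = \delta(q_0, a)$; (iii) respect $\delta$ along successors, i.e. if $S(x,y)$, $x \in X_p$ and $L_a(y)$ then $y \in X_q$ where $q = \delta(p,a)$; and (iv) the last position lies in some $X_q$ with $q \in F$ (the empty string is handled by a separate disjunct asserting $q_0 \in F$). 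Since $\delta$ is deterministic, the sets satisfying (i)--(iii) are unique, so $s \models \phi$ iff $\Aa$ accepts $s$.

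For the backward direction, the key idea is to handle formulas with free variables by enlarging the alphabet. A formula $\phi(x_1,\dots,x_k,X_1,\dots,X_m)$ is interpreted over strings whose letters carry, besides their $\Sigma$-label, one extra bit per free variable; the track for a first-order variable $x_i$ encodes a position (exactly one $1$) and the track for a second-order variable $X_j$ encodes a set. I would then build, by induction on $\phi$, a finite automaton over this extended alphabet recognizing exactly the well-formed encodings that satisfy $\phi$. Atomic formulas $L_a(x)$, $x \preceq y$, $x = y$, and membership are recognized by simple automata scanning the marker tracks; conjunction and disjunction use closure of regular languages under intersection and union; negation uses closure under complement (taken relative to the well-formed encodings); and existential quantification $\exists x\,\psi$ or $\exists X\,\psi$ corresponds to projecting away the relevant track. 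A sentence has no free variables, so its automaton is over $\Sigma$ itself and recognizes $L(\phi)$.

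The main obstacle is the interplay of negation and quantification. Projection turns a deterministic automaton into a nondeterministic one, so to handle an alternation of quantifiers and negations one must repeatedly complement nondeterministic automata, which forces a determinization (subset construction) at each complementation step; moreover complementation must be taken relative to the language of well-formed encodings, so that the uniqueness-of-marker condition for first-order variables is preserved. Maintaining the precise inductive invariant --- that each constructed automaton accepts exactly the valid encodings of satisfying valuations --- is the crux, while the individual closure constructions are routine.
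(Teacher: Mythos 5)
Your proposal is the standard B\"uchi--Elgot--Trakhtenbrot argument and it is correct: the forward direction guesses the run via one second-order set per state, and the backward direction proceeds by induction on the formula over an alphabet extended with marker tracks, using closure of regular languages under union, intersection, complement (relative to well-formed encodings, with determinization before each complementation) and projection. The paper does not prove this theorem at all --- it is stated as a cited classical result --- so there is nothing to compare against beyond noting that your proof is the canonical one found in the cited sources.
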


To define a similar automata connection for FO-definable languages, we need to
introduce the concept of aperiodic finite automata. 
Recall that a monoid is an algebraic structure $(M, \cdot, e)$ with a non-empty set
$M$, a binary  operation $\cdot$, and an identity element $e \in M$ such that 
for all $x, y, z \in M$ we have that  $(x \cdot (y \cdot z)) {=} ((x \cdot y)
\cdot z)$, and $x \cdot e = e  \cdot x$ for all $x \in M$.
We say that a  monoid $(M, \cdot, e)$ is \emph{finite} if the set $M$ is finite.
We say that a monoid $(M, ., e)$ is \emph{aperiodic}~\cite{Strau94} if there
exists $n \in \Nat$ such that for all $x \in M$, $x^n = x^{n+1}$. 
Note that for finite monoids, it is equivalent to require that for all $x\in M$,
there exists $n\in \Nat$ such that $x^n= x^{n+1}$.

\begin{example}[Monoids]
  The following three monoids are useful for the development of the results
  presented in the paper.
\begin{itemize}
\item {\bf Free Monoid}. 
  The set of all strings over $\Sigma$ forms a monoid, with string concatenation
  as the operation and the empty string $\epsilon$ as the identity element. 
  This monoid is denoted as $(\Sigma^*, ., \epsilon)$ and known as the free
  monoid.  
\item {\bf $k$-type Monoid}.
  The set of $k$-types form a finite monoid $(\Theta_k, ., \ktype{\epsilon}{k})$
  with type composition as the  operation and the $k$-type of the empty string
  $\ktype{\epsilon}{k}$ as the identity element. 
  For instance, a direct consequence of Proposition~\ref{prop:ktype}.(3) is
  aperiodicity of the  monoid $(\Theta_k, ., \ktype{\epsilon}{k})$. 
\item {\bf Transition Monoid} The set of transition matrices of a finite automaton $\Aa = (Q, q_0, \Sigma,
\delta, F)$ forms a finite monoid with matrix multiplication as the operation and the
unit matrix $\mathbf{1}$ as the identity element.
This monoid is denoted as $\Mm_\Aa = (M_\Aa, \times, \mathbf{1})$ and known as
transition monoid of $\Aa$.  
Formally, the set $M_\Aa$ is the set of $|Q|$-square Boolean matrices
$M_A = \set{ M_s \::\: s \in \Sigma^*}$ where for all strings $s \in
\Sigma^*$, we have that $M_s[p][q] = 1$ iff $p \rightsquigarrow^s q$.  

\end{itemize}
\end{example}

We say that a finite automaton is aperiodic if its transition monoid is
aperiodic.
The following is a key theorem characterizing  FO-definable languages using
automata. 
\begin{theorem}\cite{Strau94}
  \label{thm:fo-aperiodic}
  A language $L\subseteq \Sigma^*$ is \ensuremath{\FO}-definable iff it is
  accepted by some aperiodic finite automaton.
\end{theorem}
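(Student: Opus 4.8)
The plan is to prove the two implications of Theorem~\ref{thm:fo-aperiodic} separately, leaning throughout on the $k$-type machinery of Propositions~\ref{prop:ktype} and~\ref{prop:hintikka}. The forward direction is a direct construction via types; the backward direction reduces cleanly to a single monoid-theoretic lemma that I expect to be the real difficulty.

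\textbf{From \FO{} to aperiodic automata.} Suppose $L = L(\phi)$ for an \FO{} sentence $\phi$ with $qr(\phi) = k$. I would build the \emph{type automaton} $\Aa_k$ whose state set is the (finite) set of $k$-types $\Theta_k$, with initial state $\ktype{\epsilon}{k}$, deterministic transitions $\tau \xrightarrow{a} \tau\cdot\ktype{a}{k}$ given by type composition, and accepting set $\{\,\tau \mid s \models \phi \text{ for some (equivalently every) } s \text{ with } \ktype{s}{k}=\tau\,\}$; this set is well defined precisely because $qr(\phi)\le k$, so satisfaction of $\phi$ depends only on the $k$-type. A routine induction on $|s|$, using that $\ktype{\epsilon}{k}$ is the identity of the type monoid together with compositionality (Proposition~\ref{prop:ktype}.\ref{prop:compo}), shows the run on $s$ ends in state $\ktype{s}{k}$, so $L(\Aa_k)=L(\phi)$. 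For aperiodicity, observe that the transition matrix $M_s$ of $\Aa_k$ realises right multiplication by $\ktype{s}{k}$, hence $(M_s)^m = M_{s^m}$ corresponds to $\ktype{s^m}{k}$. By Proposition~\ref{prop:ktype}.\ref{prop:ape}, for every $m \ge 2^k$ and every $s$ we have $\ktype{s^m}{k} = \ktype{s^{m+1}}{k}$, whence $(M_s)^{m}=(M_s)^{m+1}$; since the monoid is finite this is exactly aperiodicity with witness $n=2^k$.

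\textbf{From aperiodic automata to \FO{}.} Let $\Aa$ be aperiodic, so its transition monoid $\Mm_\Aa$ is finite and aperiodic, and consider the morphism $\eta : s \mapsto M_s$ onto $\Mm_\Aa$. Writing $P \subseteq \Mm_\Aa$ for the set of matrices $M$ with $M[q_0][q]=1$ for some accepting $q$, we have $L(\Aa) = \bigcup_{M\in P}\eta^{-1}(M)$, a finite union, so it suffices to show each fibre $\eta^{-1}(M)$ is \FO{}-definable. Since every \FO{}-definable language is a union of $\equiv_k$-classes for $k$ the quantifier rank of a defining sentence, and the relations $\equiv_k$ refine as $k$ grows, this is equivalent to a \emph{factoring lemma}: there exists $k$ such that $s\equiv_k s' \Rightarrow M_s = M_{s'}$, i.e.\ $\eta$ factors through the projection $\Sigma^* \to \Theta_k$. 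Granting this, each relevant $k$-type class $\tau$ is captured by the sentence $\Phi_\tau$ of the first item of Proposition~\ref{prop:hintikka}, and the finite disjunction $\bigvee_{\tau}\Phi_\tau$ over the classes landing in $P$ defines $L(\Aa)$.

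\textbf{The main obstacle.} The crux is the factoring lemma, which carries the genuine content of the theorem (Sch\"utzenberger--McNaughton--Papert). I would prove it by induction on $|\Mm_\Aa|$, with a sub-induction on $|\Sigma|$, establishing that each fibre $\{\,s \mid M_s = M\,\}$ is \FO{}-definable. Aperiodicity is what makes the induction go through: because an aperiodic monoid has no nontrivial subgroup, words can be decomposed according to the first and last occurrences of each letter and analysed by bounded iteration (using the stabilisation $x^{n}=x^{n+1}$), so that membership in a fibre reduces to \FO{}-expressible conditions over strictly smaller divisors of $\Mm_\Aa$, to which the induction hypothesis applies. Collecting the finitely many fibres, each a union of $\equiv_{k_M}$-classes, and setting $k=\max_M k_M$ yields a single uniform $k$ witnessing the factoring, which completes the argument.
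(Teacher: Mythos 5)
The paper does not prove Theorem~\ref{thm:fo-aperiodic}; it imports it wholesale from \cite{Strau94}, so there is no in-paper argument to measure you against. Judged on its own terms, your forward direction is complete and correct, and it is in fact the very construction the paper itself deploys immediately after the theorem statement: the type automaton on $\Theta_k$ with transitions $\tau\mapsto \tau\cdot\ktype{a}{k}$ is exactly the $A_\tau$ of Section~2, and your aperiodicity argument via Proposition~\ref{prop:ktype}.3 is the one the authors give there. Your reduction of the converse to the factoring lemma (each fibre $\eta^{-1}(M)$ is a union of $\equiv_k$-classes for a uniform $k$, hence \ensuremath{\FO}-definable via $\Phi_\tau$ from Proposition~\ref{prop:hintikka}) is also sound, and you correctly identify that this is where all the content lives.

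The weakness is that the crux is named rather than proved. The paragraph ``The main obstacle'' describes the shape of Sch\"utzenberger's theorem (induction on $|\Mm_\Aa|$ and $|\Sigma|$, splitting at first/last occurrences of a letter, exploiting $x^n=x^{n+1}$) but does not supply the mechanism that makes the induction well-founded: when a word uses every letter of $\Sigma$, splitting at occurrences of a chosen letter $c$ does not by itself hand you a strictly smaller divisor of $\Mm_\Aa$ --- one needs either the ideal-theoretic analysis of Sch\"utzenberger's original argument or the local-divisor construction (as in \cite{dg08SIWT}) to see that the middle factors are governed by a monoid that is genuinely smaller, and one must check that the resulting decomposition is itself expressible by an \ensuremath{\FO}-formula of bounded rank so that the $k_M$ can be collected into a single $k$. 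As written, ``membership in a fibre reduces to \ensuremath{\FO}-expressible conditions over strictly smaller divisors'' restates the induction hypothesis rather than discharging the inductive step. Since the paper treats the theorem as a black box, this is not a divergence from the paper --- but if the goal is a self-contained proof, the converse direction still needs its engine.
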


Combining Proposition~\ref{prop:hintikka} and Theorem~\ref{thm:fo-aperiodic} it
follows that for every $k$-type $\tau\in \Theta_k$ there is an aperiodic finite
automaton $A_\tau$ that accepts all strings $s$ with $\ktype{s}{k} = \tau$. 
Such automaton $A_\tau$ is defined as the tuple 
$
A_\tau = (Q_\tau = \Theta_k, \ktype{\epsilon}{k}, \Sigma, \delta_\tau, F_\tau =
\{\tau\}),
$
where $\delta_\tau(\tau',a) = \tau'.\ktype{a}{k}$ for all $\tau'\in
\Theta_k$ and $a\in\Sigma$. By definition of $A_\tau$, for all
$k$-types $\tau_1,\tau_2\in\Theta_k$ and all strings $s\in\Sigma^*$, 
$\tau_1\rightsquigarrow^{s}\tau_2$ iff $\tau_1.\ktype{s}{k} =
\tau_2$. Therefore as direct consequence of
Proposition~\ref{prop:ktype}.\ref{prop:ape}, there exists $m\geq 2^k$ such
that $\tau_1\rightsquigarrow^{s^m}\tau_2$ iff
$\tau_1\rightsquigarrow^{s^{m+1}}\tau_2$. In other words, the
transition monoid of $A_\tau$ is aperiodic, and so is $A_\tau$.

\section{Aperiodic String Transducers}
\label{sec:definitions}
For sets $A$ and $B$, we write $[A \to B]$ for the set of functions 
$F: A \to B$, and $[A \rightharpoonup B]$ for the set of partial functions $F: 
A \rightharpoonup B$. 
A string-to-string transformation from an input alphabet $\Sigma$ to an output
alphabet $\Gamma$ is a partial function in ${[\Sigma^* \rightharpoonup
  \Gamma^*]}$.   
We have seen some examples of string-to-string transformations in the
introduction.
For the examples of first-order definable transformations we use the following
representative example. 
\begin{example}
  \label{example0}
  Let $\Sigma {=} \set{a,b}$. 
  For all strings $s\in\Sigma^*$, we denote by $\overline{s}$ its
  mirror image, and for all $\sigma\in\Sigma$, by $s\backslash
  \sigma$ the string obtained by removing all symbols $\sigma$ from
  $s$. 
  The transformation $f_1 : \Sigma^* \rightharpoonup \Sigma^*$
  maps any string $s\in\Sigma^*$ to the output string
  $(s\backslash b)\overline{s} (s\backslash a)$. 
  For example, $f_1(abaa) = aaa.aaba.b$.
\end{example}
\subsection{First-order logic definable Transformations} 
\label{subsec:fotrans}
Courcelle~\cite{Cour94} initiated the study of 
structure transformations using monadic second-order logic. 
In this paper, we restrict this logic-based transformation model to
FO-definable string transformations. 
The main idea of Courcelle's transformations is to define a transformation  
$(w, w') \in R$  by defining the string model of $w'$ using a finite number of
copies of positions of the string model of $w$. 
The existence of positions, various edges, and position labels are then given as
$\FO(\Sigma)$ formulas. 
\begin{definition}[First-order Transducers]
 An \emph{FO string transducer}  is a tuple 
$T {=} (\Sigma, \Gamma, \phi_{\dom}, C, \phi_{\nodes}, \phi_{\preceq})$ where:
\begin{itemize}
\item
  $\Sigma$ and $\Gamma$ are finite sets of input and output alphabets;
\item 
  $\phi_\dom$ is a closed $\FO(\Sigma)$ formula characterizing the domain of
  the transformation;   
\item 
  $C {=} \set{1, 2, \ldots, n}$ is a finite index set;  
\item 
  $\phi_{\nodes} {=} \set{ \phi^c_\gamma(x) : c \in C \text{ and } \gamma \in
    \Gamma}$ is a finite set of $\FO(\Sigma)$ formulas with a free position
  variable $x$; 
\item 
  $\phi_{\preceq} {=} \set{\phi^{c, d}_\preceq(x, y) : c, d \in C}$ is a finite
  set of
  $\FO(\Sigma)$ formulas with two free position variables $x$ and $y$.   
\end{itemize}
\end{definition}

The transformation $\inter{T}$ defined by $T$ is as follows.  
A string $s$ with $\struc{s} = (\dom(s), \preceq, (L_a)_{a \in \Sigma})$ is in
the domain of $\inter{T}$ if $s \models \phi_{\dom}$ and the output is the
relational structure $M = (D, \preceq^M, (L^M_\gamma)_{\gamma\in\Gamma})$ such
that 
\begin{itemize}
\item 
  $D = \set{ v^c \::\: c \in \dom(s), c \in C \text{ and } \phi^c(v)}$ is the
  set of positions where $\phi^c(v) \rmdef
  \lor_{\gamma \in \Gamma} \phi^c_\gamma(v)$; 
\item
  $\preceq^M {\subseteq} D {\times} D$ is the ordering relation between
  positions and it is such that for $v, u \in  dom(s)$ and
  $c, d \in C$ we have that $v^c \preceq^M u^d$ 
  if $w \models \phi^{c, d}_\preceq(v, u)$; and
\item 
  for all $v^c \in D$ we have that $L_\gamma^M(v^c)$ iff $\phi^c_\gamma(v)$.    
\end{itemize}

Observe that the output is unique and therefore FO transducers implement
functions.  
However, note that the output structure may not always be a string.
We say that an FO transducer is a \emph{string-to-string} transducer if its
domain is restricted to string graphs and the output is also a string graph. 
We say that a string-to-string transformation is FO-definable if there exists an FO
string-to-string transducer implementing the transformation. 
We write $\fot{}$ for the set of FO-definable string-to-string transformations. 
  \begin{figure}
\begin{center}
\begin{tikzpicture}[->,>=stealth',shorten >=1pt,auto]

\tikzstyle{alivenode}=[circle,fill=black!80,thick,inner sep=1pt,minimum size=4mm]

\tikzstyle{deadnode}=[circle,fill=black!10,thick,inner sep=0pt,minimum size=4mm]

\node at (-1,1.5)  {{\footnotesize pos.}};
\node at (-1,1) {{\footnotesize input}};
\node at (-1,0) {copy $1$} ;
\node at (-1,-1) {copy $2$} ;
\node at (-1,-2) {copy $3$} ;

\node [alivenode] (a1) at (0,1) {{\color{white} $a$}} ;
\node [alivenode](a2) at (2,1) {{\color{white} $b$}} ;
\node [alivenode](a3) at (4,1) {{\color{white} $a$}} ;
\node [alivenode](a4) at (6,1) {{\color{white} $a$}} ;
\node [alivenode](a5) at (8,1) {{\color{white} $b$}} ;
\node [alivenode](a6) at (10,1) {{\color{white} $b$}} ;
\node [alivenode](a7) at (12,1) {{\color{white} $b$}} ;
\node [alivenode](a8) at (14,1) {{\color{white} $b$}} ;
\node [alivenode](a9) at (16,1) {{\color{white} $a$}} ;

\node (a1) at (0,1.5) {\begin{scriptsize}$1$\end{scriptsize}} ;
\node (a2) at (2,1.5) {\begin{scriptsize}$2$\end{scriptsize}} ;
\node (a3) at (4,1.5) {\begin{scriptsize}$3$\end{scriptsize}} ;
\node (a4) at (6,1.5) {\begin{scriptsize}$4$\end{scriptsize}} ;
\node (a5) at (8,1.5) {\begin{scriptsize}$5$\end{scriptsize}} ;
\node (a6) at (10,1.5) {\begin{scriptsize}$6$\end{scriptsize}} ;
\node (a7) at (12,1.5 ) {\begin{scriptsize}$7$\end{scriptsize}} ;
\node (a8) at (14,1.5 ) {\begin{scriptsize}$8$\end{scriptsize}} ;
\node (a9) at (16,1.5 ) {\begin{scriptsize}$9$\end{scriptsize}} ;

\node [alivenode] (x1) at (0,0) {{\color{white} $a$}} ;
\node [deadnode] (x2) at (2,0) {} ;
\node [alivenode] (x3) at (4,0) {{\color{white} $a$}} ;
\node [alivenode] (x4) at (6,0) {{\color{white} $a$}} ;
\node [deadnode] (x5) at (8,0) {} ;
\node [deadnode] (x6) at (10,0) {} ;
\node [deadnode] (x7) at (12,0) {} ;
\node [deadnode] (x8) at (14,0) {} ;
\node [alivenode] (x9) at (16,0) {{\color{white} $a$}} ;

\node [alivenode] (y1) at (0,-1) {{\color{white} $a$}} ;
\node [alivenode] (y2) at (2,-1) {{\color{white} $b$}} ;
\node [alivenode] (y3) at (4,-1) {{\color{white} $a$}} ;
\node [alivenode] (y4) at (6,-1) {{\color{white} $a$}} ;
\node [alivenode] (y5) at (8,-1) {{\color{white} $b$}} ;
\node [alivenode] (y6) at (10,-1) {{\color{white} $b$}} ;
\node [alivenode] (y7) at (12,-1) {{\color{white} $b$}} ;
\node [alivenode] (y8) at (14,-1) {{\color{white} $b$}} ;
\node [alivenode] (y9) at (16,-1) {{\color{white} $a$}} ;

\node [deadnode] (z1) at (0,-2) {} ;
\node [alivenode] (z2) at (2,-2) {{\color{white} $b$}} ;
\node [deadnode] (z3) at (4,-2) {} ;
\node [deadnode] (z4) at (6,-2) {} ;
\node [alivenode] (z5) at (8,-2) {{\color{white} $b$}} ;
\node [alivenode] (z6) at (10,-2) {{\color{white} $b$}} ;
\node [alivenode] (z7) at (12,-2) {{\color{white} $b$}} ;
\node [alivenode] (z8) at (14,-2) {{\color{white} $b$}} ;
\node [deadnode] (z9) at (16,-2) {} ;

  \draw[->]   (x1) -- node[below] {$\phi_\preceq^{1,1}$} (x3) ;
  \draw[->]   (x3) -- (x4) ;
  \draw[->]   (x4) -- (x9) ;

  \draw[->]   (x9) -- node[left] {$\phi_\preceq^{1,2}$} (y9) ;
  \draw[->]   (y9) -- (y8) ;
  \draw[->]   (y8) -- (y7) ;
  \draw[->]   (y7) -- node[below] {$\phi_\preceq^{2,2}$} (y6) ;
  \draw[->]   (y6) -- (y5) ;
  \draw[->]   (y5) -- (y4) ;
  \draw[->]   (y4) -- (y3) ;
  \draw[->]   (y3) -- (y2) ;
  \draw[->]   (y2) -- (y1) ;

  \draw[->]   (y1) -- node[left] {$\phi_\preceq^{2,3}\ \ $} (z2) ;
  \draw[->]   (z2) -- node[above] {$\phi_\preceq^{3,3}$} (z5) ;
  \draw[->]   (z5) -- (z6) ;
  \draw[->]   (z6) -- (z7) ;
  \draw[->]   (z7) -- (z8) ;

% blue rectangle

\fill[blue!20,rounded corners, fill opacity=0.2] (-0.3,0.4) rectangle (16.3,-2.3);

\end{tikzpicture}
\end{center}
\caption{First-Order Transduction $w\mapsto
  (w\backslash b)\overline{w}(w\backslash a)$}
\label{fig:foexample} 
\end{figure}

\begin{example}
    The best way, perhaps,  to explain an FO transducers is via an example. 
  Consider the transformation $f_1$ of Example~\ref{example0}.
  It can be defined using an FO transducer that uses three copies of the 
  input domain, as illustrated on Fig.~\ref{fig:foexample}. 
  The domain formula is $\phi_{\dom} = \istring$. 
  Intuitively, the first copy corresponds to $(w\backslash b)$, therefore the
  label formula $\phi_\gamma^1(x)$ is defined by $\text{false}$ if $\gamma = b$
  in order to filter out the input positions labelled $b$,
  and by $\text{true}$ otherwise. 
  For second copy corresponds to $\overline{w}$, hence all positions of the
  input are kept and their labels preserved (however the edge direction will be
  complemented) therefore the label formula is $\phi_\gamma^2(x)= L_\gamma(x)$. 
  Finally, the third copy corresponds to $(w \backslash a)$ and hence
  $\phi_\gamma^3(x)$ is  $\text{true}$ if $\gamma=b$ and $\text{false}$
  otherwise. 
  The transitive closure of the output successor relation is defined by: \\
  %\[
  %\begin{array}{lllllll}
    $\phi_\preceq^{1,1}(x,y)  =  x\preceq y,~
    \phi_\preceq^{2,2}(x,y)  =  y\preceq x,
    \phi_\preceq^{3,3}(x,y)  =  x\preceq y,  \\
    \phi_\preceq^{c,c'}(x,y)  =  \text{true} \text{ if } c<c', 
    \phi_\preceq^{c,c'}(x,y)  =  \text{false} \text{ if } c'<c.$\\ 
 % \end{array}
  %\]
  Note that the transitive closure is not depicted on the figure,
  but only the successor relation. 
  Using first-order logic we define the position successor relation the
  following way: for all copies $c,d$, the existence of a direct edge from a
  position $x^c$ to a position $y^d$ of the output, also called the 
  successor relation $S(x^c, y^d)$, is defined by the formula 
  $  \phi_\suc^{c,d}(x,y) \rmdef \phi_\prec^{c,d}(x,y)\wedge 
  \neg\exists z. \bigvee_{e\in C}
  \phi_\prec^{c,e}(x,z)\wedge \phi_\prec^{e,d}(z,y)
  $
  where $\phi_\prec^{c_1,c_2}(x_1,x_2) \rmdef
  \phi_\preceq^{c_1,c_2}(x_1,x_2)\wedge x_1\neq x_2$ for all 
  $c_1,c_2\in C$. 
\end{example}

We define the \emph{quantifier rank} $qr(T)$ of an FOT $T$ as the maximal
quantifier rank of any formula in $T$, plus $1$. 
We add $1$ for technical reasons, mainly because defining the successor relation
requires one quantifier.  

\subsection{Streaming String Transducers}\label{subsec:sst}
Streaming string transducers~\cite{AC10,AC11} (SSTs) are one-way finite-state
transducers that manipulates a finite set of string variables to compute its
output.
Instead of appending symbols to the output tape, SSTs concurrently update all
string variables using a concatenation of string variables and output symbols.  
The transformation of a string is then defined using an output (partial)
function $F$ that associates states with a concatenation of string
variables, s.t.\ if the state $q$ is reached after reading the string and
$F(q) {=} xy$, then the output string is the final valuation of $x$ concatenated 
with that of $y$. 
In this section we formally introduce SSTs and introduce restrictions on SSTs
that capture FO-definable transformations. 

Let $\varsst$ be a finite set of variables and $\Gamma$ be a finite alphabet. 
A substitution $\sigma$ is defined as a mapping ${\sigma : \varsst \to (\Gamma \cup
  \varsst)^*}$. 
A valuation is defined as a substitution $\sigma: \varsst \to \Gamma^*$.
Let $\Ss_{\varsst, \Gamma}$ be the set of all substitutions $[\varsst \to (\Gamma \cup
\varsst)^*]$.
Any substitution $\sigma$ can be extended to $\hat{\sigma}: (\Gamma \cup \varsst)^*
\to (\Gamma \cup \varsst)^*$ in a straightforward manner.
The composition $\sigma_1 \sigma_2$  of two substitutions $\sigma_1$ and $\sigma_2$
is defined as the standard function composition  $\hat{\sigma_1} \sigma_2$,
i.e. $\hat{\sigma_1}\sigma_2(X) = \hat{\sigma_1}(\sigma_2(X))$ for all $X \in \varsst$. 
We are now in a position to introduce streaming string transducers. 
\begin{definition}
  \label{def:sst}
  A deterministic streaming string transducer (\sst) is a tuple 
  $T = (\Sigma, \Gamma, Q, q_0, Q_f, \delta, \varsst, \rho, F)$ where:
  \begin{itemize}
  \item
    $\Sigma$ and $\Gamma$ are finite sets of input and output alphabets;
  \item 
    $Q$ is a finite set of states with initial state $q_0$;
  \item 
    $\delta : Q \times \Sigma \to Q$ is a transition function;
  \item 
    $\varsst$ is a finite set of variables;
  \item 
    $\rho : \delta \to \Ss_{\varsst, \Gamma}$ is a variable update
    function;
    % such that $\rho(q, a)$ is a restricted copy substitution for $(q, a) \in
    % Q \times \Sigma$;  
  \item 
    $Q_f$ is a subset of final states;
  \item 
    $F: Q_f \rightharpoonup \varsst^*$ is an output  function.
    %such that for all $q
    %\in Q_f$, $F(q)$ does not contain twice the same variable.
  \end{itemize}
\end{definition}
The concept of a run of an \sst{} is defined in an analogous manner to that of
a finite state automaton.
The sequence $\seq{\sigma_{r, i}}_{0 \leq i \leq |r|}$ of substitutions induced
by a run $r = q_0 \xrightarrow{a_1} q_1 \xrightarrow{a_2} q_2 \ldots q_{n-1}\xrightarrow{a_n} q_n$ is defined
inductively as the following: $\sigma_{r, i} {=} \sigma_{r, i{-}1} \rho(q_{i-1}, a_{i})$
for $1 < i \leq |r|$ and $\sigma_{r,1} = \rho(q_0,a_1)$. We denote $\sigma_{r,|r|}$ by $\sigma_r$.

If the run $r$ is final, i.e. $q_n\in Q_f$, we can extend the output function $F$ to the run $r$ by 
$F(r) = \sigma_\epsilon\sigma_{r}F(q_n)$, where $\sigma_\epsilon$ substitute all variables by their initial value $\epsilon$.
For all strings $s \in\Sigma^*$, the output of $s$ by $T$ is defined only if
there exists an accepting run $r$ of $T$ on $s$, and in that case the output
is denoted by $T(s) = F(r)$. 
The transformation $\inter{T}$ defined by an SST $T$ is the function
$\set{(s, T(s)) \::\: T(s) \text{ is defined}}$.
%The class of streaming string transducers is denoted by $\sst$.

\begin{example}
Let us consider the streaming string transducer $T_2$ shown in
Figure~\ref{fig:sst-ex2} implementing the transformation $f_1$ introduced in
Example~\ref{example0}. 
The SST $T_2$ has only one state $q_0$, and three variables $X, Y$, and $Z$. 
The variable update is shown in the figure and the output function is
s.t. $F(q_0) = XYZ$.  
\begin{figure}[t]
  \begin{center}
    \begin{tikzpicture}[->,>=stealth',shorten >=1pt,auto,node distance=1cm,
      semithick,scale=0.9,every node/.style={scale=0.9}]
      \tikzstyle{every state}=[fill=blue!20!white,minimum size=3em]
      \node[initial, accepting, initial text={},state,fill=blue!20!white] at (5,0) (A) {$q_0$} ;
      \path(A) edge [loop right] node {$a \mid (X,Y,Z) := (Xa, aY, Z)$} (A); 
      \path(A) edge [loop below] node {$b \mid (X,Y,Z) := (X, bY, Zb)$} (A);
    \end{tikzpicture}
  \end{center}
  \caption{SST implementing the transformation $s \mapsto (s \backslash
    b)\overline{s}(s \backslash a)$.  Here the output function is $F(1)=XYZ.$}
  \label{fig:sst-ex2}
\end{figure}

The following table shows a run of $T_2$ on the string $s=abaa$. 
\[
\begin{array}{r|cccccccccccccccccccccccccccccccccccc}
  & & \!\!\!\!a\!\!\!\! & & \!\!\!\!b\!\!\!\! & & \!\!\!\!\!a\!\!\!\! & & \!\!\!\!a\!\!\!\! \\
  \hline 
  X & \varepsilon &  & a & & a & & aa & & aaa \\
  Y & \varepsilon & & a & & ba & & aba & & aaba \\
  Z & \varepsilon & & \varepsilon & & b & & b & & b \\
\end{array}
\]
Let $r$ be the run of $T_2$ on $s = abaa$. We have 
$\sigma_{r,1} : (X,Y,Z)\mapsto (Xa,aY,Z)$, $\sigma_{r,2} :
(X,Y,Z)\mapsto \sigma_{r,1}(X,bY,Zb) = (Xa,baY,Zb)$, 
$\sigma_{r,3} : (X,Y,Z)\mapsto \sigma_{r,2}(Xa,aY,Z) = (Xaa,abaY,Zb)$
and $\sigma_{r,4} : (X,Y,Z)\mapsto \sigma_{r,3}(Xa,aY,Z) =
(Xaaa,aabaY,Zb)$. Therefore $T(s) = F(r) = \sigma_\epsilon\sigma_{r,4}F(q_0) = \sigma_\epsilon\sigma_{r,4}(XYZ) =
\sigma_{\epsilon}(XaaaaabaYZb) = aaaaabab$.
\end{example}

\subsection{Transition Monoid of Streaming String Transducers and Aperiodicity}
\label{sec:ap}
We define the notion of aperiodic \sst{}s by introducing an appropriate notion
of transition monoid for transducers. 
The transition monoid of an \sst{} $T$ is based on the effect of a string $s$ on
the states and variables. 
The effect on variables is characterized by, what we call, flow information that
is given as  a relation that describes the number of copies of the
content of a given variable that contribute to another variable after reading a
string $s$.

\vspace{3mm}
\textbf{State and Variable Flow} Let $T = (\Sigma, \Gamma, Q, q_0, Q_f, \delta, \varsst, \rho, F)$ be an \sst{}.
Let $s$ be a string in $\Sigma^*$ and suppose that there exists a run $r$ of $T$
on $s$.
Recall that this run induces a substitution $\sigma_r$ that maps each variable
$X \in \varsst$ to a string $u \in (\Gamma \cup \varsst)^*$.
For string variables $X,Y\in \varsst$, states $p,q\in Q$, and $n \in
\Nat$ we say that  $n$ copies of $Y$ flow to $X$ from $p$ to $q$
if there exists a run $r$ on $s$ from $p$ to $q$, and  $Y$ occurs $n$ times in $\sigma_r(X)$. 
We  denote the flow with respect to a string $s$ as
$(p,Y)\rightsquigarrow^s_n (q,X)$.

\begin{example}
  Consider the run $r$ from $q_0$ to $q_0 $ over the string $aaaa$ in the
  following SST.
  While drawing an SST we often omit the update corresponding to the variables
  that retain their previous value. 
\begin{center}
\begin{tikzpicture}[->,>=stealth',shorten >=1pt,auto,node distance=1cm,
 semithick]
  \tikzstyle{every state}=[fill=blue!20!white,minimum size=3em]
  \node[initial, accepting, initial text={},state,fill=blue!20!white] at (0,0) (A) {$q_0$} ;
  \node[state, fill=blue!20!white] at (5,0) (B) {$q_1$} ;
  \node[state, fill=blue!20!white] at (5,-2) (C) {$q_2$} ;
  \node[state, fill=blue!20!white] at (0,-2) (D) {$q_3$} ;
  \path(A) edge  node {$a \mid X := aX$} (B);
  \path(B) edge node  {$a \mid Y := bX$} (C);
  \path(C) edge node  {$a \mid Y := bY,Z:=aX$} (D);
  \path(D) edge node  {$a \mid W :=YZ$} (A);
\end{tikzpicture}
\end{center}

On the run $r$ on $aaaa$ can be seen that
$\sigma_{r,4}(W)=\sigma_{r,3}[W:=YZ]=\sigma_{r,3}(Y)\sigma_{r,3}(Z)$. 
However, $\sigma_{r,3}(Y)=b\sigma_{r,2}(Y)=b.b.\sigma_{r,1}(X)$ and
$\sigma_{r,3}(Z)=a.\sigma_{r,2}(X)=a.\sigma_{r,1}(X)$, and
$\sigma_{r,1}(X)=a$. 
Thus, on the run from $q_0$ to $q_0$ we have that 
$(q_0,Y) \rightsquigarrow^{aaaa}_1 (q_0,W)$, 
$(q_0,Z) \rightsquigarrow^{aaaa}_1 (q_0,W)$,
$(q_0,X) \rightsquigarrow^{aaaa}_2 (q_0,W)$.
\end{example}

\textbf{Transition Monoid of an SST} In order to define the transition
monoid of an SST $T$, we first extend $\mathbb{N}$ with an extra element
$\perp$, and let $\mathbb{N}_\perp = \mathbb{N}\cup \{\perp\}$. This
new element behaves as $0$: for all $i\in\mathbb{N}_\perp$, $i{.}\bot =
\bot.i = \bot$, $i+\bot = \bot+i=i$. Moreover, we assume that
$\bot < n$ for all $n\in\mathbb{N}$. 
We assume that pairs $(p,X)\in Q\times \varsst$ are
totally ordered. The \emph{transition monoid} of $T$ is the set of square
matrices over $\mathbb{N}_\perp$ indexed (in order) by elements of $Q\times
\varsst$, defined by $M_T = \{ M_s\ |\ s\in \Sigma^*\}$ where
for all strings $s\in \Sigma^*$, $M_s[p,Y][q,X] = n\in\mathbb{N}$ iff $(p,Y)\rightsquigarrow^s_n (q,X)$, and 
 $M_s[p,Y][q,X] = \bot$ iff there is no run from $p$ to $q$ on $s$. 
Note that, by definition,  there is atmost one run
$r$ from $(p,Y)$ to $(q,X)$ on any string $s$.

It is 
easy to see that $(M_T,\times,\textbf{1})$ is a monoid, where $\times$ is defined as matrix
multiplication and the identity element is the unit matrix
$\textbf{1}$. The mapping $M_{\bullet}$, which maps any string $s$ to its
transition matrix $M_s$,  is a morphism from $(\Sigma^*, ., \epsilon)$
to $(M_T, \times, \textbf{1})$. 
We say that the transition monoid $M_T$ of an \sst{} $T$ is 
$n$-bounded if all the coefficients of the matrices of $M_T$ are bounded by $n$.  
Clearly, any $n$-bounded transition monoid is finite.

In \cite{AC11}, SST are required to have \emph{copyless} updates,
i.e., variable updates are defined by linear substitutions. In other
words, the content of a variable can never flow into two different
variables, and cannot flow more than once into another variable. In
\cite{AD12}, this condition was slightly relaxed to the notion of
\emph{restricted copy}. This requirement imposes that a variable
cannot flow more than once into another variable. This allows for a
limited form of copy: for instance, $X$ can flow to $Y$ and $Z$, but 
$Y$ and $Z$ cannot flow to the same variable. Finally,
\emph{bounded copy} SSTs were introduced in
\cite{FiliotTrivediLics12} as a restriction on the variable dependency
graphs. This restriction requires that there exists a bound $K$ such
that any variable flows at most $K$ times in another variable. These
three restrictions were shown to be equivalent, in the sense that SSTs 
with copyless, restricted copy, and bounded copy updates have the same
expressive power.  Given our definition of transition monoid, and the results of Alur, Filiot, and
Trivedi~\cite{FiliotTrivediLics12}, the following result is immediate by
observing that bounded copy restriction of~\cite{FiliotTrivediLics12} for
\sst{}s corresponds to finiteness of transition monoid. 
Also, notice that since the bounded copy assumption generalizes the
copyless~\cite{AC11} and restricted copy~\cite{AD12} assumptions, previous
definitions in the literature of streaming string transducers also correspond to finite transition
monoids.  

\begin{theorem}[\cite{FiliotTrivediLics12}]
  A string transformation is MSO-definable iff it is definable by an
  \sst{} with finite transition monoid. 
\end{theorem}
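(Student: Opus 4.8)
The plan is to derive this statement by combining the known characterisation of \cite{FiliotTrivediLics12}---that MSO-definable transformations are exactly those realised by \emph{bounded copy} \sst{}s---with the observation that, under our new definition, an \sst{} has finite transition monoid if and only if it is bounded copy. The MSO/bounded-copy equivalence is taken as given, so the real content is this last equivalence, which I would establish in two small steps passing through $n$-boundedness.

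First I would show that the transition monoid $M_T$ is finite if and only if it is $n$-bounded for some $n$. One direction is already noted in the excerpt: an $n$-bounded monoid is finite, since all matrices are square of fixed dimension $|Q|\cdot|\varsst|$ with entries drawn from the finite set $\{\perp,0,1,\dots,n\}$, leaving only finitely many possibilities. For the converse, if $M_T$ is finite then its (finitely many) matrices, each of fixed dimension, collectively use only finitely many distinct entries of $\mathbb{N}_\perp$; taking $n$ to be the largest finite entry occurring anywhere shows that $M_T$ is $n$-bounded.

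Next I would connect $n$-boundedness to bounded copy. By definition, $M_s[p,Y][q,X]=i$ holds exactly when the unique run from $p$ to $q$ on $s$ makes $Y$ flow $i$ times into $X$, i.e. $(p,Y)\rightsquigarrow^s_i(q,X)$. Hence $M_T$ being $n$-bounded is, verbatim, the statement that along any run (between any two states, on any input) no variable flows more than $n$ times into any other variable---which is precisely the bounded copy condition of \cite{FiliotTrivediLics12} with $K=n$. Chaining the two steps gives finite transition monoid $\iff$ bounded copy, and composing with the cited MSO characterisation yields the theorem.

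The one point requiring care---and the main obstacle---is that the bounded copy restriction of \cite{FiliotTrivediLics12} is phrased on the variable dependency graph generated along accepting runs from the initial state, whereas our transition monoid quantifies over all pairs of states. I would reconcile the two by first trimming the \sst{} to its reachable and co-reachable states (which does not change the realised transformation), and then observing that flow counts compose multiplicatively and monotonically under matrix product: any sub-run from $p$ to $q$ can be extended to an accepting run passing through $p$ and $q$, so a uniform bound on flows along accepting runs transfers to all sub-runs, and conversely a bound on all sub-runs trivially bounds the accepting ones. After this reconciliation the two conditions coincide exactly, completing the argument.
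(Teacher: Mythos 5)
Your proposal is correct and follows the same route the paper takes: the paper states this result as an immediate consequence of the MSO/bounded-copy equivalence of \cite{FiliotTrivediLics12} together with the observation that bounded copy corresponds exactly to finiteness of the transition monoid. You simply spell out the details (finite $\iff$ $n$-bounded $\iff$ bounded copy, plus the reconciliation via trimming to useful states) that the paper leaves implicit.
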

The main goal of this paper is to present a similar result for FO-definable
transformations.
\begin{definition}[Aperiodic \sst{}s]
  A streaming string transducer is aperiodic if its transition monoid is
  aperiodic. 
\end{definition}
\begin{definition}[1-bounded \sst{}s]
  A streaming string transducer is 1-bounded  if its transition monoid is
  1-bounded. That is, for all strings $s$, and all pairs $(p,Y)$, $(q,X)$, 
  $M_s[p,Y][q,X] \in \{\bot,0,1\}$. 
\end{definition}

\begin{example}(Aperiodic and non-aperiodic SSTs)
Let us consider the transformation $f_{\halve}$ defined as $a^n \mapsto a^{\lceil \frac{n}{2}\rceil}$. 
Consider the SSTs  $T_1$ with $2$ states and $1$ variable,  and 
$T_0$ (its output is $F(1) = X$) both implementing $f_\halve$. 
 
\begin{center}
\begin{tikzpicture}[->,>=stealth',shorten >=1pt,auto,node distance=1cm,
 semithick]
 \tikzstyle{every state}=[fill=purple!20!white,minimum size=2em]
 \node[initial, accepting, initial text={},state,fill=blue!20!white] at (5,0) (A) {1} ;
 \path(A) edge [loop right] node {$a \mid (X,Y) := (aY,X)$} (A); 
 \node[draw=none] at (3,0) {$T_0:$};
\end{tikzpicture}
\end{center}

\begin{center}
\begin{tikzpicture}[->,>=stealth',shorten >=1pt,auto,node distance=1cm,
 semithick]
  \tikzstyle{every state}=[fill=blue!20!white,minimum size=2em]
 \node[initial, accepting, initial text={},state,fill=blue!20!white] at (0,0) (A) {1} ;
 \node[state,accepting, fill=blue!20!white] at (5,0) (B) {2} ;
 \path(A) edge  [bend right=10] node[anchor=south,above]{$a \mid X := aX$} (B);
  \path(B) edge [bend right=10] node[anchor=north,above]  {$a \mid X := X$} (A);
 \node[draw=none] at (-2,0) {$T_1:$};
   \end{tikzpicture}
\end{center}
   It can be seen that the transition monoids 
of both SSTs are $1$-bounded but non aperiodic. In the first case this is caused by the variable flow, while in the second, this is caused by 
the transitions between states. The transition monoid of $T_0$ is a 2 $\times$ 2 matrix. For $k \geq 0$, 
\[
  M_{a^{2k+1}}=\begin{blockarray}{ccc}
    & \matindex{(1,X)} & \matindex{(1,Y)} & \\
    \begin{block}{c(cc)}
      \matindex{(1,X)} & 0 & 1   \\
      \matindex{(1,Y)} & 1 & 0  \\
          \end{block}
  \end{blockarray},~ 
     M_{a^{2k}}=\begin{blockarray}{ccc}
    & \matindex{(1,X)} & \matindex{(1,Y)} & \\
    \begin{block}{c(cc)}
      \matindex{(1,X)} & 1 & 0   \\
      \matindex{(1,Y)} & 0 & 1  \\
          \end{block}
  \end{blockarray}
\]

The transition monoid of $T_1$ is a $2 \times 2$ matrix. 
For $k \geq 0$,
\[
  M_{a^{2k+1}}=\begin{blockarray}{ccc}
    & \matindex{(1,X)} & \matindex{(2,X)} & \\
    \begin{block}{c(cc)}
      \matindex{(1,X)} & 0 & 1   \\
      \matindex{(2,X)} & 1 & 0  \\
          \end{block}
  \end{blockarray},~ 
     M_{a^{2k}}=\begin{blockarray}{ccc}
    & \matindex{(1,X)} & \matindex{(2,X)} & \\
    \begin{block}{c(cc)}
      \matindex{(1,X)} & 1 & 0   \\
      \matindex{(2,X)} & 0 & 1  \\
          \end{block}
  \end{blockarray}
\]

    For both examples,  we can see that there does not exist any $m \in \N$ such that $M_{a^m}=M_{a^{m+1}}$, thereby making both SSTs non aperiodic.
On the other hand, for any string $s$, the  transition monoid of the $\sst{}$  $T_2$ 
in Figure \ref{fig:sst-ex2} is given by 

\[
  M_{s}=\begin{blockarray}{cccc}
    & \matindex{(1,X)} & \matindex{(1,Y)} & \matindex{(1,Z)}\\
    \begin{block}{c(ccc)}
      \matindex{(1,X)} & 1 & 0 & 0  \\
      \matindex{(1,Y)} & 0 & 1  & 0 \\
      \matindex{(1,Z)} & 0 & 0  & 1\\
                   \end{block}
  \end{blockarray} 
    \]
    Clearly, $M_{T_2}$ is aperiodic and 1-bounded. 
\end{example}

The following result states that the domain of an aperiodic, 1-bounded \sst{} is
FO-definable. 
\begin{proposition}\label{prop:fodomain}
  The domain of an aperiodic \sst{} is FO-definable. 
\end{proposition}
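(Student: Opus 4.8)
The plan is to reduce FO-definability of the domain to FO-definability of a regular language recognized by an aperiodic finite automaton, and then invoke Theorem~\ref{thm:fo-aperiodic}. First I would observe that since the transition function $\delta$ of an $\sst$ $T = (\Sigma, \Gamma, Q, q_0, Q_f, \delta, \varsst, \rho, F)$ is a total deterministic function, every input string $s$ has a unique run, and $T(s)$ is defined precisely when this run ends in a state of $\dom(F) \subseteq Q_f$. Hence $\dom(\inter{T})$ is exactly the language of the finite automaton $\Aa = (Q, q_0, \Sigma, \delta, \dom(F))$ obtained from $T$ by forgetting the variables and declaring $\dom(F)$ to be the set of accepting states. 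It therefore suffices to prove that $\Aa$ is aperiodic (its transition monoid does not depend on the choice of final states) and then apply Theorem~\ref{thm:fo-aperiodic}.

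Second, I would relate the transition monoid $M_T$ of $T$ to the Boolean transition monoid $\Mm_\Aa$ of $\Aa$. The key observation is that the $\bot$-pattern of an SST transition matrix records exactly state reachability: for every string $s$, all states $p,q$, and all variables $X,Y$, we have $M_s[p,Y][q,X] = \bot$ iff there is no run from $p$ to $q$ on $s$, and by determinism this condition is independent of $X$ and $Y$. Consequently $M_s[p,Y][q,X] \neq \bot$ for some (equivalently all) $X,Y$ iff $p \flows^s q$, i.e. iff the automaton matrix satisfies $M^{\Aa}_s[p][q] = 1$. In other words, $M^{\Aa}_s$ is obtained from $M_s$ by the projection sending the block of entries indexed by rows $(p,\cdot)$ and columns $(q,\cdot)$ to $1$ when it contains a non-$\bot$ entry, and to $0$ otherwise.

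Third, I would transfer aperiodicity. Assume $M_T$ is aperiodic, so there is $m \in \Nat$ with $M_{s^m} = M_{s^{m+1}}$ for every $s$ (using that $M_\bullet$ is a morphism, $M_s^m = M_{s^m}$). Since $M^{\Aa}_{s^m}$ is completely determined by the $\bot$-pattern of $M_{s^m}$, the equality $M_{s^m} = M_{s^{m+1}}$ forces $M^{\Aa}_{s^m} = M^{\Aa}_{s^{m+1}}$ for all $s$. Thus $\Mm_\Aa$ is aperiodic with the same bound $m$, so $\Aa$ is an aperiodic finite automaton, and by Theorem~\ref{thm:fo-aperiodic} the language $L(\Aa) = \dom(\inter{T})$ is FO-definable.

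The routine verifications are that the $\bot$-pattern is genuinely insensitive to the variable indices (immediate from determinism, which guarantees at most one run between any two states on a given string) and that the projection is compatible with matrix composition. The only place demanding care is the interplay of the $\mathbb{N}_\bot$ arithmetic with matrix multiplication: one must check that a product entry is non-$\bot$ exactly when a connecting path of states exists, which holds because $\bot$ is absorbing for multiplication and neutral for addition, exactly mirroring Boolean reachability. I do not expect a genuine obstacle here; the whole content of the proposition lies in the clean observation that aperiodicity of the richer SST transition monoid projects down to aperiodicity of the underlying input automaton.
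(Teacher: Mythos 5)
Your proof is correct and follows essentially the same route as the paper: the paper defines a map $\varphi$ sending each SST transition matrix to the Boolean matrix with $\varphi(M)[p][q]=1$ iff some entry $M[p,X][q,Y]$ is non-$\bot$, identifies the image with the transition monoid of the underlying input automaton, and concludes by noting that a homomorphic image of an aperiodic monoid is aperiodic. Your explicit tracking of the $\bot$-pattern and of the aperiodicity bound $m$ is just a more detailed unpacking of that same homomorphism argument.
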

\begin{proof}
    Let $T = (\Sigma, \Gamma, Q, q_0, Q_f, \delta, \varsst, \rho, F)$
    be an aperiodic SST and $M_T$ its (aperiodic) transition monoid. Let
    us define a function $\varphi$ which associates with each matrix $M\in
    M_T$, the $|Q|\times |Q|$ Boolean matrix $\varphi(M)$ defined by 
    $\varphi(M)[p][q] = 1$ iff there exist $X,Y\in\varsst$ such that 
    $M_T[p,X][q,Y]\geq 0$. Clearly, $\varphi(M_T)$ is the transition
    monoid of the underlying input automaton of $T$ (ignoring the variable updates).  The result
    follows, since the homomorphic image of an aperiodic monoid is 
    aperiodic. 
\end{proof}

We show that an \sst{} is non-aperiodic iff its transition monoid
contains a non-trivial cycle. Checking the existence of a non-trivial
cycle has been shown to be in \textsc{PSpace} for deterministic
automata \cite{stern}, but this result can be extended to our setting.

\begin{lemma}\label{lem:complexity}
Given an SST $T$, checking whether it is aperiodic and 1-bounded is
\textsc{PSpace-complete}.
\end{lemma}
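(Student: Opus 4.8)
The plan is to prove both the upper bound and the lower bound separately. For the \textsc{PSpace} upper bound, the key observation is that aperiodicity and $1$-boundedness can both be phrased as the \emph{non-existence} of a certain ``bad'' string witnessing a violation, so that a nondeterministic algorithm can guess such a witness on the fly and we invoke Savitch's theorem together with the closure of \textsc{PSpace} under complement. First I would make precise what a witness is. For $1$-boundedness, a violation is a string $s$ together with pairs $(p,Y),(q,X)$ such that $M_s[p,Y][q,X]\geq 2$, i.e.\ a run from $p$ to $q$ on which $Y$ flows at least twice into $X$. For aperiodicity, recall that $M_T$ is aperiodic iff there is some $m$ with $M_{s^m}=M_{s^{m+1}}$ for all $s$; equivalently (as stated before Lemma~\ref{lem:complexity}) non-aperiodicity amounts to the existence of a ``non-trivial cycle'' in the transition monoid, i.e.\ a string $s$ and a matrix power such that $M_{s^m}\neq M_{s^{m+1}}$ for all $m$.

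The heart of the argument is that one need not materialise the full (possibly infinite) transition monoid, nor iterate to the unknown bound $m$. The crucial step is to observe that to detect a violation of either property it suffices to track, for each relevant string, a \emph{capped} version of the matrix in which every coefficient is truncated at $2$ (for $1$-boundedness this records whether a variable flows $0$, $1$, or $\geq 2$ times; for aperiodicity it is enough to reason with these capped matrices because once a matrix stabilises modulo the cap it continues to, and the relevant cyclic behaviour is already visible at the capped level). Each capped matrix is of size $|Q\times\varsst|^2$ over the finite alphabet $\{\bot,0,1,{\geq}2\}$, hence takes polynomial space to store, and the product of two capped matrices can be computed in polynomial space using the semiring operations of $\mathbb N_\bot$ followed by re-capping. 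A nondeterministic procedure can therefore guess the letters of a candidate witness string one at a time, maintaining the single current capped matrix (and, for the cycle test, comparing the capped matrices of $s^m$ and $s^{m+1}$ as they evolve, detecting a genuine cycle exactly when the capped-matrix sequence enters a nontrivial loop). By the standard argument that a repeated state configuration can be short-circuited, any violation that exists is witnessed by one whose relevant configuration trace is of length bounded by the number of capped matrices, so the search space is finite and explorable in \textsc{NPSpace}${}=$\textsc{PSpace}; since the two good properties are the complements of these searches, \textsc{PSpace} closure under complement yields membership.

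For the \textsc{PSpace} lower bound I would reduce from a known \textsc{PSpace}-hard problem on automata, most naturally the aperiodicity (or equivalently first-order definability / permutation-freeness) problem for deterministic finite automata, which is \textsc{PSpace}-complete by the result of Stern~\cite{stern} cited just above. Given a DFA $\Aa$, I would build an \sst{} $T_\Aa$ whose underlying input automaton is exactly $\Aa$ and whose variable updates are trivial (each transition leaves a single variable unchanged, so that the variable-flow part of every matrix is the identity and is automatically $1$-bounded and aperiodic). Then the transition monoid of $T_\Aa$ is aperiodic and $1$-bounded iff the transition monoid of $\Aa$ is aperiodic, which reduces deciding aperiodicity of $T_\Aa$ to deciding aperiodicity of $\Aa$; this gives \textsc{PSpace}-hardness.

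The step I expect to be the main obstacle is the aperiodicity direction of the upper bound: unlike $1$-boundedness, aperiodicity refers to an \emph{a priori unbounded} exponent $m$, and the transition monoid itself may be infinite, so I must argue carefully that the capped matrices carry enough information to certify (non-)aperiodicity and that a cyclic violation, if one exists, is detectable without unbounded iteration. The clean way to do this is to prove that $M_T$ is aperiodic iff its image under the capping map (a finite monoid) is aperiodic \emph{and} $T$ is $1$-bounded, reducing the cycle search to a reachability-of-a-nontrivial-cycle question in a finite monoid of polynomial-space-describable elements, to which Stern's \textsc{PSpace} technique applies verbatim; making this equivalence precise and verifying that capping is a monoid morphism on the relevant fragment is where the real care is needed.
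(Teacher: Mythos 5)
Your overall strategy is the same as the paper's: decompose the problem into (i) a \textsc{PSpace} check of $1$-boundedness by guessing a witness string on the fly while maintaining a polynomial-size matrix summary of the flow, and (ii) a \textsc{PSpace} check of aperiodicity \emph{of a $1$-bounded} SST via a non-trivial-cycle search in a finite structure, invoking Savitch and closure under complement, with hardness by reduction from DFA aperiodicity (the paper attributes the hardness to Cho--Huynh rather than Stern, but your trivial-update reduction is exactly the intended one and is correct). Your explicit capping of matrix entries at $2$ also repairs a point the paper leaves implicit, since uncapped entries can grow without bound along a guessed witness.

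There is, however, one concrete step that would fail as you state it: the equivalence ``$M_T$ is aperiodic iff its image under the capping map is aperiodic \emph{and} $T$ is $1$-bounded'' is false in the forward direction. Take a single state $q$, variables $X,Y$, and the update $X:=YY$, $Y:=\epsilon$ on the unique letter $a$. Then $M_a[q,Y][q,X]=2$, so $T$ is not $1$-bounded, yet $M_{a^n}$ is the zero matrix for all $n\geq 2$, so $M_T=\{\mathbf{1},M_a,0\}$ is aperiodic. (Capping can also genuinely hide non-aperiodicity in the non-$1$-bounded case: with $X:=XX$ one has $M_{a^n}[q,X][q,X]=2^n$, so $M_T$ is not aperiodic although all capped powers coincide.) Your algorithm for the \emph{conjunction} survives, because a non-$1$-bounded input is rejected regardless, and on $1$-bounded inputs capping is the identity, so the capped monoid \emph{is} $M_T$ and your cycle search is sound. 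But you should not attempt to prove the biconditional; prove only the restriction ``for $1$-bounded $T$, $M_T$ is aperiodic iff the (finite) capped monoid is aperiodic,'' which is immediate, and sequence the two checks as the paper does. Note also that the paper's corresponding step is not entirely free: its characterization of non-aperiodicity via non-trivial cycles in the flow automaton $A_T$ uses $1$-boundedness in an essential way (two copies of a variable returning to the same $(p,X)$ would force a coefficient $2$), whereas your direct test $M_s^N\neq M_s^{N+1}$ on a guessed generator, with $N$ the size of the finite monoid and powers computed by repeated squaring, sidesteps that argument; if you go that route, spell out why the exponent $N$ suffices and how it is handled with a polynomial-size counter.
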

\begin{proof}
We first prove that given an $\sst{}$ $T$, checking whether its transition monoid 
$M_T$ is 1-bounded is in PSPACE. We then show that  checking whether a 1-bounded 
$\sst{}$ $T$ is aperiodic is  \textsc{PSpace-complete}. The full proof can be seen in Appendix \ref{complexity}. 
\end{proof}

The rest of the paper is devoted to the proof of the following key theorem. 
\begin{theorem}
  \label{thm:main}
  A string transformation is FO-definable iff it is definable by an aperiodic, 1-bounded
  \sst{}. 
\end{theorem}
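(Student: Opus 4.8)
The plan is to prove both directions of the equivalence separately, mirroring the classical Büchi-Elgot-Trakhtenbrot proof strategy for MSO/automata but adapting it to the transducer setting with the new transition-monoid machinery. The two directions are: (1) every FO-definable transformation is realizable by an aperiodic, 1-bounded SST, and (2) every aperiodic, 1-bounded SST defines an FO transformation. I expect the second direction (SST $\to$ FOT) to be the more technically involved, since it requires encoding the behavior of the transducer — states, runs, and variable flows — inside FO formulas interpreted over the input string, and the aperiodicity hypothesis is exactly what makes this encoding possible without second-order quantification.

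\textbf{Direction FOT $\to$ aperiodic 1-bounded SST.} Given an FO transducer $T = (\Sigma, \Gamma, \phi_\dom, C, \phi_\nodes, \phi_\preceq)$ of quantifier rank $k$, the plan is to build an SST that simulates $T$ in a single left-to-right pass. Since all output-defining formulas have quantifier rank at most $k$, their truth at each position depends only on the $k$-type of the prefix read so far together with the $k$-type of the remaining suffix; by Proposition~\ref{prop:ktype} and Proposition~\ref{prop:hintikka}, these $k$-types range over the finite set $\Theta_k$ and are computed by the aperiodic automaton $A_\tau$. I would take the state set of the SST to record the current prefix $k$-type, and use FO look-ahead (itself FO-definable, hence expressible with aperiodic state structure) to resolve the suffix types; the output structure, being a string with $|C|$ copies, is assembled by maintaining one string variable per copy and concatenating output letters in copyless, 1-bounded fashion as dictated by the $\phi^c_\gamma$ formulas. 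The crucial point is that since the controlling automaton is the $k$-type automaton $A_\tau$, which the excerpt already shows is aperiodic, and since each input position contributes its output letters to variables without duplication, the resulting transition monoid is aperiodic and the coefficients never exceed $1$.

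\textbf{Direction aperiodic 1-bounded SST $\to$ FOT.} Given an aperiodic, 1-bounded SST $T$, the plan is to define an FO transducer with one copy per variable of $T$. The domain formula is FO-definable by Proposition~\ref{prop:fodomain}. For the output structure, each output position corresponds to a pair (input position, variable) recording which letter that transition wrote into which variable, and the edge relation $\phi^{c,d}_\preceq$ must express the order in which these letters appear in the final output — equivalently, the order in which variable contents get concatenated along the run and through the final output function $F$. The key technical device is that the variable-flow relation $(p,Y) \rightsquigarrow^s_n (q,X)$ must be captured in FO: because the transition monoid is aperiodic, the flow behavior over a substring depends only on its $k$-type for a suitable $k$, so by the Hintikka-style Proposition~\ref{prop:hintikka}.2 one can write, for each relevant flow pattern, an FO formula $\Psi_\tau(x,y)$ testing the $k$-type of the intervening substring. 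The 1-boundedness guarantees that a variable's content flows to another at most once, so these flows impose a well-defined linear order on output letters rather than a multiset, which is precisely what lets the output be an FO-definable \emph{string} graph rather than requiring counting.

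\textbf{The main obstacle} will be the faithful FO-encoding of variable flow and the induced output order in the SST $\to$ FOT direction. Computing where a given input position's contributed letter lands in the final output requires tracing the chain of variable substitutions from that position to the end of the run, then through $F$; expressing "variable $Y$ at position $x$ eventually flows into the output, appearing before the contribution of position $y$" demands that the transition-monoid element of the substring between $x$ and $y$ be FO-definable. The aperiodicity hypothesis is what converts the a priori second-order (run-dependent, monoid-valued) flow information into an FO-checkable condition on $k$-types, via the aperiodic $k$-type automaton construction given at the end of Section on aperiodic automata. I expect the bulk of the argument to consist of carefully defining, for each pair of copies $(c,d)$ corresponding to variables $X_c, X_d$, the formula $\phi^{c,d}_\preceq(x,y)$ as a finite Boolean combination of the $\Psi_\tau$ predicates, and then verifying that the resulting output structure is indeed a string graph isomorphic to $T$'s output.
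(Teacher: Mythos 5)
Your overall skeleton matches the paper's: both directions are proved separately, the SST-to-FOT direction rests on showing that variable flow is FO-definable (the paper does this by building an aperiodic automaton for the language $L_{(p,X)\flows (q,Y)}$ and invoking the aperiodicity of $M_T$), and the FOT-to-SST direction uses prefix $(k+2)$-types as states together with an aperiodic look-ahead resolving suffix types. However, there is a genuine gap in your FOT-to-SST construction: maintaining \emph{one string variable per copy} $c\in C$ cannot work. After reading a prefix of length $i$, the part of the output determined by input positions $\leq i$ is a disjoint union of path segments, and the number of these segments is not bounded by $|C|$: several segments can have their heads in the same copy (at different input positions), since the output order within a single copy need not be monotone in the input position. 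The paper's central device here is to index variables by \emph{addresses} $\alpha\in\Theta_{k+2}^2\times\Sigma\times C$ and to prove (Lemma~\ref{lem:addr}) that an $i$-head is uniquely determined by the $(k+2)$-type of the prefix before it, the $(k+2)$-type of the infix up to $i$, its letter, and its copy; this yields the bound $|\Theta_{k+2}|^2\cdot|\Sigma|\cdot|C|$ on the number of live segments and a canonical naming scheme that makes the variable updates well defined and $1$-bounded. Without this (or equivalent bookkeeping), your construction has no way to store, name, or update the segments, and the simulation fails already for FO transformations that reorder positions within a single copy.

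A secondary issue: in the SST-to-FOT direction, one output copy per variable is also insufficient, because a single transition can write several letters into one variable at both ends (e.g.\ $X:=aXb$). The paper uses two copies $X^{in},X^{out}$ per variable for an intermediate edge-labelled structure (labels ranging over the finite set $O_T$ of constant strings occurring in updates) and then composes with a further FO transformation to expand labels into individual output positions. Your identification of the transitive-closure/output-order problem as the main obstacle is accurate; the paper resolves it via Proposition~\ref{prop:path}, whose third case (two variables flowing into variables that are concatenated at some strictly later position) is the part your sketch does not anticipate.
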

The proof of this theorem follows from Lemma~\ref{lem:fo2sst}
(Section~\ref{sec:fo2sst}) and Lemma~\ref{lem:sst2fo}
(Section~\ref{sec:sst2fo}).

\section{From aperiodic 1-bounded SST to FOT} 
\label{sec:sst2fo}
In this section we show the following lemma by constructing an equivalent
\fot{} $T'$ for a given \sst{} $T$. 
\begin{lemma}
  \label{lem:sst2fo}
  A string transformation is FO-definable {\bf if it is} definable by an aperiodic,1-bounded SST.
\end{lemma}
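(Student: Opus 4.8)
The plan is to take an aperiodic, $1$-bounded \sst{} $T = (\Sigma, \Gamma, Q, q_0, Q_f, \delta, \varsst, \rho, F)$ and construct an equivalent \fot{} $T'$. The starting observation is that, since $T$ is $1$-bounded, its transition monoid $M_T$ is finite, and since $T$ is aperiodic, $M_T$ is a finite \emph{aperiodic} monoid with $\mu : s \mapsto M_s$ a morphism $\Sigma^* \to M_T$. Hence every language $\mu^{-1}(m)$ is accepted by an aperiodic automaton and is therefore FO-definable (Theorem~\ref{thm:fo-aperiodic}); by relativising quantifiers to an interval exactly as in Proposition~\ref{prop:hintikka}.2, for every $m \in M_T$ there is an FO formula $\mathrm{mon}_m(x,y)$ holding iff $M_{s[x:y)} = m$. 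From $M_{s[x:y)}$ together with the boundary states (themselves FO-definable, as the underlying input automaton is aperiodic, cf.\ the argument of Proposition~\ref{prop:fodomain}) one reads off all \emph{flow facts}, so I will freely use FO formulas such as ``the state reached just before position $x$ is $q$'' and ``$(q_1,A) \flows_{1} (q_2,B)$ along the factor $s[x:y)$'' (variable $A$ at the left end flows exactly once to variable $B$ at the right end). This FO-definability of flow is the one place where aperiodicity, rather than mere finiteness, of $M_T$ is used, and it is what replaces the MSO reachability of the MSO-to-\sst{} correspondence.

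The output of $T$ on an accepting run ending in $q_f$ is $\sigma_r(F(q_f)) = \sigma_r(W_1)\cdots\sigma_r(W_k)$ where $F(q_f) = W_1\cdots W_k$, all variables starting empty. Each output letter is a constant symbol produced by some update $\rho(q,a)(V)$, and, because $T$ is $1$-bounded, that single occurrence reaches each final variable at most once; hence it contributes at most $|F(q_f)| \le N := \max_{q} |F(q)|$ times to the output. I therefore take as index set $C$ the pairs $(\pi,j)$, where $\pi$ ranges over the finitely many occurrences of output letters in the right-hand sides of the updates of $T$ and $1 \le j \le N$; the node $x^{(\pi,j)}$ stands for ``the copy of the letter produced by $\pi$ at input position $x$ that ends up in the $j$-th variable of $F(q_f)$''. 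The domain formula is the FO formula for $\dom(T)$ given by Proposition~\ref{prop:fodomain}. For $c=(\pi,j)$, writing $a_\pi,q_\pi,\gamma_\pi,V_\pi$ for the triggering letter, required source state, produced letter and target variable of $\pi$, the label formula $\phi^c_\gamma(x)$ asserts $\gamma=\gamma_\pi$, $L_{a_\pi}(x)$, ``the state before $x$ is $q_\pi$'', and ``$V_\pi$ at $x$ flows (necessarily once) to the $j$-th variable of $F(q_f)$'' --- the last being a finite disjunction, over final states $q_f$ and target variables $W$, of the reach formula at the last position combined with a suffix flow formula, hence FO.

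It remains to define $\phi^{c,d}_\preceq(x,y)$ for $c=(\pi,j)$, $d=(\pi',j')$, i.e.\ to order two live output letters in $T(s)$. If $j\neq j'$ the two letters lie in different final slots and the slot order decides: $x^c \preceq y^d$ iff $j<j'$. The difficult case is $j=j'$, where both letters lie in the same final variable $W$ and must be ordered inside $\sigma_r(W)$. I reduce this to a common time: assuming w.l.o.g.\ $x \le y$, I locate, just after reading $y$, the variable $U$ carrying the relevant copy of $\pi$'s letter (an FO flow fact, since $V_\pi$-at-$x$ flows to $U$-at-$y$ and $U$-at-$y$ flows to $W$), while $\pi'$'s letter sits in $V_{\pi'}$ as written by the update at $y$. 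If $U = V_{\pi'}$ the two letters already share one variable, and their relative order is fixed by the finite right-hand sides of the update(s) at $y$ (comparing the constant occurrence $\pi'$ with the occurrence through which $\pi$'s letter entered) --- a local, quantifier-free case analysis. If $U \neq V_{\pi'}$, their order is the order in which the blocks of $U$ and of $V_{\pi'}$ merge on the suffix $s(y:|s|]$; this is determined by the least position $z > y$ at which the two carrying variables are placed into one variable by a single update (a least-position statement built from flow formulas, hence FO), together with the quantifier-free comparison of the two variable occurrences in that update's right-hand side. Later updates preserve relative order, so the value computed at the merge is the order in $W$.

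The main obstacle is exactly this ordering formula: unlike the Boolean automaton case, ordering output letters forces one to trace how variable contents flow and merge, which is a reachability question in the variable-flow structure and is a priori only MSO-expressible. The crucial point is that $1$-boundedness makes the flow multiplicities finite and aperiodicity makes every ``flows to'' and ``merges at'' predicate FO-definable through $\mu$ and Proposition~\ref{prop:hintikka}; this is what lets me express the carrying-variable and merge-point predicates in FO rather than MSO. Once $\phi_{\dom}$, $\{\phi^c_\gamma\}$ and $\{\phi^{c,d}_\preceq\}$ are in place, a routine induction on the run shows that the live nodes, ordered by $\preceq^M$, spell exactly $T(s)$; in particular $\preceq^M$ is a total order on the live nodes because it mirrors the genuine left-to-right order of $T(s)$, and the output structure is a string graph. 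Hence $\inter{T'} = \inter{T}$ and $T'$ is an \fot{}, which proves the lemma.
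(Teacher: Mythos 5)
Your proof is correct, but it takes a genuinely different route from the paper's. The paper factors the construction through an intermediate \emph{\sst-output structure}: two copies $X^{in},X^{out}$ of the input domain per variable, edges labelled by the constant strings $O_T$ occurring in updates, a characterization of reachability in that graph by three flow conditions (Proposition~\ref{prop:path}), FO-definability of the resulting path relation (Lemma~\ref{lem:fopath}), and finally a second FO transduction expanding edge labels into letters, glued together by closure of FO transductions under composition. You instead allocate one copy per pair (occurrence of an output letter in an update right-hand side, index of the final output slot) --- finitely many by $1$-boundedness, since each letter occurrence flows at most once into each variable of $F(q_f)$ --- and define the output order $\preceq$ directly on letters, via the ``carrier variable at time $z$'' predicate and the first position at which the two carriers are merged by a single update; your merge-point analysis plays exactly the role of condition (3) of Proposition~\ref{prop:path}, and the observation that the merged block reaches the final variable exactly once (hence order is preserved downstream) is again $1$-boundedness. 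Both arguments rest on the same pivot, namely that aperiodicity plus finiteness of the transition monoid makes every flow predicate FO-definable (your derivation via the morphism $\mu$ and inverse images of monoid elements is a clean substitute for the paper's explicit aperiodic flow automaton in Proposition~\ref{prop:foflow}). What the paper's route buys is reuse of the known \sst-to-\msot{} construction, isolating the only new difficulty in the transitive-closure lemma, at the price of invoking composition of FO transductions; your route avoids composition and the intermediate edge-labelled structure entirely, at the price of a more delicate direct case analysis for $\preceq$ (same slot versus different slots, same carrier at time $y$ versus merge later), all of whose ingredients you correctly reduce to FO flow facts.
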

 The idea closely follows the \sst{}-to-\msot{} construction of \cite{AC10,FiliotTrivediLics12}. 
The main challenge here is to show that aperiodicity and 1-boundedness on the \sst{} implies FO-definability of 
the output string structure (in particular the predicate $\preceq$). 

\subsection{FO-definability of variable flow}
\label{fo:varflow}
We first show that the variable flow of any aperiodic,1-bounded \sst{} is
FO-definable. 
This will be crucial to show that the output predicate $\preceq$ is
FO-definable.
\begin{proposition}\label{prop:foflow}
    Let $T$ be an aperiodic,1-bounded \sst{} $T$ with set of variables $\varsst$. 
    For all variables $X,Y\in \varsst$, there exists an FO-formula
    $\phi_{X\flows Y}(x,y)$ with two free variables such that,
    for all strings $s\in dom(T)$ and any two positions $i\leq j\in dom(s)$, 
    $s\models \phi_{X\flows Y}(i,j)$ iff $(q_i,X)\flows^{s[i{+}1{:}j]}_1
    (q_j,Y)$, where $q_0\dots q_n$ is the accepting run of $T$ on $s$.
\end{proposition}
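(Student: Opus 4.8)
The plan is first to rephrase the single-copy flow as a single coordinate of the transition matrix. Writing $q_0\dots q_n$ for the unique accepting run of $T$ on $s\in\dom(T)$, the statement $(q_i,X)\flows^{s[i{+}1{:}j]}_1(q_j,Y)$ is equivalent to $M_{s[i{+}1{:}j]}[q_i,X][q_j,Y]=1$: by $1$-boundedness every non-$\bot$ entry of any $M_s$ lies in $\{0,1\}$, so ``exactly one copy of $X$ flows to $Y$'' coincides with ``the entry equals $1$''. Hence it suffices to express in $\FO$, with free variables $x,y$, (i) the state $q_x$ reached by the prefix $s[1{:}x]$, and (ii) that the infix $s[x{+}1{:}y]$ realises the value $1$ from $(q_x,X)$ to $(q_y,Y)$ in its matrix. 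The empty-infix case $i=j$ is handled separately: then $M_{s[i{+}1{:}j]}=M_\epsilon$ is the identity matrix, so the flow holds iff $X=Y$.

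\textbf{The crux: a factorisation lemma.} The key step, and the only place where aperiodicity (rather than mere finiteness) of $M_T$ is used, is to prove that there exists $k\in\Nat$ such that for all $s,s'\in\Sigma^*$, $s\equiv_k s'$ implies $M_s=M_{s'}$. Since $T$ is $1$-bounded, $M_T$ is a finite monoid, and since $T$ is aperiodic, $(M_T,\times,\mathbf{1})$ is aperiodic; the morphism $M_\bullet\colon(\Sigma^*,\cdot,\epsilon)\to(M_T,\times,\mathbf{1})$ thus recognises, for each matrix $M\in M_T$, the language $M_\bullet^{-1}(M)$ by a finite aperiodic monoid. By the monoid counterpart of Theorem~\ref{thm:fo-aperiodic} (a language recognised by a finite aperiodic monoid is $\FO$-definable; one realises $M_\bullet^{-1}(M)$ by the deterministic automaton with states $M_T$, transitions $N\mapsto N\times M_a$, initial state $\mathbf{1}$ and final set $\{M\}$, whose transition monoid is a copy of $M_T$ and hence aperiodic~\cite{Strau94}), each $M_\bullet^{-1}(M)$ is $\FO$-definable. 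Taking $k$ to be the maximum quantifier rank over these finitely many sentences, each $M_\bullet^{-1}(M)$ is a union of $\equiv_k$-classes, which is exactly the factorisation. Consequently $M_\tau:=M_s$ (for any $s$ with $\ktype{s}{k}=\tau$) is well defined for $\tau\in\Theta_k$, and so is the state $q_x$, since determinism makes $q_x$ the unique state with a non-$\bot$ entry in row $(q_0,\cdot)$ of $M_{s[1{:}x]}$, which is read off $\ktype{s[1{:}x]}{k}$.

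\textbf{Assembling the formula.} With the lemma in hand I would build the two $\FO$ ingredients by relativising the Hintikka sentences of Proposition~\ref{prop:hintikka}. Relativising $\Phi_\tau$ to $\{z:z\preceq x\}$ gives a formula $\widehat\Phi_\tau(x)$ holding iff $\ktype{s[1{:}x]}{k}=\tau$; setting $P_p$ to be the set of $k$-types whose matrix reaches $p$ from $q_0$, the formula $\text{state}_p(x):=\bigvee_{\tau\in P_p}\widehat\Phi_\tau(x)$ captures $q_x=p$. Relativising $\Phi_\tau$ to the half-open guard $\{z:x\prec z\preceq y\}$ gives $\Psi'_\tau(x,y)$ holding iff $\ktype{s[x{+}1{:}y]}{k}=\tau$ (this is Proposition~\ref{prop:hintikka}.2 with the guard shifted to match the infix $s[x{+}1{:}y]$); relativisation leaves the quantifier rank $\le k$, so all of these are genuine $\FO$-formulas. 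Writing $S_{p,X,q,Y}=\{\tau\in\Theta_k : M_\tau[p,X][q,Y]=1\}$, I would take
\[
\phi_{X\flows Y}(x,y)\;:=\;\eta_{X,Y}(x,y)\;\vee\;\Big(x\prec y\wedge\bigvee_{p,q\in Q}\big(\text{state}_p(x)\wedge\text{state}_q(y)\wedge\bigvee_{\tau\in S_{p,X,q,Y}}\Psi'_\tau(x,y)\big)\Big),
\]
where $\eta_{X,Y}(x,y)$ is $x=y$ when $X=Y$ and $\false$ otherwise. On the accepting run the only matching disjunct is $(p,q)=(q_x,q_y)$ (with $\text{state}_{q_y}(y)$ forced by determinism), and $\bigvee_{\tau\in S_{p,X,q,Y}}\Psi'_\tau(x,y)$ asserts precisely $M_{s[x{+}1{:}y]}[q_x,X][q_y,Y]=1$, as required.

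\textbf{Expected obstacle.} I expect the main difficulty to be the factorisation lemma, i.e.\ fixing a single quantifier rank $k$ that simultaneously computes every coordinate of $M_s$ from $\ktype{s}{k}$; this rests on converting aperiodicity of $M_T$ into $\FO$-definability of the preimage languages, for which $1$-boundedness (hence finiteness of $M_T$) is indispensable. The remaining work—relativisation of Hintikka formulas, reading states off the monoid, and the empty-infix case—is routine bookkeeping, the only mild care being to align the half-open substring $s[i{+}1{:}j]$ with the guard $x\prec z\preceq y$ and to verify that relativisation does not raise the quantifier rank.
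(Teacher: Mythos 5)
Your proposal is correct, but it reaches FO-definability of the flow relation by a genuinely different route than the paper. The paper treats each pair $(p,X),(q,Y)$ separately: it builds an explicit nondeterministic automaton over the state set $2^{Q\times\varsst}$ that tracks which pairs $(r,Z)$ the source $(p,X)$ has flowed into, proves directly that this automaton's transition monoid inherits aperiodicity from $M_T$ (a step requiring care, since an arbitrary automaton recognizing an aperiodic language need not itself be aperiodic), invokes Theorem~\ref{thm:fo-aperiodic} to obtain a sentence for the language $L_{(p,X)\flows (q,Y)}$, and then relativizes that sentence to the interval $[x,y]$ and conjoins it with state formulas $\phi_p(x)$ obtained by a separate argument on the underlying input automaton (Proposition~\ref{prop:fostates}). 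You instead exploit the morphism $M_{\bullet}$ into the finite aperiodic monoid $M_T$ itself --- finiteness being precisely where $1$-boundedness enters --- to extract a single quantifier rank $k$ such that $\ktype{s}{k}$ determines $M_s$; both the run states and the flow entries are then read off relativized Hintikka formulas (Proposition~\ref{prop:hintikka}) for the prefix and the infix. Your version is more uniform, deriving the state and flow formulas from one factorisation lemma and making the role of $1$-boundedness explicit, whereas the paper's powerset construction is finite regardless of whether $M_T$ is, so it does not lean on $1$-boundedness at this point. Your handling of the empty-infix case via the identity matrix and of the half-open substring $s[i{+}1{:}j]$ via the shifted guard $x\prec z\preceq y$ is consistent with the paper's conventions, so I see no gap.
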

Let $X\in\varsst$, $s\in\dom(T)$, $i\in \dom(s)$, and let $n = |s|$. 
We say that the pair $(X,i)$ is \emph{useful} if the content of
variable $X$ before reading $s[i]$ will be part of the output after
reading the whole string $s$. 
Formally, if $r = q_0\dots q_{n}$ is the accepting run of $T$ on $s$, then $(X,i)$ is useful 
for $s$ if $(q_{i-1}, X)\flows^{s[i{:}n] }_1(q_n,Y)$ for some variable $Y\in F(q_n)$.
Thanks to Proposition \ref{prop:foflow}, this property is FO-definable.
\begin{proposition}\label{prop:contribution}
    For all $X\in\varsst$, there exists an FO-formula
    $\contribute_X(i)$ s.t. for all strings $s\in dom(T)$ and all
    positions $i\in \dom(s)$, 
    $s\models \contribute_X(i)$ iff $(X,i)$ is useful for string $s$.
\end{proposition}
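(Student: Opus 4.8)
The plan is to define the useful predicate directly in terms of the flow formula from Proposition~\ref{prop:foflow}, so that Proposition~\ref{prop:contribution} becomes essentially a corollary. Recall that $(X,i)$ is useful for $s$ precisely when the content of $X$ just before reading $s[i]$ survives into the final output. By definition this means there is a variable $Y$ occurring in the output word $F(q_n)$ such that $X$ flows once to $Y$ along the run from position $i-1$ to the last position $n$, i.e. $(q_{i-1},X)\flows^{s[i:n]}_1 (q_n,Y)$. The key observation is that since $T$ is aperiodic and $1$-bounded, the flow relation is already captured by $\phi_{X\flows Y}$, and we only need to instantiate the second free variable at the \emph{last} position of $s$ and the first free variable at position $i$ read against the correct index shift.

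First I would fix the indexing. Proposition~\ref{prop:foflow} gives $s\models \phi_{X\flows Y}(i,j)$ iff $(q_i,X)\flows^{s[i+1:j]}_1(q_j,Y)$, where the run indices are aligned so that $q_i$ is the state \emph{after} reading $s[1:i]$. To express usefulness of $(X,i)$ we need the flow starting in state $q_{i-1}$, so I would feed position $i-1$ into the first argument; formally one writes a guarded shorthand using the successor predicate $S$, defining $\contribute_X(i)$ by a disjunction over the finitely many output variables $Y$ that may appear in some $F(q)$ for $q\in Q_f$. The formula is
\[
\contribute_X(i) \;\rmdef\; \exists z,w.\; \fst(z)\wedge \lst(w)\wedge S(z',i)\wedge \bigvee_{Y\in\varsst}\Big(\phi_{\dom}^{Y}\wedge \phi_{X\flows Y}(i',w)\Big),
\]
where $i'$ denotes the position just before $i$ (obtained via $S(i',i)$, with the degenerate case $\fst(i)$ handled separately so that the flow starts from the initial state $q_0$), and $\phi^{Y}_{\dom}$ is an FO-sentence asserting that the unique accepting run ends in a state $q$ with $Y$ occurring in $F(q)$. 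The latter sentence exists because, by Theorem~\ref{thm:fo-aperiodic} together with Proposition~\ref{prop:fodomain}, the underlying input automaton of an aperiodic SST is aperiodic, hence the regular language of strings whose accepting run ends in a fixed final state $q$ is FO-definable.

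The main obstacle is the careful handling of the index shift and the boundary (first-position) case, since Proposition~\ref{prop:foflow} is stated for $(q_i,X)$ rather than $(q_{i-1},X)$; this is a matter of rewriting the predicate with the successor relation $S$ and splitting on $\fst(i)$, not a conceptual difficulty. A secondary point to verify is that the disjunction over $Y$ together with the sentences identifying the endpoint state is genuinely first-order: this follows because $\varsst$ and $Q_f$ are finite and each constituent is FO-definable. Once these are assembled, correctness is immediate from the definition of usefulness and the correctness of $\phi_{X\flows Y}$, so the proposition follows.
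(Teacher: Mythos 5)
Your proposal is correct and follows essentially the same route as the paper: the paper defines $\contribute_X(x)$ as $\exists y.\,\lst(y)$ conjoined with a case split over the final state $q$ of the accepting run (via the FO-sentence $\Phi_q^{last}$ from Proposition~\ref{prop:fostates}) and a disjunction over $Y\in F(q)$ of $\Phi_p(x)\wedge\Phi_{X\flows Y}(x,y)$, which is exactly your combination of the flow formula at the last position with an FO identification of the endpoint state. The only difference is that the paper needs no predecessor shift or $\fst(i)$ case split, because its formula $\phi_{X\flows Y}(x,y)$ is already anchored (via the conjunct $\phi_p(x)$, ``state \emph{before} reading position $x$'') at $q_{x-1}$, so substituting $x=i$ directly yields the flow over $s[i{:}n]$ required by the definition of usefulness.
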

Proofs of propositions \ref{prop:foflow} and \ref{prop:contribution} can be found in Appendix \ref{app:varflow}.

\subsection{SST-output relational structure}
In this section, we define the \sst-output structure given an input string structure. 
It is an intermediate representation of the output, and the transformation of any input string into its   
\sst-output structure will be shown to be FO-definable.

For any \sst{} $T$ and string $s\in\dom(T)$,  the \sst-output structure of $s$ is a relational structure 
$G_T(s)$ obtained by taking, for each variable $X\in\varsst$, two copies of $\dom(s)$, respectively denoted by $X^{in}$ and $X^{out}$.  
For notational convenience we assume that these structures are labeled on the edges.
This structure satisfies the following invariants: for all $i\in
dom(s)$, $(1)$ the nodes $(X^{in}, i)$ and $(X^{out}, i)$ exist only if
$(X,i)$ is useful, and $(2)$ there is a directed path from $(X^{in}, i)$ to $(X^{out}, i)$ whose
sequence of labels is equal to the value of the variable $X$ computed by $T$ after reading $s[i]$.

\begin{figure*}
\begin{center}
\begin{tikzpicture}[->,>=stealth',shorten >=1pt,auto,node distance=1.5cm,
                    thick,scale=0.9, every node/.style={scale=0.9}]

\tikzstyle{graphnode}=[circle,fill=black!40,thick,inner
sep=0pt,minimum size=2mm]

\tikzstyle{graphnodeblack}=[circle,fill=black,thick,inner
sep=0pt,minimum size=2mm]

\node [graphnode] (x0in) at (0,5) {} ;
\node [graphnode] (x1in) at (3,5) {} ;
\node [graphnodeblack] (x2in) at (6,5) {} ;
\node [graphnodeblack] (x3in) at (9,5) {} ;
\node [graphnodeblack] (x4in) at (12,5) {} ;
\node [graphnodeblack] (x5in) at (15,5) {} ;

\node [graphnode] (x0out) at (0,4) {} ;
\node [graphnode] (x1out) at (3,4) {} ;
\node [graphnodeblack] (x2out) at (6,4) {} ;
\node [graphnodeblack] (x3out) at (9,4) {} ;
\node [graphnodeblack] (x4out) at (12,4) {} ;
\node [graphnodeblack] (x5out) at (15,4) {} ;

\node [graphnode] (y0in) at (0,3) {} ;
\node [graphnodeblack] (y1in) at (3,3) {} ;
\node [graphnodeblack] (y2in) at (6,3) {} ;
\node [graphnodeblack] (y3in) at (9,3) {} ;
\node [graphnodeblack] (y4in) at (12,3) {} ;
\node [graphnode] (y5in) at (15,3) {} ;

\node [graphnode] (y0out) at (0,2) {} ;
\node [graphnodeblack] (y1out) at (3,2) {} ;
\node [graphnodeblack] (y2out) at (6,2) {} ;
\node [graphnodeblack] (y3out) at (9,2) {} ;
\node [graphnodeblack] (y4out) at (12,2) {} ;
\node [graphnode] (y5out) at (15,2) {} ;

\node [graphnodeblack] (z0in) at (0,1) {} ;
\node [graphnodeblack] (z1in) at (3,1) {} ;
\node [graphnodeblack] (z2in) at (6,1) {} ;
\node [graphnodeblack] (z3in) at (9,1) {} ;
\node [graphnode] (z4in) at (12,1) {} ;
\node [graphnode] (z5in) at (15,1) {} ;

\node [graphnodeblack] (z0out) at (0,0) {} ;
\node [graphnodeblack] (z1out) at (3,0) {} ;
\node [graphnodeblack] (z2out) at (6,0) {} ;
\node [graphnodeblack] (z3out) at (9,0) {} ;
\node [graphnode] (z4out) at (12,0) {} ;
\node [graphnode] (z5out) at (15,0) {} ;

\node (xin) at (-1,5) {$X^{in}$};
\node (xout) at (-1,4) {$X^{out}$};
\node (yin) at (-1,3) {$Y^{in}$};
\node (yout) at (-1,2) {$Y^{out}$};
\node (zin) at (-1,1) {$Z^{in}$};
\node (zout) at (-1,0) {$Z^{out}$};

% column 0
  \draw[->,dashed]   (x0in) -- node[left] {\textcolor{gray}{$\epsilon$}} (x0out) ;
  \draw[->,dashed]   (y0in) -- node[left] {\textcolor{gray}{$\epsilon$}} (y0out);
  \draw[->]   (z0in) -- node[left] {$\epsilon$} (z0out);

% 1 to 0

  \draw[->,dashed] (x1in) -- node[above] {\textcolor{gray}{$a$}} (x0in) ;
  \draw[->,dashed] (x0out) -- node[above] {\textcolor{gray}{$b$}} (x1out) ;

  \draw[->] (z1in) -- node[above] {$\epsilon$} (z0in) ;
  \draw[->] (z0out) -- node[above] {$c$} (z1out) ;

% column 1

  \draw[->] (y1in) -- node[left] {$aaa$} (y1out) ;

% 2 to 1

  \draw[->] (y2in) -- node[above] {$\epsilon$} (y1in) ;
  \draw[->] (y1out) -- node[above] {$\epsilon$} (y2out) ;

  \draw[->] (z2in) -- node[above] {$d$} (z1in) ;
  \draw[->] (z1out) -- node[above] {$d$} (z2out) ;

% column 2

  \draw[->] (x2in) -- node[left] {$c$} (x2out) ;
           
% 3 to 2

  \draw[->] (y3in) -- node[above] {$e$} (y2in) ;
  \draw[->] (y2out) -- node[above] {$f$} (y3out) ;

  \draw[->] (x3in) -- node[above] {$\epsilon$} (x2in) ;
  \draw[->] (x2out) -- node[above] {$\epsilon$} (x3out) ;

  \draw[->] (z3in) -- node[above] {$\epsilon$} (z2in) ;
  \draw[->] (z2out) -- node[above] {$\epsilon$} (z3out) ;

% column 3

  \draw[->] (y3out) -- node[left] {$b$} (z3in) ;
  
% 4 to 3

  \draw[->] (x4in) -- node[above] {$\epsilon$} (x3in) ;
  \draw[->] (x3out) -- node[above] {$\epsilon$} (x4out) ;

  \draw[->] (z3out) -- node[above] {$c$} (y4out) ;
  \draw[->] (y4in) -- node[above] {$a$} (y3in) ;
  
% column 4

  \draw[->] (x4out) -- node[left] {$e$} (y4in) ;
  \draw[->,dashed] (z4in) -- node[left] {$h$} (z4out) ;

% 5 to 4

  \draw[->] (x5in) -- node[above] {$\epsilon$} (x4in) ;
  \draw[->] (y4out) -- node[above] {$f$} (x5out) ;

  \draw[->,dashed] (z5in) -- node[above] {\textcolor{gray}{$\epsilon$}} (z4in) ;
  \draw[->,dashed] (z4out) -- node[above] {\textcolor{gray}{$\epsilon$}} (z5out) ;
  
% column 5

  \draw[->,dashed] (y5in) -- node[left] {\textcolor{gray}{$a$}} (y5out) ;

% states

\node (run) at (-1,6) {$run$};

\node (q0) at (0,6) {$q_0$} ;
\node (q1) at (3,6) {$q_1$} ;
\node (q2) at (6,6) {$q_2$} ;
\node (q3) at (9,6) {$q_3$} ;
\node (q4) at (12,6) {$q_4$} ;
\node (q5) at (15,6) {$q_5$} ;

% transitions

 \draw[->] (q0) -- node[above] {$\begin{array}{lll} X & := & aXb
       \\ Y & := & aaa \\ Z & := & Zc \end{array}$}(q1) ;

 \draw[->] (q1) -- node[above] {$\begin{array}{lll} X & := & c
       \\ Y & := & Y \\ Z & := & dZd \end{array}$}(q2) ;

 \draw[->] (q2) -- node[above] {$\begin{array}{lll} X & := & X
       \\ Y & := & eYf \\ Z & := & Z \end{array}$}(q3) ;

 \draw[->] (q3) -- node[above] {$\begin{array}{lll} X & := & X
       \\ Y & := & aYbZc \\ Z & := & h \end{array}$}(q4) ;

 \draw[->] (q4) -- node[above] {$\begin{array}{lll} X & := & XeYf
       \\ Y & := & a \\ Z & := & Z \end{array}$}(q5) ;

% background

\fill[blue!20,fill opacity=0.2] (-0.3,-0.3) rectangle (15.3,5.3);

\end{tikzpicture}
\end{center}
\vspace{-5mm}
\caption{\label{fig:outputgraph} \sst-output structure}
\vspace{-5mm}
\end{figure*}

The condition on usefulness of nodes implies that \sst-output structures consist of a single directed component, and 
therefore they are edge-labeled string structures. 

As an example of \sst-output structure consider Fig. \ref{fig:outputgraph}. 
We show only the variable updates. 
 Dashed arrows represent variable updates for useless variables, and therefore does not belong the \sst-output
structure. Initially the variable content of $Z$ is equal to $\epsilon$. 
It is represented by the $\epsilon$-edge from $(Z^{in}, 0)$ to $(Z^{out}, 0)$ in the first column. 
Then, variable $Z$ is updated to $Zc$. 
Therefore, the new content of $Z$ starts with $\epsilon$ (represented by the $\epsilon$-edge from $(Z^{in},1)$ to $(Z^{in},0)$,
which is concatenated with the previous content of $Z$, and then concatenated with $c$ (it is represented by the $c$-edge from
$(Z^{out},0)$ to $(Z^{out},1)$). 
Note that the invariant is satisfied. The output is given by the path from $(X^{in},5)$ to $(X^{out},5)$ 
and equals $ceaeaaafbdcdcf$. 
Also note that some edges are labelled by strings with several letters, but there are finitely many possible such strings. 
In particular, we denote by $O_T$ the set of all strings that appear in right-hand side of variable updates. 
\sst-output structures are defined formally in Appendix \ref{app:sst-output}.

\subsection{From SST to FOT}
\label{sst2fot}
It is known from \cite{AC10,FiliotTrivediLics12} that the transformation that maps a string $s$ to its \sst-output 
structure is MSO-definable. 
We show that it is FO-definable as long as the \sst{} is aperiodic and 1-bounded. 
The main challenge is to define the transitive closure of the edge relation in first-order.
We briefly recall the construction of \cite{AC10,FiliotTrivediLics12} in Appendix (in the proof of Lemma \ref{lem:sst2fo}) but rather focus on the
transitive closure in this section. 

Let $T = (Q, q_0, \Sigma, \Gamma, \varsst, \delta, \rho, Q_f)$. 
The \sst-output structure of $T$, as a node-labeled string, can be
seen as logical structures over the signature $S_{O_T} = \{ (E_{\gamma})_{\gamma\in O_T}, \preceq\}$ where the symbols $E_\gamma$  are binary predicates interpreted as edges
labeled by $O_T$. 
We let $E$ denote the edge relation, disregarding the labels. 
To prove that transitive closure is $\FO[\Sigma]$-definable, we use the fact that variable flow is $FO[\Sigma]$-definable. 
The following property is a key result towards FO-definability.
\begin{proposition}\label{prop:path}
    Let $T$ be an \textbf{aperiodic,1-bounded} \sst{} $T$. 
    Let $s\in dom(T)$, $G_T(s)$ its \sst-output structure and $r=q_0\dots q_n$ the accepting run of $T$ on $s$. 
    For all variables $X,Y\in \varsst$, all positions $i,j\in \dom(s)\cup\{0\}$, all $d,d'\in\{in,out\}$, there exists a path from node 
    $(X^{d},i)$ to node $(Y^{d'},j)$ in $G_T(s)$ iff $(X,i)$ and
    $(Y,j)$ are both useful and one of the following conditions hold: either
    \begin{enumerate}
      \item
$(q_j,Y)\flows^{s[j{+}1{:}i]}_1(q_i,X)$ and $d = in$, or
    \item 
$(q_i,X)\flows^{s[i{+}1{:}j]}_1 (q_j,Y)$ and $d' = out$, or
    \item 
there exists $k\geq max(i,j)$ and two variables $X',Y'$ such
      $(q_i,X)\flows^{s[i{{+}1:}k]}_1 (q_k,X')$, $(q_j,Y)\flows^{s[j{+}1{:}k]}_1 (q_k,Y')$
      and $X'$ and $Y'$ are concatenated in this order\footnote{by concatenated we
        mean that there exists a variable update whose rhs is of the
        form $\dots X'\dots Y'\dots$} by $r$ when reading $s[k+1]$. 
    \end{enumerate}
\end{proposition}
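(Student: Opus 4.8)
The plan is to reduce the existence of a directed path to the linear order of the \sst-output structure, and to read off that order from the variable-flow relation. Since $G_T(s)$ is a single directed edge-labelled string, every useful pair $(X,i)$ delimits a contiguous factor of the output whose left endpoint is $(X^{in},i)$ and right endpoint is $(X^{out},i)$; write $I(X,i)$ for the corresponding interval of positions. Because every edge of $G_T(s)$ respects the left-to-right order, a path from $u$ to $v$ exists iff $u$ precedes or equals $v$ in this order, so it suffices to characterise the order in terms of flow. The key lemma I would prove first is the correspondence: for $i\le j$, $(q_i,X)\flows^{s[i{+}1{:}j]}_1 (q_j,Y)$ holds iff there is a path from $(Y^{in},j)$ to $(X^{in},i)$ using only position-decreasing ("$in$''-side) edges, iff there is a path from $(X^{out},i)$ to $(Y^{out},j)$ using only "$out$''-side edges; equivalently, $I(X,i)\subseteq I(Y,j)$ with $X$ the inner factor. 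This is shown by induction on $j-i$, inspecting the single variable update read at each step, which writes each content as one block $u_0Y_1u_1\cdots Y_mu_m$. The crucial use of \textbf{1-boundedness} is here: no content is ever duplicated, so each $I(X,i)$ embeds into at most one enclosing factor and any two useful intervals are either nested or disjoint; that is, the useful intervals form a \emph{laminar family}, and nesting is exactly forward flow, the inner interval carrying the smaller time index.

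For the \emph{if} direction I would exhibit the path in each case. In case~(1), $Y(j)$ is an inner factor of $X(i)$, so $(X^{in},i)$ precedes $(Y^{in},j)\preceq(Y^{out},j)$, and the descending "$in$''-path followed by $Y$'s own content reaches $(Y^{d'},j)$ for either value of $d'$. Case~(2) is symmetric: $X(i)$ is inner to $Y(j)$, every node of $I(X,i)$ precedes $(Y^{out},j)$, so $(X^{d},i)$ reaches $(Y^{out},j)$ for either $d$. In case~(3), $X(i)$ flows into $X'(k)$ and $Y(j)$ into $Y'(k)$, and the update read at $s[k{+}1]$ concatenates $X'$ before $Y'$, so $(X'^{out},k)$ precedes $(Y'^{in},k)$; composing $(X^{d},i)\rightsquigarrow (X'^{out},k)\rightsquigarrow (Y'^{in},k)\rightsquigarrow (Y^{d'},j)$ yields a path for all $d,d'$.

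For the \emph{only if} direction, assume a path from $(X^{d},i)$ to $(Y^{d'},j)$; by laminarity $I(X,i)$ and $I(Y,j)$ are nested or disjoint. If $I(Y,j)\subseteq I(X,i)$ ($Y$ inner), then only $(X^{in},i)$, and not $(X^{out},i)$, can precede a node of $I(Y,j)$, so the path forces $d=in$, and the nesting is exactly the flow of case~(1). The dual nesting $I(X,i)\subseteq I(Y,j)$ forces $d'=out$ and gives case~(2). If the intervals are disjoint, the path forces $I(X,i)$ to lie entirely before $I(Y,j)$; let $I(Z,k{+}1)$ be the smallest useful interval containing both, which exists because both useful pairs flow into the final output. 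Decomposing $Z(k{+}1)$ through the update read at $s[k{+}1]$ as $u_0W_1u_1\cdots W_mu_m$, minimality forces $I(X,i)$ and $I(Y,j)$ to sit inside two distinct blocks $I(W_a,k)$ and $I(W_b,k)$ with $a<b$; setting $X'=W_a$ and $Y'=W_b$ (so that $i,j\le k$) gives precisely case~(3).

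The hard part is the correspondence lemma of the first paragraph: deriving from 1-boundedness that the useful intervals are laminar and that nesting coincides with forward flow, since partial overlap would require a variable content to be split across two enclosing factors, which 1-boundedness forbids. A secondary obstacle is the bookkeeping in the disjoint case — verifying that the smallest common interval is cut by a single update step, that the two factors land in genuinely distinct sub-blocks rather than in a shared constant part or the same sub-block, and that the resulting index satisfies $k\ge\max(i,j)$. Degenerate situations where an involved factor is empty, so that distinct nodes share a position and are linked by $\epsilon$-edges, must also be checked to fall on the correct side of each case; these follow from the same laminar analysis, as $\epsilon$-edges are themselves order-respecting.
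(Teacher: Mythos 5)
Your argument is correct, and it is worth noting that the paper itself supplies no proof of Proposition~\ref{prop:path}: it only illustrates the three conditions on the example of Fig.~\ref{fig:Xoutputgraph}, so there is no ``paper proof'' to match against. Your route --- identifying each useful pair $(X,i)$ with the interval $I(X,i)$ of the output path delimited by $(X^{in},i)$ and $(X^{out},i)$, proving that these intervals form a laminar family, and showing that nesting coincides with variable flow with the inner interval carrying the smaller time index --- is a clean way to make the statement rigorous, and the three cases of the proposition fall out exactly as nesting one way (forcing $d=in$), nesting the other way (forcing $d'=out$), and disjointness resolved at the least common enclosing useful interval, whose decomposition through the single update read at $s[k{+}1]$ yields the concatenated pair $X',Y'$ with $k\geq\max(i,j)$. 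Two points should be made explicit. First, the reduction of path existence to the linear order presupposes that $G_T(s)$ is a single directed path (Proposition~\ref{prop:uniquepath}), which the paper also asserts without proof; your laminarity lemma and that proposition are essentially the same consequence of $1$-boundedness and ought to be established together. Second, the claim that the inner interval carries the smaller time index needs the one-line observation that every internal node of the path from $(X^{in},i)$ to $(X^{out},i)$ has time index strictly below $i$, so a node $(Y^{d'},j)$ with $j>i$ can never lie in $I(X,i)$; this also gives disjointness of $I(X,i)$ and $I(Y,i)$ for distinct useful variables at the same step, which your only-if direction uses to identify the unique variable at step $i$ into which a nested factor flows. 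With these two remarks spelled out, your sketch is a complete proof.
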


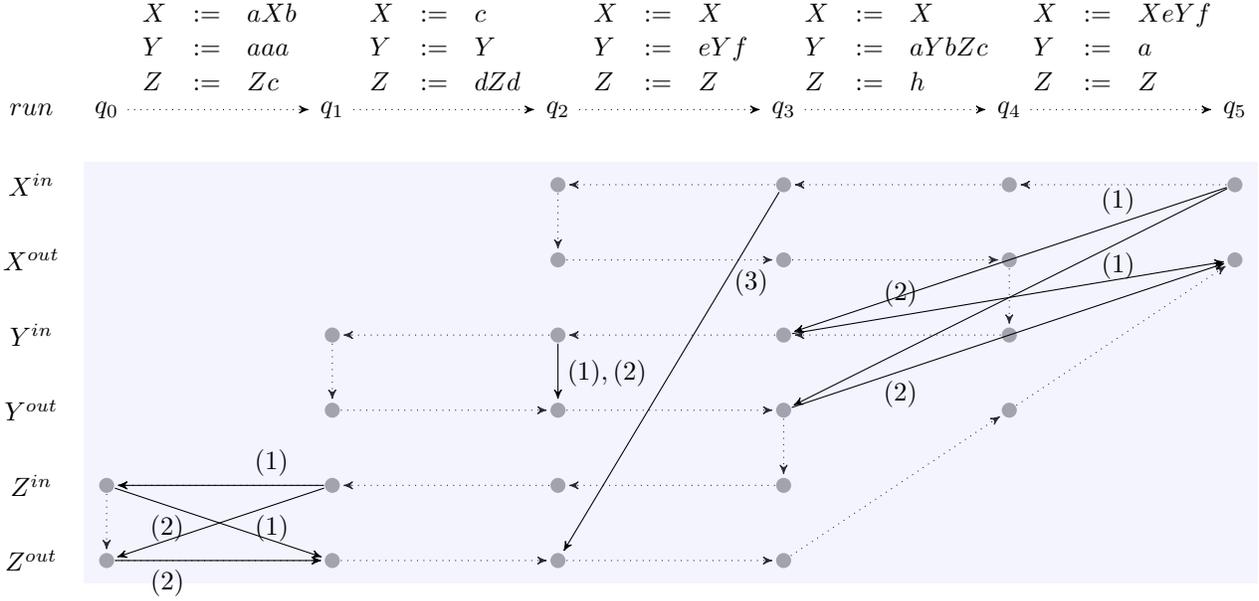
\begin{figure*}
\begin{center}
\begin{tikzpicture}[->,>=stealth',shorten >=1pt,auto]

\tikzstyle{graphnode}=[circle,fill=black!40,thick,inner
sep=0pt,minimum size=2mm]

\node [graphnode] (x2in) at (6,5) {} ;
\node [graphnode] (x3in) at (9,5) {} ;
\node [graphnode] (x4in) at (12,5) {} ;
\node [graphnode] (x5in) at (15,5) {} ;

\node [graphnode] (x2out) at (6,4) {} ;
\node [graphnode] (x3out) at (9,4) {} ;
\node [graphnode] (x4out) at (12,4) {} ;
\node [graphnode] (x5out) at (15,4) {} ;

\node [graphnode] (y1in) at (3,3) {} ;
\node [graphnode] (y2in) at (6,3) {} ;
\node [graphnode] (y3in) at (9,3) {} ;
\node [graphnode] (y4in) at (12,3) {} ;

\node [graphnode] (y1out) at (3,2) {} ;
\node [graphnode] (y2out) at (6,2) {} ;
\node [graphnode] (y3out) at (9,2) {} ;
\node [graphnode] (y4out) at (12,2) {} ;

\node [graphnode] (z0in) at (0,1) {} ;
\node [graphnode] (z1in) at (3,1) {} ;
\node [graphnode] (z2in) at (6,1) {} ;
\node [graphnode] (z3in) at (9,1) {} ;

\node [graphnode] (z0out) at (0,0) {} ;
\node [graphnode] (z1out) at (3,0) {} ;
\node [graphnode] (z2out) at (6,0) {} ;
\node [graphnode] (z3out) at (9,0) {} ;

\node (xin) at (-1,5) {$X^{in}$};
\node (xout) at (-1,4) {$X^{out}$};
\node (yin) at (-1,3) {$Y^{in}$};
\node (yout) at (-1,2) {$Y^{out}$};
\node (zin) at (-1,1) {$Z^{in}$};
\node (zout) at (-1,0) {$Z^{out}$};

\tikzstyle{every path}=[dotted]

% column 0
  \draw[->]   (z0in) --  (z0out);

% 1 to 0

  \draw[->] (z1in) --  (z0in) ;
  \draw[->] (z0out) --  (z1out) ;

% column 1

  \draw[->] (y1in) --  (y1out) ;

% 2 to 1

  \draw[->] (y2in) --  (y1in) ;
  \draw[->] (y1out) --  (y2out) ;

  \draw[->] (z2in) --  (z1in) ;
  \draw[->] (z1out) --  (z2out) ;

% column 2

  \draw[->] (x2in) --  (x2out) ;
           
% 3 to 2

  \draw[->] (y3in) --  (y2in) ;
  \draw[->] (y2out) --  (y3out) ;

  \draw[->] (x3in) --  (x2in) ;
  \draw[->] (x2out) --  (x3out) ;

  \draw[->] (z3in) --  (z2in) ;
  \draw[->] (z2out) --  (z3out) ;

% column 3

  \draw[->] (y3out) --  (z3in) ;
  
% 4 to 3

  \draw[->] (x4in) --  (x3in) ;
  \draw[->] (x3out) --  (x4out) ;

  \draw[->] (z3out) --  (y4out) ;
  \draw[->] (y4in) --  (y3in) ;
  
% column 4

  \draw[->] (x4out) --  (y4in) ;

% 5 to 4

  \draw[->] (x5in) -- (x4in) ;
  \draw[->] (y4out) --  (x5out) ;

% column 5

% states

\node (run) at (-1,6) {$run$};

\node (q0) at (0,6) {$q_0$} ;
\node (q1) at (3,6) {$q_1$} ;
\node (q2) at (6,6) {$q_2$} ;
\node (q3) at (9,6) {$q_3$} ;
\node (q4) at (12,6) {$q_4$} ;
\node (q5) at (15,6) {$q_5$} ;

% transitions

 \draw[->] (q0) -- node[above] {$\begin{array}{lll} X & := & aXb
       \\ Y & := & aaa \\ Z & := & Zc \end{array}$}(q1) ;

 \draw[->] (q1) -- node[above] {$\begin{array}{lll} X & := & c
       \\ Y & := & Y \\ Z & := & dZd \end{array}$}(q2) ;

 \draw[->] (q2) -- node[above] {$\begin{array}{lll} X & := & X
       \\ Y & := & eYf \\ Z & := & Z \end{array}$}(q3) ;

 \draw[->] (q3) -- node[above] {$\begin{array}{lll} X & := & X
       \\ Y & := & aYbZc \\ Z & := & h \end{array}$}(q4) ;

 \draw[->] (q4) -- node[above] {$\begin{array}{lll} X & := & XeYf
       \\ Y & := & a \\ Z & := & Z \end{array}$}(q5) ;

% background

\fill[blue!20,fill opacity=0.2] (-0.3,-0.3) rectangle (15.3,5.3);

% For transitive closure

\tikzstyle{every path}=[solid]

% condition 1 & 2

\draw[->] (y2in) -- node[right] {$(1),(2)$} (y2out);

\draw[->] (z0in) -- node[near start, below] {$(2)$} (z1out);
\draw[->] (z0out) -- node[near start, below] {$(2)$} (z1out);

\draw[->] (z1in) -- node[near start, below] {$(1)$} (z0out);
\draw[->] (z1in) -- node[near start, above] {$(1)$} (z0in);

% condition 2
\draw[->] (y3in) -- node[near start, above] {$(2)$} (x5out);
\draw[->] (y3out) -- node[near start, below] {$(2)$} (x5out);

% condition 1

\draw[->] (x5in) -- node[near start, above] {$(1)$} (y3in);
\draw[->] (x5in) -- node[near start, below] {$(1)$} (y3out);

% condition 3

\draw[->]  (x3in) -- node[near start, right] {$(3)$}  (z2out);

\end{tikzpicture}
\end{center}
\caption{\label{fig:Xoutputgraph} Conditions of Proposition \ref{prop:path}}
\end{figure*}

We illustrate the conditions of this proposition on
Fig.\ref{fig:Xoutputgraph}. We have for instance $(q_2,Y)\flows^{s[3{:}2]=\epsilon}_1 (q_2,Y)$, therefore
by conditions $(1)$ (and $(2)$) by taking $X=Y$ and $i=j=2$, there
exists a path from $(Y^{in},2)$ to $(Y^{out},2)$. Note that none of these
conditions imply the existence of an edge from $(Y^{out},2)$ to
$(Y^{in},2)$, but self-loops on $(Y^{in},2)$ and
$(Y^{out},2)$ are implied by conditions
$(1)$ and $(2)$ respectively. Now consider positions $0$ and $1$ and variable
$Z$. It is the case that $(q_0,Z)\flows^{s[1{:}1]}_1 (q_1,Z)$, therefore by
condition $(1)$ there is a path from $(Z^{in},1)$ to $(Z^{in},0)$ and
to $(Z^{out},0)$. Similarly, by condition $(2)$ there is a path from
$(Z^{in},0)$ to $(Z^{out},1)$ and from $(Z^{out},0)$ to $(Z^{out},1)$.
For positions $3$ and $5$, note that $(q_3,Y)\flows^{s[4{:}5]}_1 (q_5,X)$, hence
there is a path from $(Y^d,3)$ to $(X^{out},5)$ for all
$d\in\{in,out\}$. By condition $(2)$ one also gets edges from
$(X^{in},5)$ to $(Y^d,3)$. Finally consider nodes $(Z^{out},2)$ and 
$(X^{in},3)$. There is no flow relation between variable $Z$ at
position $2$ and variable $X$ at position $3$. However,
$(q_3,X)\flows^{s[4{:}4]}_1 (q_4,X)$ and $(q_2,Z)\flows^{s[3{:}4]} (q_4,Y)$. Then $X$ and $Y$
gets concatenated at position $4$ to define $X$ at position $5$. Therefore there is a path from 
$(X^{in}, 3)$ to $(Z^{out},2)$: this case is covered by condition
$(3)$.

From this result and FO-definability of variable flow, one can show that 
transitive closure  is FO-definable.
\begin{lemma}\label{lem:fopath}
    Let $T$ be an \textbf{aperiodic,1-bounded} \sst{} $T$. 
    For all $X,Y\in \varsst$ and all $d,d'\in
    \{in,out\}$, there exists an FO[$\Sigma$]-formula 
    $\text{path}_{X,Y,d,d'}(x,y)$ with two free variables
    such that for all strings $s\in\dom(T)$ and all positions
    $i,j\in dom(s)$, $s\models \text{path}_{X,Y,d,d'}(i,j)$ iff 
    there exists a path from $(X^d,i)$ to $(Y^{d'},j)$ in
    $G_T(s)$.
\end{lemma}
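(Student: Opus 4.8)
The plan is to translate the structural characterization of Proposition~\ref{prop:path} directly into first-order logic, treating as black boxes the FO-definability of variable flow (Proposition~\ref{prop:foflow}) and of usefulness (Proposition~\ref{prop:contribution}). The key observation is that Proposition~\ref{prop:path} rewrites reachability in $G_T(s)$ as a finite Boolean combination of flow predicates plus a \emph{single} existential over an intermediate position; consequently the a priori second-order transitive closure collapses into an FO formula. This collapse is exactly what aperiodicity and $1$-boundedness buy us, so almost all of the intellectual content already resides in Propositions~\ref{prop:path} and~\ref{prop:foflow}, and the present lemma is a finite unfolding.

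Concretely, for fixed $X,Y\in\varsst$ and $d,d'\in\{in,out\}$ I would set
\[
\text{path}_{X,Y,d,d'}(x,y) \rmdef \useful_X(x)\wedge \useful_Y(y)\wedge(C_1\vee C_2\vee C_3),
\]
where the disjuncts mirror the three cases of Proposition~\ref{prop:path} (here $x$ plays the role of the source position $i$ and $y$ of the target position $j$). For case~(1) I include $C_1 \rmdef y\preceq x \wedge \phi_{Y\flows X}(y,x)$ only when $d=in$; by Proposition~\ref{prop:foflow} this holds iff $(q_y,Y)\flows^{s[y+1:x]}_1(q_x,X)$. Symmetrically, for case~(2) I include $C_2 \rmdef x\preceq y \wedge \phi_{X\flows Y}(x,y)$ only when $d'=out$. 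Dropping $C_1$ when $d=out$ and $C_2$ when $d'=in$ encodes the side conditions on $d,d'$, and the guards $y\preceq x$, $x\preceq y$ keep us in the range where the flow formulas are guaranteed correct.

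The substantive disjunct is $C_3$, for the ``common descendant followed by concatenation'' case. I quantify existentially over the meeting position $k$ and, using finiteness of $\varsst$, replace the quantification over the witness variables $X',Y'$ by a finite disjunction:
\[
C_3 \rmdef \exists z.\Big(x\preceq z\wedge y\preceq z\wedge \bigvee_{X',Y'\in\varsst}\big(\phi_{X\flows X'}(x,z)\wedge\phi_{Y\flows Y'}(y,z)\wedge\text{concat}_{X',Y'}(z)\big)\Big).
\]
The predicate $\text{concat}_{X',Y'}(z)$ must assert that the transition fired at the successor of $z$ concatenates $X'$ before $Y'$ in some right-hand side. As the SST is deterministic, this transition is determined by the state reached at $z$ and the letter at $z+1$, so I would define
\[
\text{concat}_{X',Y'}(z) \rmdef \exists z'.\Big(S(z,z')\wedge\bigvee_{(p,a)}\big(\psi_p(z)\wedge L_a(z')\big)\Big),
\]
the inner disjunction ranging over the finitely many $(p,a)\in Q\times\Sigma$ whose update $\rho(p,a)$ has a right-hand side of the form $\dots X'\dots Y'\dots$. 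Here $\psi_p(z)$ is an FO formula asserting that the run is in state $p$ after reading $s[1{:}z]$; such formulas exist because the underlying input automaton is aperiodic (as shown in the proof of Proposition~\ref{prop:fodomain}), so each prefix-state language is FO-definable and can be relativized to the prefix ending at $z$ by guarding quantifiers with $\cdot\preceq z$, exactly as in Proposition~\ref{prop:hintikka}.

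Correctness is then immediate: for $i,j\in\dom(s)$ each disjunct holds at $(i,j)$ iff the corresponding case of Proposition~\ref{prop:path} does, and the $\useful$ guards match its usefulness premise, so $s\models\text{path}_{X,Y,d,d'}(i,j)$ iff there is a path from $(X^d,i)$ to $(Y^{d'},j)$ in $G_T(s)$. The main obstacle I anticipate is the bookkeeping inside $C_3$: one must check that the meeting position $k$ of Proposition~\ref{prop:path} satisfies $k\geq\max(i,j)\geq 1$, so that $k\in\dom(s)$ and no special-casing of position $0$ is needed, and that the local concatenation test is correctly pinned to the unique transition at $k+1$. This last point is where aperiodicity is used a \emph{second} time, since the FO-definability of the state predicates $\psi_p$ fails for non-aperiodic underlying automata; everything else is a finite disjunction over $\varsst$ and over $Q\times\Sigma$.
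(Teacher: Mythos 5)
Your construction is essentially the paper's own proof: both unfold Proposition~\ref{prop:path} into a Boolean combination of the flow formulas $\phi_{X\flows Y}$ of Proposition~\ref{prop:foflow} together with a single existential over the meeting position for case~(3), replacing the quantification over the witness variables and the concatenating transition by finite disjunctions over $\varsst$ and $Q\times\Sigma$, and using FO-definable state predicates (available by aperiodicity of the underlying automaton) to pin down the update fired at $k+1$. The only divergence is the case $d=out$, $d'=in$, which the paper's appendix sets to $\text{false}$ while you retain the case-(3) disjunct; your choice is the one consistent with Proposition~\ref{prop:path} (whose condition~(3) does not depend on $d,d'$) and with the paper's own worked example of a path from $(X^{in},3)$ to $(Z^{out},2)$, so this is not a gap in your argument, and your explicit $\useful$ guards likewise match the statement of Proposition~\ref{prop:path}.
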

The proof of Lemma \ref{lem:fopath} can be seen in Appendix \ref{app:sst2fot}.
We are now in a position to sketch the proof of Lemma~\ref{lem:sst2fo} of this section. 
Let $\Gamma$ be the output alphabet.  
The construction presented in \cite{FiliotTrivediLics12,AC10} shows the MSO-definability of strings to \sst-output 
structures. 
We adapt this construction and based on FO-definability of transitive closure, as proved in Lemma \ref{lem:fopath}, we show 
that strings to \sst-output structure transformations are FO-definable whenever the \sst{} is aperiodic and 1-bounded.  
In~\cite{FiliotTrivediLics12,AC10}, \sst-output structures also  contain useless nodes which are later on removed by composing 
another definable transformation. 
Based on Proposition \ref{prop:contribution} which states that usefulness of nodes is FO-definable, we rather directly filter out 
these nodes in the first FO-transformation.  
\sst-output structures are however edge-labeled strings over $O_T$, where $O_T$ is a finite set of strings over $\Gamma$.    
It remains to transform an edge-labeled string over $O_T$ into a  (node-labeled) string over $\Gamma$. This transformation is again
FO-definable by taking a suitable number of copies of the input domain ($max \{ |s|\ |\ s\in O_T\}$).
Then the lemma follows from the closure of FO-transformations under composition \cite{Cour94}.

\section{From FOT to aperiodic 1-bounded SST} %$\sst{}$}
\label{sec:fo2sst}
\label{sec:sstla2sst}

The goal of this section is to prove the following lemma by showing a reduction
from FO-definable transformations to aperiodic, 1-bounded \sst{}s.
\begin{lemma}
  \label{lem:fo2sst}
  A string transformation is FO-definable {\bf only if} it is definable by an
  aperiodic, 1-bounded SST.
\end{lemma}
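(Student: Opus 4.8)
The plan is to factor the construction through an intermediate model of SST with (first-order) look-ahead, denoted \sstla{}: these are \sst{}s whose transitions are additionally guided by the $k$-type of the suffix not yet read, as computed by an aperiodic automaton scanning the input from right to left. I would proceed in two stages. First, from the given \fot{} $T$ I build an equivalent aperiodic, $1$-bounded \sstla{} $T_{la}$. Second, I remove the look-ahead from $T_{la}$ while preserving both aperiodicity and $1$-boundedness, obtaining the desired \sst{}. Throughout, set $k = qr(T)$.

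For the first stage, the starting observation is that, by Proposition~\ref{prop:ktype}.\ref{prop:indis} together with Proposition~\ref{prop:hintikka}, the truth of each label formula $\phi^c_\gamma(x)$ and each order formula $\phi^{c,d}_\preceq(x,y)$ at given positions depends only on the $k$-types of the blocks into which those positions split the input, plus the letters carried by the positions. Hence whether an output node $i^c$ exists, which letter it carries, and the relative order of any two output nodes are all determined by bounded, type-level data available from the prefix $k$-type (kept in the state) and the suffix $k$-type (supplied by the look-ahead). Scanning left to right, $T_{la}$ maintains the $k$-type of the prefix in its state, so its input automaton is essentially the aperiodic $k$-type automaton $A_\tau$ and is aperiodic. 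The variables store the output assembled so far: at any prefix the already-produced output nodes, ordered by $\preceq^M$, form a bounded number of maximal consecutive intervals, and $T_{la}$ keeps one variable per interval. On reading the next letter, and knowing its suffix type from the look-ahead, $T_{la}$ determines the finitely many new nodes $i^c$, their labels, and how they splice into and merge the existing intervals; this is a copyless update (pure concatenation, no variable duplicated), so $T_{la}$ is $1$-bounded. The output function reads off the single interval that remains once the suffix is empty.

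For the second stage, the look-ahead value at position $i$ is the $k$-type of the suffix after $i$, computed by a right-to-left deterministic automaton whose states are $\Theta_k$; its transition monoid is the opposite of the (aperiodic) $k$-type monoid, and since a monoid is aperiodic iff its opposite is, the look-ahead is aperiodic. To simulate it deterministically inside a plain \sst{}, I would run all look-ahead branches in parallel: for each look-ahead state $p$ keep a private bundle of copies of $T_{la}$'s variables recording the fragments that would be produced were the remaining suffix to lead to $p$, and update every bundle independently on each letter; at the last position the suffix beyond it is $\epsilon$, so the consistent branch can be selected for the output. This multiplies the state and variable sets by the finite number of look-ahead states while keeping each bundle copyless, so the resulting \sst{} $T'$ is $1$-bounded. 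Its transition monoid embeds into the product of the transition monoid of $T_{la}$ with that of the look-ahead automaton, and as a submonoid of a product of aperiodic monoids it is aperiodic; hence $T'$ is aperiodic, realizes the same transformation as $T$, and (by Proposition~\ref{prop:fodomain}, or directly) has domain $\phi_{\dom}$.

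The crux — and the step I expect to be the main obstacle — is establishing aperiodicity of the \emph{full} transition monoid of $T_{la}$, i.e.\ of the variable-flow component, not merely of the underlying input automaton. What must be shown is that the pattern of concatenations and interval merges performed on reading a block $s$ is a function of the $k$-type of $s$ (relative to the ambient prefix and suffix types), so that the stabilization $\ktype{s^m}{k} = \ktype{s^{m+1}}{k}$ of Proposition~\ref{prop:ktype}.\ref{prop:ape} lifts to $M_{s^m} = M_{s^{m+1}}$ for the flow matrices as well. Care is needed because the flow coefficients must be shown to remain in $\{\bot,0,1\}$ throughout (the $1$-boundedness invariant) and to stabilize simultaneously with the state component; this is precisely where the copyless interval bookkeeping does the real work, since it guarantees that each variable $X$ flows $(p,X)\flows^{s}_{1}(q,Y)$ at most once and in a manner dictated by types alone.
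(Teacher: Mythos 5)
Your architecture is exactly the paper's: factor through an \sst{} with look-ahead whose states track the prefix type, whose look-ahead supplies the suffix type, and whose variables hold the maximal output intervals produced so far, then eliminate the look-ahead by a parallel/subset simulation. Two points, however, are asserted rather than proved, and the second is the heart of the matter. First, the claim that the already-produced output nodes form a \emph{bounded} number of maximal intervals (so that finitely many variables suffice, one per interval) is not free: it is the paper's Lemma~\ref{lem:addr}, which shows that an $i$-head $j^c$ is uniquely determined by the tuple $(\ktype{s[1{:}j)}{k+2},\ktype{s(j{:}i]}{k+2},s[j],c)$, and whose proof needs Proposition~\ref{prop:ktype}.\ref{prop:indis} together with the fact that the output is a single directed path (two indistinguishable heads would force two edges into or out of the same output node). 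Without this, the variable set $\set{X_\addr}$ is not well defined and copylessness of the single-step update cannot be argued.

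Second, and more seriously, the step you yourself flag as the main obstacle is left unestablished, and the justification you sketch does not suffice as stated. The transition monoid of $T_{\la}$ has entries $M_s[(q,R),X_\addr][(q',R'),X_{\addr'}]$ for an \emph{arbitrary standalone} string $s$ between two configurations, whereas the interval semantics of $X_\addr$ is only meaningful relative to a completed accepted input: whether $X_\addr$ at $i$ flows to $X_{\addr'}$ at $i'$ is characterized (paper's Lemma~\ref{var-add}) by the existence of an output path from $\absaddr(s,i',\addr')$ back to $\absaddr(s,i,\addr)$ all of whose nodes lie at input positions $\leq i'$ --- a property of the global output order, which may route through positions \emph{before} $i$, not a function of the type of the block $s[i{:}i']$ alone. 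The paper bridges this in three steps: FO-definability of the relative flow as a formula $\text{rflow}_t(x,y)$ with two free variables in the ambient word; then a conversion of this relative statement into a sentence $\text{flow}_t$ about the factor itself (Lemma~\ref{lem:flowFO}), which requires restricting to \emph{useful} configurations and an aperiodicity-preserving factor-extraction and projection argument on automata; only then does the stabilization $\ktype{s^m}{b}=\ktype{s^{m+1}}{b}$ of Proposition~\ref{prop:ktype}.\ref{prop:ape} transfer to $M_{s^m}=M_{s^{m+1}}$. Your appeal to ``the update pattern is a function of the block's type relative to ambient types'' skips exactly this passage from relative to absolute flow, and your product-embedding argument for aperiodicity after look-ahead removal likewise glosses over the fact that powerset constructions do not preserve aperiodicity in general --- the paper's proof of Lemma~\ref{lem:aperiodicSSTLA} needs the minimal-predecessor tie-breaking and an explicit case analysis to recover it. One-boundedness, finally, also rests on the uniqueness of the output path between two heads, not merely on copylessness of individual updates.
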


We begin this section by introducing the notion of aperiodic,1-bounded SSTs with look-ahead, 
and show that they are equi-expressive to aperiodic,1-bounded SSTs. 
We will then construct an aperiodic, 1-bounded \sst{} with look-ahead implementing the same 
transformation as the given FOT. 
While this construction of the $\sst{}$ with look-ahead closely mimics the construction of
\cite{ADT13}, we show that it preserves aperiodicity and 1-boundedness
(Section~\ref{sec:aperiodicSSTLA}).

\subsection{SSTs with Lookahead}
\label{sec:sst-la}
As an intermediate model we introduce streaming string transducers with
look-ahead ($\sstla{}$), which are \sst{}s that can make transitions based on some
regular property of the current suffix of the input string.   
Such properties can be conveniently specified via a single finite automaton whose
different states characterize various regular properties. 
Intuitively, while processing a symbol $a_i$ of an input $w=a_1a_2 \dots a_n$,
 the SST moves from its current state to some state $q_{i}$
 iff there exists a unique state $p_{i}$ 
 of the look ahead automaton such that $a_ia_{i+1} \dots a_n \in L(p_{i})$.  
As the string is processed, along with the current state of the SST, a set of states 
of the lookahead automaton is also maintained. 
   
Formally, a (deterministic) lookahead automaton is a tuple $A = (Q_A, \Sigma, \delta_A, P_f)$ such
that for all $p \in Q_A$ the tuple $A_p = (Q_A, p, \Sigma, \delta_A, P_f)$
($A$ with initial state $p$) is a
deterministic finite automaton (we write $L(A_p)$ for the language
that it recognizes).

\begin{definition}
\label{sst-LA}
  An \sst{} with lookahead is a tuple $(T, A)$ where $A = (Q_A,
  \Sigma, \delta_A, P_f)$ is a (deterministic) lookahead automaton and $T$ is a tuple 
  $(\Sigma, \Gamma, Q, q_0, Q_f, \delta, \varsst, \rho, F)$ where 
  $\Sigma, \Gamma, Q, q_0, Q_f,
  \varsst, \rho$, and $F$  are defined as for \sst{}s, and $\delta: Q \times
  \Sigma \times P \to Q$ is the transition function.  
 We further require that the look-aheads are mutually exclusive, i.e. for
  all symbols $a\in\Sigma$, all states $q\in Q$, and all transitions
  $q' = \delta (q,a,p)$ and $q'' = \delta(q,a,p')$, we have that  
  $L(A_p)\cap L(A_{p'})=\varnothing$.
 \end{definition}
 The requirement that look-aheads are mutually exclusive
ensures that the $\sstla$ is deterministic: when reading a new symbol,
there is at most one transition that can be triggered. It is obvious
that this requirement can be checked in polynomial time: 
 %This assumption is without loss of generality:
 whenever  $q' = \delta (q,a,p)$ and $q'' = (q,a,p')$,
  we can indeed construct a deterministic automaton ${\cal A}_{mutex}$ which 
  starts from the final states of $A_p$ and $A_{p'}$ and walks backward to $(p,p')$ 
  such that $L({\cal A}_{mutex})=\emptyset$.

A configuration of an $\sstla$ is a pair $(q_i, P_i) \in Q {\times} 2^P$. 
A run $r$ of $T$ over string $s=a_1\ldots a_n \in \Sigma^*$ is a
sequence of configurations and letters $r = (q_0, P_0)\xrightarrow{a_1}(q_1,P_1)\dots (q_{n-1},P_{n-1})\xrightarrow{a_n}
(q_n,P_n)$ such that for all $i\in \{0,\dots,n-1\}$,  $(q_i, P_i)\xrightarrow{a_{i+1}}
(q_{i+1}, P_{i+1})$ if there exists $p\in P_{i+1}$ such that   
$\delta(q_i, a_{i+1}, p) =  q_{i+1}$, and for all $p\in P_i$,
$\delta_A(p,a_{i+1})\in P_{i+1}$. We write
$(q_0,P_0)\flows^s (q_n,P_n)$ if such a sequence exists. We say that
$r$ is accepting if $(q_0,P_0)$ is an initial configuration, i.e. 
$q_0\in Q_0$ and $P_0=\varnothing$, and $(q_n,P_n)$ is an accepting
configuration, i.e. $q_n\in Q_f$ and $P_n\subseteq P_f$. 
Clearly, if $r$ is accepting, then for all $i\in \{1,\dots,n-1\}$, $a_{i+1}\dots
a_n\in L(A_p)$, where $p$ is the look-ahead state of the $i$-th
transition of $r$. A configuration is said to be \emph{accessible} if
it can be reached from an initial configuration, and
\emph{co-accessible} if from it an accepting configuration can be
reached. It is \emph{useful} if it is both accessible and
co-accessible. Note that from the mutual-exclusiveness of
look-aheads and the determinism of $A$, it follows that for any input string, there is at most one
run of the $\sst_\la$ from and to useful configurations, as shown in
Appendix \ref{app:sst-la}. 

The concept of substitutions induced by a run can be naturally extended from
\sst{}s to \sstla{}s.
Also, we can define the transformation implemented by an \sstla{} in a
straightforward manner.
The transition monoid of an \sstla{} is defined by matrices indexed by
configurations $(q_i, P_i)\in Q\times 2^P$, using the notion of run defined before,
and the definition of aperiodicity of \sstla{} follows that of \sst.
Adding look-aheads (in an aperiodic fashion) to \sst{} does not increase
their expressive power, see Appendix \ref{sec:sstla2sst}.

\begin{lemma}\label{lem:aperiodicSSTLA}
    For all aperiodic,1-bounded \sst{} with look-ahead, there exists an
    equivalent aperiodic, 1-bounded \sst{}.
\end{lemma}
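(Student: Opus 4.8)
The plan is to take the look-ahead removal construction of \cite{ADT13} (specialised from trees to strings) and, rather than merely re-establish semantic equivalence, to track what it does to the transition monoid. Given an aperiodic, $1$-bounded \sstla{} $(T,A)$ with $A = (Q_A, \Sigma, \delta_A, P_f)$, I would build a plain \sst{} $T'$ that simulates $(T,A)$ while deferring the commitment to a look-ahead state until the end of the input. Concretely, $T'$ carries in its (finite) state the current \sstla{} configuration information $(q,P)\in Q\times 2^{Q_A}$, and it uses indexed copies $X_p$ of each variable $X\in\varsst$, one per look-ahead state $p\in Q_A$, so that the $p$-indexed copies record the output that \emph{would} be produced along the run whose look-ahead promises are witnessed by $p$. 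At the end of the input, $T'$ applies $F$ to the unique family of copies whose promoted look-ahead states all lie in $P_f$; mutual exclusivity of look-aheads together with determinism of $A$ guarantees that this family is unique, which is exactly what makes $T'$ deterministic and equivalent to $(T,A)$. This equivalence is the content already established in \cite{ADT13}, so I would cite it and concentrate on the monoid.

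Next I would verify $1$-boundedness. The key observation is that each variable $X_p$ of $T'$ inherits its flow directly from the flow of $X$ in $(T,A)$ restricted to the thread indexed by $p$: reading $s$, a copy $X_p$ flows into a copy $Y_{p'}$ in $T'$ exactly as often as $X$ flows into $Y$ along the corresponding \sstla{}-run, and the bundling by look-ahead state keeps these threads separate rather than summing them. Hence every coefficient of a matrix of $M_{T'}$ is bounded by a corresponding coefficient of a matrix of $M_{(T,A)}$, which is $1$-bounded by hypothesis; so $M_{T'}$ is $1$-bounded. The point to check carefully here is that distinct look-ahead threads never concatenate copies of the same source variable into one copy of a target variable, so that no flow of multiplicity two is ever created by the construction.

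The main obstacle is aperiodicity, and I would obtain it by exhibiting $M_{T'}$ as a homomorphic image of $M_{(T,A)}$. Recall that $s\mapsto M^{(T,A)}_s$ and $s\mapsto M^{T'}_s$ are monoid morphisms from $(\Sigma^*, \cdot, \epsilon)$. I would argue that the state-and-flow data recorded by $M^{T'}_s$ --- over the states of $T'$ and the indexed variables $X_p$ --- is entirely determined by the state-and-flow data recorded by $M^{(T,A)}_s$ over configurations $(q,P)$ and variables $X$ (the look-ahead index $p$ merely re-packages the $2^{Q_A}$-component already present in a configuration, and its propagation under reading $s$ is governed by $\delta_A$, whose behaviour is itself a projection of $M^{(T,A)}_s$; this is what prevents the subset bookkeeping from reintroducing periodicity). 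This yields a well-defined map $h : M_{(T,A)} \to M_{T'}$ with $h(M^{(T,A)}_s) = M^{T'}_s$ for every $s$; checking that $h$ respects the matrix-multiplication product makes $h$ a monoid morphism, so $M_{T'} = h(M_{(T,A)})$ is the homomorphic image of an aperiodic monoid and is therefore aperiodic. The delicate part is precisely the well-definedness and multiplicativity of $h$: one must show that the $T'$-level flow between indexed copies is reconstructible from, and composes in lock-step with, the \sstla{}-level flow --- i.e. that combining the per-thread flows of $s_1$ and $s_2$ through the $\delta_A$-driven re-indexing agrees with the flow of $s_1s_2$. Once $h$ is shown to be a morphism, both aperiodicity and (as above) $1$-boundedness transfer immediately, completing the proof.
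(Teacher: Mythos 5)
Your high-level plan is the right one and matches the paper's: simulate $(T,A)$ with indexed copies of the variables, characterise $M_{T'}$ in terms of $M_{(T,A)}$, and transfer aperiodicity and $1$-boundedness. Your homomorphic-image framing of the aperiodicity step is sound in principle: once one knows that $M^{T'}_{s}$ is a function of $M^{(T,A)}_{s}$, the induced map is automatically a monoid morphism and aperiodicity transfers. But establishing that functional dependence is exactly the hard part, and your description of the construction has concrete gaps that prevent it.

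First, a plain \sst{} cannot carry ``the current configuration $(q,P)$'' in its state: which transition of $(T,A)$ fires at a given position depends on the suffix, so $T'$ must track the \emph{set} of configurations reachable so far; the paper takes $Q' = 2^{\textsf{useful}(T,A)}$ and, correspondingly, indexes the variable copies by useful configurations $(q,P)$ rather than by look-ahead states $p\in Q_A$ (a single $p$ does not identify a speculative thread across several steps). Second, and most importantly, you do not say what happens when two speculative threads converge to the same configuration $(q',P')$. This is unavoidable, and if both threads' updates were kept, either the number of variable copies would have to grow unboundedly or a single source copy could flow twice into the same target copy, destroying $1$-boundedness --- this is precisely the ``point to check carefully'' that you flag but do not resolve. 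The paper resolves it by fixing a total order on $\textsf{useful}(T,A)$ and letting $X_{(q',P')}$ inherit only the update coming from the \emph{minimal} predecessor configuration (sound because two runs of a deterministic \sstla{} reaching the same configuration produce the same output from then on). This minimal-ancestor selection is what underlies the paper's Claim characterising $M_{T'}$ --- and hence both your coefficient-bounding argument for $1$-boundedness and the well-definedness of your map $h$; without it the argument does not close.
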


\subsection{From FOT to SST with look-ahead} 
\label{fot:sstla}
The main complication in showing this construction is that FO-transducers are
descriptional i.e. they describe the function using logical formulas, while
streaming string transducers are computational as they compute
the output string by reading the input string in one left-to-right pass of the
input string.
Our goal is to construct an SST from an FO-transducer in such a way that after
reading the string till the position $i$ the variables in the SST will store the
substrings of the output corresponding to positions up to index $i$ in different
variables, and to devise an update function for these variables to keep this invariant. 

For instance, consider the FO-transduction shown in Figure~\ref{fig:foexample} 
till position $3$. 
Assume  we omit the positions and edges of the output graph post position $3$.  
%then we have an edge connecting $3^1$ and $3^2$. 
Upto position 3, the output
graph consists of two strings: the first string is between the positions $1^1$
and $3^1$ and stores $aa$, while the second string is between  positions
$3^2$  and $2^3$ and stores the string $abab$. 
Let us assume that these strings are stored in variables $X_1$ and $X_2$,
respectively. 
When we read the next letter of the string at position $4$, we need to update
these variables so as to append the letter $a$ in the string stored in variable
$X_1$, while prepend the letter $a$ to the string stored in variable $X_2$ using
perhaps the following updates: $X_1 {:=} X_1 a$ and $X_2 {:=} a X_2$.
The next goal here is to identify the beginning  (``$i$-head'') and the
ending (``$i$-tails'') points of these output sub-string before the position $i$,
and update them as we process the input string.  
In this section we show that these sub-strings can be uniquely identified using
the $k$-types of a suitable decomposition of the input string.

\vspace{3mm}
\noindent {\bf Heads and tails of output substrings}.
We fix an FO transducer $T = (\Sigma, \Gamma, \phi_{\dom}, C, \phi_{\positions},
\phi_{\preceq})$ and let $k$ be its quantifier rank.
Let $s\in\Sigma^*$ and $j\in dom(s)$. For all copies $c\in C$, we denote by 
$j^c$ the $c$th copy of the input $j$ position, and say that $j^c$ is
\emph{alive} if it contributes to the output string, i.e.  there exists some
$\gamma\in\Gamma$ such that $s {\models} \phi_\gamma^c(j)$.  
For instance, on Fig.\ref{fig:foexample}, alive positions are in bold.
This can be defined in FO.

For $j \leq i \in \dom(s)$, we call a position $j^c$ an $i$-head if $j^c$ is
alive and there is no incoming edge to $j^c$ that comes from some position $l^d$
for some position $l \leq i$ and some $d\in C$. 
Formally, $j^c$ is an $i$-head if $s{\models} \text{head}_c(i,j)$ where
$\text{head}_c(x,y)$ is the following FO-formula:
\[
\text{head}_c(x,y) \rmdef 
y\preceq x\wedge \text{alive}_c(y)\wedge  
\neg \exists z\cdot z \preceq x\wedge  \bigvee_{d\in C} 
\text{alive}_d(z) \wedge \phi^{d,c}_\suc(z,y)
\]
where $\phi^{d,c}_\suc(z,y)$ defines the (output) successor relation
(it is FO-definable using $\Phi^{d,c}_\preceq$). The notion of \emph{$i$-tail} can be defined similarly. Formally, $j^c$ is an $i$-tail if 
$s{\models} \text{tail}_c(i,j)$ where $\text{tail}_c(x,y) = y\preceq x\wedge \text{alive}_c(y)\wedge 
\neg \exists z\cdot z\preceq x\wedge \bigvee_{d\in C}
\text{alive}_d(z) \wedge \phi^{c,d}_\suc(y,z)$.

The following lemma (proof in Appendix \ref{app:fot-sstla}) states for all strings $s$, all $i\in dom(s)$, an $i$-tail
or an $i$-head, $j^c$, is uniquely determined by the 
$k$-type of the string $s[1{:}j)$, 
$k$-type of the string $s[j{:}i)$, $k$-type of the string $s(i{:}|s|]$, the symbol $s[j]$, 
and the corresponding copy $c$.  

\begin{lemma}
  \label{lem:addr}
Let $s\in\Sigma^*$, $i\in dom(s)$, $c\in C$, and $a\in\Sigma$. Let
$j_1,j_2\in dom(s)$. Then $j_1=j_2$ if:
$(1)$ $j_1<i$ and $j_2<i$, $(2)$ $s[j_1] = s[j_2] = a$, $(3)$ $s[1{:}j_1) \equiv_{k+2} s[1{:}j_2)$,
$(4)$ $s(j_1{:}i] \equiv_{k+2} s(j_2{:}i]$, and $(5)$
$j_1^c$ and $j_2^c$ are either both $i$-tails or both $i$-heads.
\end{lemma}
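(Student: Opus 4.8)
The plan is to argue by contradiction, exploiting that the properties ``$j^c$ is an $i$-head'' and ``$j^c$ is an $i$-tail'' are already captured by the first-order formulas $\text{head}_c(x,y)$ and $\text{tail}_c(x,y)$ defined above. Assume $j_1\neq j_2$, and without loss of generality that $j_1<j_2$; by condition~(5) I may treat the case where both $j_1^c$ and $j_2^c$ are $i$-heads, the $i$-tail case being symmetric (it swaps the direction of the relevant output edges, i.e.\ uses $\phi^{c,d}_\suc$ in place of $\phi^{d,c}_\suc$). Recall that $j^c$ being an $i$-head means that $j^c$ is alive and has \emph{no} incoming output edge from any alive position $z^d$ with $z\preceq i$. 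The goal is to exhibit exactly such an incoming edge for the later head $j_2^c$, from a source position $\leq i$, contradicting the assumption that $j_2^c$ is an $i$-head.

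The main tool is the indistinguishability property (Proposition~\ref{prop:ktype}.\ref{prop:indis}): since $s[1{:}j_1)\equiv_{k+2}s[1{:}j_2)$, $s(j_1{:}i]\equiv_{k+2}s(j_2{:}i]$, $s[j_1]=s[j_2]=a$, and the suffix $s(i{:}|s|]$ is literally shared (both positions live in the same string $s$), the two positions look identical to every FO-formula of quantifier rank at most $k$ that marks them together with $i$. First I would glue the shared suffix onto condition~(4) using type composition (Proposition~\ref{prop:ktype}.\ref{prop:compo}), obtaining $s(j_1{:}|s|]\equiv_{k+2}s(j_2{:}|s|]$; this already transfers aliveness, so $j_1^c$ is alive iff $j_2^c$ is. Next I would form the aligned string $s''=s[1{:}j_1)\,s[j_2{:}|s|]$, which collapses the two marked positions onto a single position $p$ carrying the same prefix-, middle-, and suffix-types as both $j_1$ and $j_2$. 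By composition $s''\equiv_{k+2}s$, so $s''\in\dom(T)$ and its output structure is well defined. Since $s''[1{:}p)=s[1{:}j_1)$ holds \emph{exactly}, every rank-$\leq k$ edge fact about the prefix-neighbourhood of $p$ in $s''$ transports \emph{exactly} to $j_1$ in $s$ and, via $\equiv_{k+2}$, to $j_2$ as well; hence $j_1^c$ and $j_2^c$ have identical incoming-edge behaviour with respect to their common prefix. This already settles the subcase where a violating edge comes from a prefix position: it would then violate head-ness of \emph{both}, contradicting that each is a head.

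The delicate point, and the main obstacle, is twofold. The quantifier-rank bookkeeping is the first difficulty: the formula $\text{head}_c(x,y)$ has quantifier rank $k+1$, because its outer $\neg\exists z$ quantifies over the successor formula $\phi^{d,c}_\suc$, which itself already costs the extra quantifier accounted for in $qr(T)=k$. Consequently I \emph{cannot} transport ``being an $i$-head'' directly, as that would demand $\equiv_{k+3}$ rather than the available $\equiv_{k+2}$; the argument must be phrased purely in terms of \emph{aliveness} and \emph{edge} facts, which have rank at most $k$ and are exactly transferable under $\equiv_{k+2}$ (the ``$+2$'' being the buffer for the two free variables in Proposition~\ref{prop:ktype}.\ref{prop:indis}), with head-ness entering only as the hypothesis to be contradicted. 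The second and harder difficulty is that an edge making $j_2^c$ non-head could originate from a position \emph{strictly between} $j_1$ and $j_2$, a region erased by the prefix-alignment $s''$ and hence invisible to it. To rule this out I expect to use the fact that condition~(4) says precisely that prepending the block $s(j_1{:}j_2]$ to $s(j_2{:}i]$ is type-neutral, i.e.\ $s(j_1{:}j_2]\,s(j_2{:}i]\equiv_{k+2}s(j_2{:}i]$; combining this absorption with the aperiodicity of type composition (Proposition~\ref{prop:ktype}.\ref{prop:ape}) lets one pump this block and show that no such intermediate source edge is consistent with both endpoints being $i$-heads, forcing $j_1=j_2$.
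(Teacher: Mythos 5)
There is a genuine gap: your proof never actually produces the edge from which the contradiction is supposed to follow. You announce that the goal is to exhibit an incoming edge to $j_2^c$ from a source position $\leq i$, but no mechanism is given for why such an edge should exist --- indeed, by the very hypothesis that $j_2^c$ is an $i$-head, no such edge exists, so nothing can force one into existence; your ``subcase where a violating edge comes from a prefix position'' and the edge ``originating strictly between $j_1$ and $j_2$'' are both phantoms that the head hypothesis already excludes. The missing idea is to look at the \emph{other} side of $i$: since the output of $T$ on $s$ is a single directed path and both $j_1^c$ and $j_2^c$ lie on it, at most one node of the whole path has in-degree $0$, so at least one of them, say $j_1^c$, has an incoming edge $(j_3^d, j_1^c)$, and head-ness forces its source to satisfy $j_3 > i$. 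The paper then decomposes $s$ as $s_1\, a\, s_2\, s[j_3]\, s_3$ and $s'_1\, a\, s'_2\, s[j_3]\, s'_3$ with $s_1 = s[1{:}j_1)$, $s'_1 = s[1{:}j_2)$, $s_2 = s(j_1{:}i]\,s(i{:}j_3)$, $s'_2 = s(j_2{:}i]\,s(i{:}j_3)$ and $s_3 = s'_3$, checks $s_\ell \equiv_{k+2} s'_\ell$ using conditions (3)--(4) and Proposition~\ref{prop:ktype}.\ref{prop:compo}, and applies Proposition~\ref{prop:ktype}.\ref{prop:indis} to the rank-$\leq k$ formula $\phi_\suc^{d,c}$ to conclude $s\models\phi_\suc^{d,c}(j_3,j_2)$ as well. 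The contradiction is then that $j_3^d$ has two distinct successors, violating the string property of the output --- not that either node fails to be a head.

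Two further problems. First, your transfer of ``edge facts'' via the auxiliary string $s''=s[1{:}j_1)\,s[j_2{:}|s|]$ is not licensed by Proposition~\ref{prop:ktype}.\ref{prop:indis}: that proposition compares two strings whose marked decompositions have componentwise equivalent pieces, and a two-free-variable edge formula needs \emph{both} endpoints marked with matching segment types; facts about the output structure of the different string $s''$ do not automatically say anything about the output structure of $s$. Second, your closing step --- absorbing the block $s(j_1{:}j_2]$ and ``pumping'' via Proposition~\ref{prop:ktype}.\ref{prop:ape} --- is only a statement of intent (``I expect to use\dots lets one show\dots'') with no argument behind it, and it is not needed once the edge with source $j_3>i$ is in hand. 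Your observation that head-ness itself has too high a quantifier rank to transfer and that one must work with aliveness and single-edge formulas of rank $\leq k$ is correct and is indeed how the paper proceeds, but the overall architecture of your argument does not reach a contradiction.
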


As a corollary, the number of $i$-tails
and $i$-heads is bounded by a constant that only depends on the transducer $T$.
\begin{corollary}
For all $s\in\Sigma^*$, all $i\in dom(s)$ and $c\in C$, the number
of $i$-tails and $i$-heads is bounded by $|\Theta_{k+2}|^2.|\Sigma|.|C|$.
\end{corollary}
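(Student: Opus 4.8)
The plan is to read off the bound directly from Lemma~\ref{lem:addr} by exhibiting an injective encoding of the $i$-tails (and, separately, of the $i$-heads) into a finite set whose cardinality is exactly $|\Theta_{k+2}|^2{\cdot}|\Sigma|{\cdot}|C|$. Fix $s\in\Sigma^*$ and $i\in\dom(s)$, and consider the set of all $i$-tails $j^c$ ranging over all positions $j$ and all copies $c\in C$. To each such $i$-tail I associate the tuple
\[
\Phi(j^c)\rmdef\big(s[j],\ \ktype{s[1{:}j)}{k+2},\ \ktype{s(j{:}i]}{k+2},\ c\big)\in\Sigma\times\Theta_{k+2}\times\Theta_{k+2}\times C.
\]
The idea is that the hypotheses of Lemma~\ref{lem:addr} are precisely the coordinates of $\Phi$, so collisions of $\Phi$ are exactly the situations in which the lemma forces equality of positions.

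First I would check that $\Phi$ is injective on the set of $i$-tails. Suppose $\Phi(j_1^{c_1})=\Phi(j_2^{c_2})$ for two $i$-tails with $j_1,j_2<i$. Comparing the last coordinates gives $c_1=c_2=:c$; comparing the first gives $s[j_1]=s[j_2]=:a$; and equality of the two middle coordinates is exactly $s[1{:}j_1)\equiv_{k+2} s[1{:}j_2)$ and $s(j_1{:}i]\equiv_{k+2} s(j_2{:}i]$. These are conditions $(2)$--$(4)$ of Lemma~\ref{lem:addr}; condition $(1)$ holds as $j_1,j_2<i$, and condition $(5)$ holds since both $j_1^c$ and $j_2^c$ are $i$-tails. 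The lemma yields $j_1=j_2$, hence $j_1^{c_1}=j_2^{c_2}$. Thus $\Phi$ is injective, and the number of $i$-tails with $j<i$ is at most $|\Sigma|{\cdot}|\Theta_{k+2}|^2{\cdot}|C|$. Replacing ``$i$-tail'' by ``$i$-head'' throughout (and using condition $(5)$ in its head form) bounds the number of $i$-heads by the same quantity.

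Since the corollary is a pure counting consequence of Lemma~\ref{lem:addr}, there is no genuine obstacle here; all the substance lives in the lemma itself. The only point that needs a line of care is the boundary case $j=i$, which the lemma excludes via condition $(1)$. I would dispose of it by extending $\Phi$ to such positions and observing that $\ktype{s(i{:}i]}{k+2}=\ktype{\epsilon}{k+2}$, a $(k+2)$-type distinct from the type of every nonempty string (it is separated by the sentence $\exists x.\,x{=}x$, of quantifier rank $1\le k+2$). Consequently the image of a $j{=}i$ position can never coincide with that of any $j{<}i$ position, while distinct $j{=}i$ positions are already separated by the copy coordinate; hence $\Phi$ remains injective over \emph{all} $i$-tails (resp.\ $i$-heads) and the stated bound holds with no additive correction.
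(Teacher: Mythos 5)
Your proof is correct and is essentially the argument the paper intends: the corollary is stated as an immediate consequence of Lemma~\ref{lem:addr}, and the paper's subsequent notion of an \emph{address} in $\Theta_{k+2}^2\times\Sigma\times C$ is exactly your injective map $\Phi$. Your extra treatment of the boundary case $j=i$ (excluded by condition $(1)$ of the lemma but allowed by the definition of $i$-heads/$i$-tails) is a point the paper silently glosses over, and your resolution via the distinguishability of $\ktype{\epsilon}{k+2}$ from the type of any nonempty string is sound.
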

Lemma~\ref{lem:addr} hints at a unique way to name a sub-string computed till
position $i$ by the unique address of its $i$-head $j^c$, as the tuple
$(\ktype{s[1{:}j)}{k+2}, \ktype{s(j{:}i]}{k+2}, s[j], c)$.
An \emph{address} is defined as a tuple $\addr\in \Theta_{k+2}^2\times
\Sigma \times C$. We denote by $\tau_1(\addr)$,  $\tau_2(\addr)$, $a(\addr)$, and $c(\addr)$ the
projections of $\addr$ on the first, second, third, and fourth components,
respectively.  The set of addresses is denoted by $\Addr_{T}$. 
  
\begin{figure}
\begin{center}
\begin{tikzpicture}[->,>=stealth',shorten >=1pt,auto,scale=0.9, every node/.style={scale=0.9}]

\tikzstyle{graphnode}=[circle,fill=black,thick,inner
sep=0pt,minimum size=1.5mm]

\node [draw=none] (input) at (-1,0.1) {\begin{scriptsize}input string\end{scriptsize}} ;
\node [draw=none] (input) at (-1,-0.2) {\begin{scriptsize}$s$\end{scriptsize}} ;
\node [draw=none] (input) at (-1,-0.7) {\begin{scriptsize}output string\end{scriptsize}} ;
\node [draw=none] (input) at (-1,-1) {\begin{scriptsize}graph\end{scriptsize}} ;

\node [graphnode] (j1) at (2,0) {} ;
\node  (j1) at (2,-0.3) {$j$} ;
\node  (j2) at (3,-0.3) {$j'$} ;
\node  (i) at (6,-0.3) {$i$} ;

\node  (j1) at (2,0.3) {$a(\addr)$} ;

\draw[|-(] (0,0) -- node [above] {$s_1$} (1.8,0) ;
\draw[)-)] (2.2,0) -- node [above] {$s_2$} (6.1,0) ;
\draw[-|] (6,0) -- node [above] {$s_3$} (7,0) ;

\node [graphnode] (j1c) at (2,-2) {} ;
\node [graphnode] (j2c) at (3,-3) {} ;

\draw [->,snake=snake] (j1c) -- (3,-3);

\draw [-,densely dotted] (i) -- (6,-3.5);

\draw [-,dotted] (2,-0.5) --   (j1c);
\draw [-,dotted] (0,-2) --  node [above] {$\tau_1(\addr)$}
node [below] {$(=[s_1]_k)$} (j1c);
\draw [-,dotted] (j1c) --  node [above] {$\tau_2(\addr)$}
node [below] {$(=[s_2]_k)$} (6,-2);

\draw [-,dotted] (3,-0.5) --   (j2c);
\draw [-,dotted] (0,-3) --  (j2c);
\draw [-,dotted] (j2c) --  (6,-3);

\fill[blue!20,fill opacity=0.2] (0,-0.5) rectangle (7,-3.5);

\node  at (-0.8, -2) {\begin{scriptsize}copy $c(\addr)$\end{scriptsize}};
\node  at (-0.8, -3) {\begin{scriptsize}copy $c'$\end{scriptsize}};

\end{tikzpicture}
\end{center}

\caption{\label{fig:address} Head $\absaddr(s,i,\alpha) = (j, c(\alpha))$ and tail $\tailaddr(s, i,
\alpha) = (j', c')$ for an address $\alpha\in\Addr_T$.}

\end{figure}

As a consequence of Lemma \ref{lem:addr}, given a string $s \in\Sigma^*$ and a
position $i\in \dom(s)$, any address $\addr\in\Addr_{T}$ defines at most one
$i$-tail or $i$-head in $s$. 
The head $\absaddr(s, i, \alpha)$ of an address  $\addr\in\Addr_{T}$ at
position $i$ in some input string $s \in \Sigma^*$ is \emph{the} position
$(j, c) \in \dom(s)\times C$ in the output structure s.t. 
$s {\models} \text{head}_c(i,j)$, % \vee \text{tail}_c(i,j)$
$\tau_1(\addr) =\ktype{s[1{:}j)}{k+2}$,
$a(\addr) = s[j]$,
$\tau_2(\addr) = \ktype{s(j{:}i]}{k+2}$, and
$c(\addr) = c$ (By Lemma \ref{lem:addr}, $(j,c)$ is indeed unique).
If these conditions are not satisfied, then we say that $\absaddr(s,i,\addr)$ is
undefined.  
Similarly, the tail $\tailaddr(s, i, \alpha)$ of an address
$\addr\in\Addr_T$ at position $i$ in $s \in \Sigma^*$ is defined if 
(i) 
there exists 
some $(j',c')$ such that $(j', c')
= \absaddr(s, i, \alpha)$ is defined, and (ii) $\tailaddr(s, i, \alpha)$ is the position $(j, c) \in
\dom(s)\times C$ in the output structure such that  
$s {\models} \text{tail}_c(i,j)\wedge\phi_{\preceq}^{c', c}(j', j)$,
and for all $c''\in C$, all $j''>i$, $s\not\models \phi_{\preceq}^{c',
  c''}(j', j'')\wedge\phi_{\preceq}^{c'', c}(j'',j)$ (i.e. the path
from $(j',c')$ to $(j,c)$ only consists of positions $(j'',c'')$ such
that $j''\leq i$).

Fig.~\ref{fig:address} illustrates the notions of $i$-head and $i$-tail of an 
address. It represents an output position $j^{c(\alpha)}$ which is the head of the
address $\addr\in\Addr_{T}$ at position $i$ in string $s$. 
The input string $s$ is decomposed as $s = s_1 (a(\alpha)) s_2 s_3$
such that $[s_1]_k = \tau_1(\addr)$ and $[s_2]_k = \tau_2(\addr)$. 
From the definition it is clear that the heads and the tails of addresses  are 
FO-definable.  The proof of Lemma \ref{lem:addrfo} can be found in Appendix \ref{app:addrfo}.
\begin{lemma}\label{lem:addrfo}
  The functions $\absaddr$ and $\tailaddr$ are FO-definable, i.e. given $\addr{\in} \Addr_{T}$ and
  a copy $c\in C$, there
  exist two FO-formula $\Phi_{\absaddr(\addr)}^c(x,y)$ (of quantifier
  rank at most $k+2$) and
  $\Phi_{\tailaddr(\addr)}^c(x, y)$ (of quantifier rank at most $k+3$)
  such that for all $s{\in}\Sigma^*$ and $i,j{\in}
  \dom(s)$, $s{\models} \Phi_{\absaddr(\addr)}^c(i,j)$ iff
  $\absaddr(s,i,\addr)$ is defined and $\absaddr(s,i,\addr) {=} (j,
  c)$, and, $s{\models} \Phi_{\tailaddr(\addr)}^c(i,j)$ iff
  $\tailaddr(s,i,\addr)$ is defined and $\tailaddr(s,i,\addr) {=} (j, c)$.
\end{lemma}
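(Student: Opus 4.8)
The plan is to read off the definitions of $\absaddr$ and $\tailaddr$ directly as first-order formulas, reusing the formulas $\text{head}_c(x,y)$ and $\text{tail}_c(x,y)$ (already shown to be FO-definable through $\text{alive}_d$ and the FO-definable output successor $\phi^{d,c}_\suc$), and encoding the three $(k{+}2)$-type constraints carried by an address $\addr = (\tau_1,\tau_2,a,c)$ by means of the Hintikka-style formulas of Proposition~\ref{prop:hintikka}. The only genuine work is to turn the constraints ``$\ktype{s[1{:}j)}{k+2} = \tau_1$'' and ``$\ktype{s(j{:}i]}{k+2} = \tau_2$'' into FO-formulas over the free variables $x{=}i$ and $y{=}j$, and to track quantifier ranks so as to meet the stated bounds $k{+}2$ (head) and $k{+}3$ (tail).

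For the prefix constraint I would relativize the sentence $\Phi_{\tau_1}$ of Proposition~\ref{prop:hintikka}.1 (instantiated at $k{+}2$) to the positions strictly below $y$: replace every $\exists z\,\psi$ (resp.\ $\forall z\,\psi$) occurring in $\Phi_{\tau_1}$ by $\exists z\,(z\prec y\wedge\psi)$ (resp.\ $\forall z\,(z\prec y\to\psi)$). Call the result $\Psi^{\mathrm{pre}}_{\tau_1}(y)$; it has quantifier rank at most $k{+}2$ and holds iff the $(k{+}2)$-type of $s[1{:}j)$ is $\tau_1$, the empty-prefix case $y{=}1$ being handled correctly since the guard is then never satisfied and the sentence is evaluated on the empty structure. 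Identically, guarding the quantifiers of $\Phi_{\tau_2}$ by $y\prec z\preceq x$ yields a formula $\Psi^{\mathrm{mid}}_{\tau_2}(x,y)$ of rank at most $k{+}2$ asserting $\ktype{s(j{:}i]}{k+2}=\tau_2$, with the empty window $j{=}i$ again handled by the empty-structure semantics. For the copy $c = c(\addr)$ I then set
\[
\Phi^{c}_{\absaddr(\addr)}(x,y) \;\rmdef\; \text{head}_c(x,y)\wedge L_{a(\addr)}(y)\wedge\Psi^{\mathrm{pre}}_{\tau_1(\addr)}(y)\wedge\Psi^{\mathrm{mid}}_{\tau_2(\addr)}(x,y),
\]
and take $\Phi^{c}_{\absaddr(\addr)}$ to be unsatisfiable when $c\neq c(\addr)$. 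The four conjuncts are precisely conditions $(2)$--$(5)$ of Lemma~\ref{lem:addr} together with the requirement that $j^c$ be an $i$-head, so correctness is immediate from the definition of $\absaddr$, and Lemma~\ref{lem:addr} guarantees that at most one $j$ satisfies them; hence $s\models\Phi^{c}_{\absaddr(\addr)}(i,j)$ iff $\absaddr(s,i,\addr)=(j,c)$.

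For the tail I would build on the head formula. Since $\tailaddr(s,i,\addr)$ is defined only when the head $(j',c')=\absaddr(s,i,\addr)$ exists, I existentially bind the head position and assert the remaining requirements:
\[
\Phi^{c}_{\tailaddr(\addr)}(x,y) \;\rmdef\; \exists x'.\,\Phi^{c(\addr)}_{\absaddr(\addr)}(x,x')\wedge\text{tail}_c(x,y)\wedge\phi^{c(\addr),c}_{\preceq}(x',y)\wedge\theta(x',y),
\]
where $\theta(x',y)\rmdef\neg\exists z.\,x\prec z\wedge\bigvee_{c''\in C}\phi^{c(\addr),c''}_{\preceq}(x',z)\wedge\phi^{c'',c}_{\preceq}(z,y)$ expresses that the output path from the head to the tail passes through no position of index exceeding $i$. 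This is a verbatim transcription of the defining conditions of $\tailaddr$, so correctness again follows from the definitions and the uniqueness supplied by Lemma~\ref{lem:addr}.

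The step I expect to require the most care is the quantifier-rank accounting, since the stated bounds are tight. The predicates $\text{head}_c$ and $\text{tail}_c$ have rank $k{+}1$: the output successor $\phi^{d,c}_\suc$ has rank $k$ (one quantifier above the rank-$(k{-}1)$ order formulas, which is exactly why $qr(T)$ adds $1$), and the outer $\neg\exists z$ contributes one more. As guarding introduces no new quantifiers, $\Psi^{\mathrm{pre}}$ and $\Psi^{\mathrm{mid}}$ keep rank $k{+}2$, so the head formula has rank $\max(k{+}1,k{+}2)=k{+}2$. For the tail, the outer $\exists x'$ is applied to the rank-$(k{+}2)$ head formula, giving $k{+}3$, which dominates both the rank-$k$ subformula $\theta$ and the rank-$(k{+}1)$ predicate $\text{tail}_c$; hence rank $k{+}3$. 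The remaining delicate point, already addressed above, is ensuring the relativized type sentences behave correctly on empty prefixes and empty windows via the empty-structure semantics.
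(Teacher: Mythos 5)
Your proposal is correct and follows essentially the same route as the paper's own proof: relativize the Hintikka sentences $\Phi_{\tau_1}$ and $\Phi_{\tau_2}$ by guarding quantifiers with $z\prec y$ and $y\prec z\preceq x$, conjoin with $\text{head}_c(x,y)$ and $L_{a(\addr)}(y)$ for the head, and for the tail existentially bind the head position and assert reachability in the output order while forbidding intermediate positions beyond $x$. Your quantifier-rank accounting is consistent with (in fact slightly tighter than) the paper's, and the explicit handling of $c\neq c(\addr)$ and of empty prefixes/windows are harmless refinements.
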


\subsection*{SST Construction}
\label{sst-cons}
Given an FOT, to obtain the corresponding $\sst{}$,  
we define the set of SST variables $\varsst = \set{X_\alpha \::\: \alpha \in
  \Addr_{T}}$. 
While reading a string, we will maintain the invariant that after reading the
position $i$ of the input string $s$, the  variable $X_\alpha$ will store the
output substring rooted at position $j^c = \absaddr(s, i, \alpha)$ iff $j^c$ is
an $i$-head, otherwise the variable $X_\alpha$ will contain $\epsilon$. 

The next challenge is to show how to update these string variables.  
There are several cases to consider depending on the new direct edges in the
output graph from some copy in the current position to a head or a tail of a
variable relative to the previous position, or vice-versa. 
In general, for a variable $X_\alpha$ we have an update rule $X_\alpha = \gamma
X_{\alpha_1} \gamma_1 X_{\alpha_2} \ldots X_{\alpha_n} \gamma_n$ such that $|\gamma
X_{\alpha_1} \gamma_1 X_{\alpha_2} \ldots X_{\alpha_n} \gamma_n| \leq |C|$.  Thus,
 there are only a bounded number of updates to consider. 
Given a string and a position $i \in \dom(s)$ we can write an FO-formula 
$\Phi_{upd}[X_\alpha {:=} \gamma X_{\alpha_1} \gamma_1 X_{\alpha_2} \ldots X_{\alpha_n}
\gamma_n](i)$ of quantifier rank at most $k+5$  which characterizes the update
corresponding to the current position.
We briefly sketch some update formulas. For instance,  
\begin{enumerate}
\item 
$s {\models} \Phi_{upd}[X_\alpha{:=} \epsilon](i)$ if $\absaddr(s, i, \alpha)$ is
  not defined;  
  \item  $s {\models} \Phi_{upd}[X_\alpha {:=} X_{\alpha'}](i)$ if both $\absaddr(s, i, \alpha)$ and $\absaddr(s, i{-}1,
  \alpha')$ are defined and are equal to each-other and $\tailaddr(s, i, \alpha)
  = \tailaddr(s, i{-1}, \alpha')$;  
\item  $s {\models} \Phi_{upd}[X_\alpha {:=} \gamma X_{\alpha'} \gamma' X_{\alpha''}
  \gamma''](i)$ if $\absaddr(s, i, \alpha)$ is defined,     
  both $\absaddr(s, i{-}1, \alpha')$ and $\absaddr(s, i{-}1, \alpha'')$ are defined, 
    $\tau_2(\alpha) {=} \ktype{\epsilon}{k}$ and  there is an edge in the output
    structure from $i^{c(\alpha)}$ to $\absaddr(s, i-1, \alpha')$,
    the  label of the node $i^{c(\alpha)}$ is $\gamma$,  
    there is a copy $c'$ such that the position $i^{c'}$ is labeled $\gamma'$
    and $i^{c'}$ has a direct edge from $\tailaddr(s, i-1, \alpha')$ and $i^{c'}$ has a direct edge to
    $\absaddr(s, i-1, \alpha'')$, and 
    there is copy $c''$ such that the position $i^{c''}$ is labeled $\gamma''$
    and has a direct edge from $\tailaddr(s, i-1, \alpha'')$. By reusing
    variable names, we have to use only 2 nested extra quantifiers to
    express this formula, and therefore, since any formula
    $\Phi_{\tailaddr{\beta}}^c$ has quantifier rank at most $k+3$, we
    can express this variable update by a formula of quantifier rank
    at most $k+5$. This variable update easily generalizes to longer
    concatenations of variables, while using formulas of quantifier
    rank at most $k+5$ only. 
\end{enumerate}
%i^c to HD(s,i-1,\alpha’) to TL(s,i-1,\alpha’) toi^c’ to HD(s,i-1,\alpha’’) to TL(s,i-1,\alpha’’) to i^{c’’}
We also define the look-around formula $\Phi_{\tau, a, \tau'}(i)$  that holds
for a string $s$ if the substring $s[1{:}i) \models \Phi_\tau$,
the substring $s(i{:}|s|] \models \Phi_{\tau'}$ and $s[i] = a$.

%\subsection{From FOT to aperiodic SST with lookahead}
Now we are in a position to construct an equivalent $\sstla$ $(T_\la, A)$ from a
given FOT $T = (\Sigma, \Gamma, \phi_{\dom}, C, \phi_{\positions},
\phi_{\preceq})$.  
Let $T_\la= (\Sigma, \Gamma, Q, q_0, Q_f, \delta, \varsst, \rho, F)$ be a
look-ahead \sst{} with look-ahead $A = (Q_A,\Sigma, \delta_A, P_f)$.
The look-ahead automaton $A = (Q_A,\Sigma, \delta_A, P_f)$ is constructed as
a collection of automata that capture FO sentence $\Phi_\tau$ for all $\tau \in
\Theta_{k+2}$. 
More precisely, $A = \biguplus_{\tau\in \Theta_{k+2}} A_\tau$ where $A_\tau$ is the automaton 
accepting strings of type $\tau$ as introduced in Sec.~\ref{sec:prelims}.
For convenience we assume that the states of $A$ are pairs $(\tau, \tau')$
where $\tau$ corresponds to the FO type that is checked and $\tau'$ is a state
of $A_\tau$, and write $p_\tau\in Q_A$ for the state
$(\tau,\ktype{\epsilon}{k+2})\in Q_A$. In particular, the set of strings
$s$ such that $\ktype{s}{k+2} = \tau$ equals $L(A_{p_\tau})$.
The SST $T_\la$ is a tuple $(\Sigma, \Gamma, Q, q_0, Q_f, \delta, \varsst, \rho, F)$ where  
\begin{itemize}
 \item 
  the set of states is the set of $k+2$ types, i.e. $Q =
  \Theta_{k+2}$;
\item the initial state is $q_0 = \ktype{\epsilon}{k+2}$;
 \item 
  the set of final states are the $(k+2)$-types that implies the domain
  formula $\phi_{\dom}$ (on strings), i.e. $Q_f = \set{\tau \::\: \tau\models \phi_{\dom}}$;
 \item 
  the transition function $\delta: Q \times \Sigma \times Q_A \to Q$ is
  defined such that $\delta(\tau, a, p_{\tau'})=\tau''$ where
  $\tau''=\tau.\ktype{a}{k}$;
 \item 
  the set of variables is defined as $\varsst = \set{X_\alpha \::\: \alpha \in
    \Addr_T}$; 
 \item
  the output function is simply the concatenation of all the variables since 
  after reading the whole string only a unique address is alive,
  i.e. all the variables except the variable corresponding to that address must
  be empty, i.e. $F(q) = \prod_{X \in \varsst} X$; and
 \item
  the update function $\rho: \delta \to \Ss_{\varsst, \Gamma}$ is defined by 
  $\rho(\tau, a, p_{\tau'})(X_\alpha) {:=} \gamma X_{\alpha_1} \gamma_1 X_{\alpha_2}
  \gamma_2 \ldots  X_{\alpha_n} \gamma_{n}$ if $n\leq C$ and the following
  formula is valid (on strings, which is decidable):
%  \[
 $ \forall x. \left(\Phi_{\tau, a, \tau'}(x) \to
    \Phi_{upd}[X_\alpha{:=}\gamma X_{\alpha_1} \gamma_1 X_{\alpha_2}
  \gamma_2 \ldots  X_{\alpha_n} \gamma_{n}](x)\right).
  $ %\]
 \end{itemize}

\subsection{Aperiodicity and 1-boundedness of SST-la} %$\sstla$}
\label{sec:aperiodicSSTLA}
\begin{figure}
\begin{center}
\begin{tikzpicture}[->,>=stealth',shorten >=1pt,auto,scale=0.9, every node/.style={scale=0.9}]

\tikzstyle{graphnode}=[circle,fill=black,thick,inner
sep=0pt,minimum size=1.5mm]

\node [draw=none] (input) at (-1,0.1) {\begin{scriptsize}input string\end{scriptsize}} ;
\node [draw=none] (input) at (-1,-0.2) {\begin{scriptsize}$s$\end{scriptsize}} ;
\node [draw=none] (input) at (-1,-0.7) {\begin{scriptsize}output string\end{scriptsize}} ;
\node [draw=none] (input) at (-1,-1) {\begin{scriptsize}graph\end{scriptsize}} ;

\node [graphnode] (j1c) at (2,-2) {} ;

\node  [graphnode] (i'd) at (6,0) {} ;
\node  [graphnode] (id) at (4,0) {} ;

\node  (i') at (6,-0.3) {$i'$} ;
\node  (i) at (4,-0.3) {$i$} ;

\draw[|-|] (0,0) -- (7,0) ;

\node  (j1ct) at (2,-1.7) {$\begin{array}{c}\absaddr(s,i,\alpha) = \absaddr(s,i',\alpha')\end{array}$} ;

\node [graphnode] (tail) at (3,-3) {};

\draw [->,snake=snake] (j1c) -- (tail);

\draw [-,densely dotted] (i) -- (4,-3.5);
\draw [-,densely dotted] (i') -- (6,-3.5);

\fill[blue!20,fill opacity=0.2] (0,-0.5) rectangle (7,-3.5);

\node [graphnode] (head3) at (4.2, -2.7) {};
\node [graphnode] (tail3) at (5.7, -3.2) {};

\path[->] (tail) edge [bend right=20]  (head3) ;
\draw [->,snake=snake] (head3) -- (tail3);
\end{tikzpicture}
\begin{tikzpicture}[->,>=stealth',shorten >=1pt,auto]
\tikzstyle{graphnode}=[circle,fill=black,thick,inner
sep=0pt,minimum size=1.5mm]

% axes
\node [draw=none] (input) at (-1,0.1) {\begin{scriptsize}input string\end{scriptsize}} ;
\node [draw=none] (input) at (-1,-0.2) {\begin{scriptsize}$s$\end{scriptsize}} ;
\node [draw=none] (input) at (-1,-0.7) {\begin{scriptsize}output string\end{scriptsize}} ;
\node [draw=none] (input) at (-1,-1) {\begin{scriptsize}graph\end{scriptsize}} ;

\node  [graphnode] (i'd) at (6,0) {} ;
\node  [graphnode] (id) at (4,0) {} ;
\node  (i') at (6,-0.3) {$i'$} ;
\node  (i) at (4,-0.3) {$i$} ;
\draw[|-|] (0,0) -- (7,0) ;

% vertical dotted lines
\draw [-,densely dotted] (i) -- (4,-5.5);
\draw [-,densely dotted] (i') -- (6,-5.5);

%%%% output graph

% nodes and snakes

\node [graphnode] (head5) at (4.5, -1) {} ;
\node [graphnode] (tail5) at (5.5, -2) {} ;
\draw [->,snake=snake] (head5) -- (tail5);

\node [graphnode] (head4) at (3.5, -2) {} ;
\node [graphnode] (tail4) at (1.5, -3) {} ;
 \draw [->,snake=snake] (head4) -- (tail4);

\node [graphnode] (head3) at (5.5, -3) {} ;
\node [graphnode] (tail3) at (4.5, -3.8) {} ;
\draw [->,snake=snake] (head3) -- (tail3);

\node [graphnode] (head2) at (2,-4) {} ;
\node [graphnode] (tail2) at (3,-5) {};
\draw [->,snake=snake] (head2) -- (tail2);

\node [graphnode] (head1) at (4.2, -4.7) {};
\node [graphnode] (tail1) at (5.7, -5.2) {};
\draw [->,snake=snake] (head1) -- (tail1);

% edges from tails to heads

\path[->] (tail3) edge [bend right=20]  (head2) ;
\path[->] (tail2) edge [bend right=20]  (head1) ;
\path[->] (tail5) edge  (head4) ;
\path[->] (tail4) edge (head3) ;

% node labels

\node  at (1.2,-4) {$\begin{array}{c}\absaddr(s,i,\alpha)\end{array}$} ;
\node  at (4.5, -4.1) {$\ell^d$} ;
\node  at (5.5, -1) {$\absaddr(s,i',\alpha')$} ;

% blue rectangle
\fill[blue!20,fill opacity=0.2] (0,-0.5) rectangle (7,-5.5);

/\end{tikzpicture}
\end{center}
\vspace{-5mm}
\caption{Variable flow information is FO-definable.}
\label{combo}
\vspace{-5mm}
\end{figure}
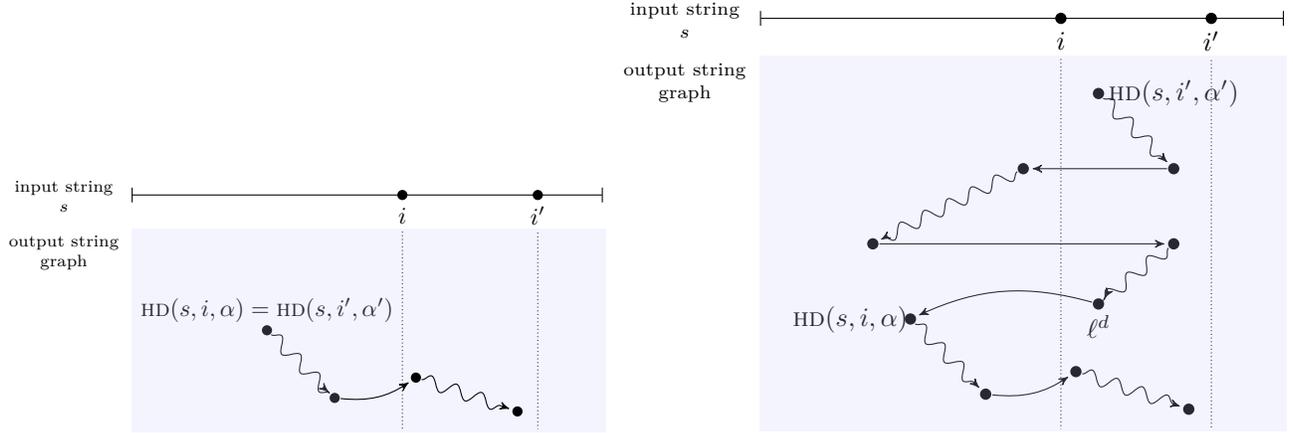

In this section, we first prove that the $\sst{}_\la$ $T_\la= (\Sigma,
\Gamma, Q, q_0, Q_f, \delta, \varsst, \rho, F)$ with look-ahead $A =
(Q_A,\Sigma, \delta_A, P_f)$ constructed in the previous section is
aperiodic and $1$-bounded, i.e., its transition monoid is 
aperiodic and $1$-bounded. Given a tuple $t =
(q,q',R,R',X_\addr,X_{\addr'},m)\in Q^2\times (2^{Q_A})^2\times
\varsst{}^2\times \mathbb{N}$, we show that the flow is FO-definable, i.e. there exists an FO-sentence $\text{flow}_t$ such that for all strings
$s\in\Sigma^*$, $s\models \text{flow}_t$ iff
$((q,R),X_\addr)\flows_{m}^s((q',R'),X_ {\addr'})$.
Then, aperiodicity of the transition monoid of $T$ will follow by Proposition~\ref{prop:ktype}.\ref{prop:ape}. 
Indeed, we know that there exists $n_0$ such that for all strings $s\in\Sigma^*$, $s^{n_0} \equiv_{b} s^{n_0+1}$ and therefore, 
$s^{n_0}\models \text{flow}_t$ iff $s^{n_0+1}\models \text{flow}_t$, where $b$ is quantifier rank of the formulas $\text{flow}_t$.  
We start with the following observation: for all strings $s\in
\dom(T_\la)$, there exists exactly one accepting run of $T_\la$ on $s$
(proved in Appendix \ref{onerun}).
We first prove a result on the state flow of $T_{la}$. 

\begin{lemma}\label{lem:stateflowb}(State Flow)
    Given two states $q,q'\in Q$, and two sets
    $R,R'\in 2^{Q_A}$. There exists an FO-formula
    $\text{sflow}_{q,q',R,R'}(x,y)$ of quantifier rank at most $k+3$ such
    that for all strings $s\in\dom(T_\la)$ of length $n\geq 1$ and any two positions
    $i,i'\in\dom(s)$, $s\models \text{sflow}_{q,q',R,R'}(i,i')$ iff
    $i\leq i'$ and the
    (unique accepting) run $r = (q_0,R_0)\dots (q_n,R_n)$ of $T_\la$ on $s$ satisfies
    $(q_{i-1},R_{i-1}) = (q,R)$ and $(q_{i'},R_{i'}) = (q',R')$.  
\end{lemma}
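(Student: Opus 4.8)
The plan is to express the two configuration constraints at $x$ and $y$ separately, using that every component of the unique accepting run is determined by $(k{+}2)$-types of substrings of $s$, which are $\FO$-definable by Proposition~\ref{prop:hintikka}. First I would pin down the shape of the unique accepting run $r=(q_0,R_0)\dots(q_n,R_n)$ on $s\in\dom(T_\la)$. Since the states of $T_\la$ are the $(k{+}2)$-types and a transition on $a$ post-composes the current type with the type of $a$, type composition (Proposition~\ref{prop:ktype}.\ref{prop:compo}) gives $q_m=\ktype{s[1{:}m]}{k+2}$; in particular $q_{i-1}=\ktype{s[1{:}i)}{k+2}$. For the look-ahead component, recall that the look-ahead selected by the $j$-th transition is the start state $p_{\tau_j}$ of $A_{\tau_j}$ with $\tau_j=\ktype{s(j{:}n]}{k+2}$ (its language is exactly $\{w:\ktype{w}{k+2}=\tau_j\}$), and that look-ahead states are propagated forward by $\delta_A$. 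As each $A_\tau$ has state set $\Theta_{k+2}$ and $\delta_A^*\big((\tau_j,\ktype{\epsilon}{k+2}),s(j{:}m]\big)=(\tau_j,\ktype{s(j{:}m]}{k+2})$, I obtain the closed form, for every $0\le m\le n$,
\[
R_m \;=\; \big\{\,(\ktype{s(j{:}n]}{k+2},\ \ktype{s(j{:}m]}{k+2}) \;:\; 1\le j\le m\,\big\}.
\]

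Next I would translate each ingredient into $\FO$. The equalities $q_{i-1}{=}q$ and $q_{i'}{=}q'$ are the guarded Hintikka sentences $\Phi_q^{\prec x}$ (all quantifiers relativized to positions $\prec x$) and $\Phi_{q'}^{\preceq y}$, each of quantifier rank $k{+}2$. Membership in the look-ahead set is then, for a pair $(\tau,\tau')\in Q_A$,
\[
(\tau,\tau')\in R_y \iff \exists \ell\,\big(\ell\preceq y \wedge \Phi_\tau^{\succ\ell}\wedge \Phi_{\tau'}^{(\ell,y]}\big),
\]
where $\Phi_\tau^{\succ\ell}$ is $\Phi_\tau$ relativized to positions after $\ell$ (the type of the suffix $s(\ell{:}n]$) and $\Phi_{\tau'}^{(\ell,y]}$ is $\Phi_{\tau'}$ relativized to $\ell\prec z\preceq y$ (the type of the infix $s(\ell{:}y]$), both of rank $k{+}2$. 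Relativization leaves the quantifier rank unchanged, so the lone existential over $\ell$ gives rank $k{+}3$. The constraint $R_y{=}R'$ is the finite conjunction $\bigwedge_{(\tau,\tau')\in R'}\big[(\tau,\tau')\in R_y\big]\wedge\bigwedge_{(\tau,\tau')\notin R'}\neg\big[(\tau,\tau')\in R_y\big]$, still of rank $k{+}3$; the constraint $R_{x-1}{=}R$ is analogous, with the infix $s(\ell{:}x)$ (relativization $\ell\prec z\prec x$) and bound $\ell\prec x$. I then set
\[
\text{sflow}_{q,q',R,R'}(x,y)\;\rmdef\; x\preceq y \wedge \Phi_q^{\prec x}\wedge \big[R_{x-1}{=}R\big]\wedge \Phi_{q'}^{\preceq y}\wedge \big[R_y{=}R'\big],
\]
whose quantifier rank is $\max(k{+}2,k{+}3)=k{+}3$, as required. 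The boundary case $x{=}1$ is handled automatically: the relativized quantifiers then range over the empty set, so $\Phi_q^{\prec 1}$ holds iff $q=\ktype{\epsilon}{k+2}$ and $\big[R_{0}{=}R\big]$ forces $R=\varnothing$, matching $R_0=\varnothing$.

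The main obstacle is the quantifier-rank budget: the type tests already cost $k{+}2$, and only one further quantifier is affordable. The plan stays within budget by using relativization (quantifier-free guards) rather than fresh quantifiers to restrict the ranges of the suffix- and infix-type tests, and by reusing a single position variable $\ell$ so that the two type tests sit inside one existential block. The only genuinely delicate, non-syntactic point is justifying the closed form for $R_m$, i.e. that the look-ahead set of the unique useful run consists exactly of the forward images of the start states $p_{\tau_j}$ ($j\le m$) and contains no spurious states; here I would invoke uniqueness of the accepting run (Appendix~\ref{onerun}) together with co-accessibility, which forces every member of $R_m$ to evolve into an accepting look-ahead state after reading $s(m{:}n]$, and hence to be $(\tau,\ktype{s(j{:}m]}{k+2})$ for $\tau=\ktype{s(j{:}n]}{k+2}$ and some $j\le m$.
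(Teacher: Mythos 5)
Your proposal is correct and follows essentially the same route as the paper's proof: both derive the closed-form characterization $q_m=\ktype{s[1{:}m]}{k+2}$ and $R_m=\{(\ktype{s(j{:}n]}{k+2},\ktype{s(j{:}m]}{k+2}): 1\le j\le m\}$ for the unique accepting run, and then express these conditions by relativized Hintikka formulas of rank $k+2$ plus a single extra quantifier over the witnessing position, yielding rank $k+3$. The only (harmless) differences are cosmetic: you phrase set equality as ``all of $R'$ and nothing outside $R'$'' where the paper uses the equivalent ``everything generated lies in $R$ and all of $R$ is generated,'' and you absorb the boundary case $x=1$ into the relativization where the paper makes an explicit three-way case split.
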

\begin{proof}[Proof (Sketch).]
By definition of $T_\la$ and its look-ahead automaton, we precisely
characterize the configurations $(q_j,R_j)$ in FO. 
For instance, the fact that the main run of
$T_{\la}$ is in $q$ at position $x$, by definition of $T_{\la}$,
is equivalent to say that the prefix up to $x$ has type $q$
(remind that $Q = \Theta_k$). It is expressible in FO by a formula
with one free variable $x$ obtained by guarding all quantifiers of
any variable $z$ in $\Phi_\tau$ by $z\preceq x$, where $\Phi_\tau$
has been defined in Prop.~\ref{prop:hintikka}. The full proof is in
Appendix \ref{stateflowb}. 
\end{proof}

The flow between variables is characterized by the following property.

\begin{lemma}(Variable Flow)
\label{var-add}
Let $X_{\addr},X_{\addr'}\in\varsst{}$ be two variables, $s\in
\dom(T_\la)$ a string of length $n\geq 1$ and $i\leq i'\in dom(s)$ two positions of $s$. 
Let $r = (q_0,R_0)\dots (q_n,R_n)$ be the accepting run of $T_\la$ on
$s$. Then $(q_{i-1},R_{i-1},X_{\addr})\flows^{s[i{:}i']}_m
(q_{i'},R_{i'},X_{\addr'})$ for some $m\geq 1$ iff %
 % \begin{enumerate}
    (1) $\absaddr(s,i,\addr)$ and $\absaddr(s,i',\addr')$ are both defined;
     (2) there is a path from $\absaddr(s,i',\addr')$ to
       $\absaddr(s,i,\addr)$ such that each node $(j'',c'')$ of this path
       is such that $j''\leq i'$. Formally, 
       if $\absaddr(s,i,\addr) {=} (j,c)$ and $\absaddr(s,i',\addr') {=} (j',c')$, 
       then $s\models \phi^{c',c}_\preceq(j',j)$ and, for all $c''\in C$ and 
       all $j''\leq |s|$, if  
       $s\models \phi^{c',c''}_\preceq(j',j'')\wedge
       \phi^{c'',c}_\preceq(j'',j)$, then $j''\leq i'$. 
Moreover, $(q_{i-1},R_{i-1},X_{\addr})\flows^{s[i{:}i']}_m
(q_{i'},R_{i'},X_{\addr'})$ for some $m\geq 1$ iff
$(q_{i-1},R_{i-1},X_{\addr})\flows^{s[i{:}i']}_1
(q_{i'},R_{i'},X_{\addr'})$.
 % \end{enumerate}
\end{lemma}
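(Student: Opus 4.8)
The plan is to translate the purely algebraic flow multiplicity into a reachability statement in the output structure, exploiting the invariant of the construction. Recall that after reading $s[1{:}i]$ the variable $X_\addr$ holds exactly the output block rooted at the head $\absaddr(s,i,\addr)$ (and $\epsilon$ when that head is undefined), and likewise for $X_{\addr'}$ at position $i'$. The substitution $\sigma_r$ induced by the run segment on $s[i{:}i']$ expresses the value of $X_{\addr'}$ at time $i'$ as a concatenation of the variable blocks available at the start of the segment, so $(q_{i-1},R_{i-1},X_\addr)\flows^{s[i{:}i']}_m(q_{i'},R_{i'},X_{\addr'})$ simply counts how many times the block carried by $X_\addr$ appears as a delimited factor inside the block carried by $X_{\addr'}$. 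Condition (1) is then the obvious prerequisite that both blocks are non-empty, i.e.\ both heads are defined.

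First I would settle the single-letter case. Reading $s[k]$, inspection of the update formula $\Phi_{upd}$ that fires (the $\epsilon$-update, the pure copy, and the concatenation update) shows that $X_\addr$ occurs in the right-hand side of the update of $X_{\addr'}$ precisely when the head named by $\addr'$ is linked to the head named by $\addr$ through copies of the freshly read position $k$, that is, when a new successor edge introduced at step $k$ joins the two blocks. This gives the one-step correspondence between a non-zero flow and a local edge connection in the output graph.

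Next I would compose the single steps across $s[i{:}i']$. A non-zero flow is witnessed by a chain $X_\addr = X_{\beta_0}, X_{\beta_1},\dots,X_{\beta_\ell} = X_{\addr'}$ of consecutive single-step flows; concatenating the local edges of the previous step yields a directed path in the output structure from $\absaddr(s,i',\addr')$ to $\absaddr(s,i,\addr)$ all of whose nodes lie at input positions $\le i'$, since only positions already consumed by time $i'$ can contribute to the value of $X_{\addr'}$. Conversely, any such path can be cut at the positions where its successor edges are introduced, producing a witnessing chain. Rendering ``path with all intermediate input positions $\le i'$'' in the transducer's predicates is exactly condition (2): $s\models\phi^{c',c}_\preceq(j',j)$ expresses reachability from the $\addr'$-head to the $\addr$-head, and the side condition forbidding any intermediate $j''>i'$ forces the connecting path to be realized inside the prefix read so far.

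Finally, for the \textbf{moreover} part I would use that, $T$ being a string-to-string transducer, the output structure of $s$ is a single directed line. Hence the block rooted at $\absaddr(s,i,\addr)$ is a \emph{contiguous} factor of the block rooted at $\absaddr(s,i',\addr')$, and a contiguous factor anchored at a fixed node occurs at most once; equivalently, the connecting path of condition (2) is unique because a linear order admits at most one path between two nodes. Therefore $m$ is always $0$ or $1$, which is precisely $1$-boundedness. I expect the main obstacle to be this composition step: one must verify that the $\le i'$ constraint is both necessary (a path dipping through a position $>i'$ corresponds to output material not yet absorbed into $X_{\addr'}$ at time $i'$) and sufficient, and that two syntactically distinct chains cannot create two occurrences of the same block — both of which rest on the linearity of the output structure rather than on any additional bookkeeping.
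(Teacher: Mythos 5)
Your proposal is correct and follows essentially the same route as the paper's own argument: both rest on the invariant that $X_\addr$ at time $i$ stores the output block rooted at $\absaddr(s,i,\addr)$, translate non-zero flow into the existence of a path from $\absaddr(s,i',\addr')$ to $\absaddr(s,i,\addr)$ confined to positions $\leq i'$, and derive $1$-boundedness from the uniqueness of paths in the linear output structure. The only organizational difference is that you compose single-step flows while the paper argues by two cases (head preserved vs.\ head absorbed via a new edge at some $\ell$ with $i<\ell\leq i'$); note that in your one-step analysis the pure-copy update $X_\addr := X_{\addr'}$ corresponds to the two heads coinciding rather than to a freshly introduced edge, a harmless imprecision in your phrasing.
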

\begin{proof}[Proof (Sketch).]
Suppose that all the conditions are met (the converse is proved similarly). 
Consider first the particular case  where $\absaddr(s,i,\addr) = \absaddr(s,i',\addr')$ depicted in left-side of Figure~\ref{combo}. 
It means that the output node $\absaddr(s,i,\addr)$ is both an $i$-head and an $i'$-head.
The name of this node however has changed to $\absaddr(s,i',\addr')$ at position $i'$, and possibly, the path represented by variable 
$X_{\addr}$ at position $i$ has been extended (as shown on the figure).
By construction of $T_\la$, variable $X_{\addr}$ at position $i$ flows into variable $X_{\addr'}$ at position $i'$ through the sequence of
variable updates $X_{\addr_j} := X_{\addr_{j-1}}$ for all $i\leq j\leq i'$ where $\addr_i = \addr$ and $\addr_{i'} = \addr'$, and for all 
$i\leq j \leq i'$, $a(\addr_j) = a_j$ (the $j$-th symbol of $s$), $\tau_1(\addr_j) = \tau_1(\addr_{j-1}).[a_j]_k$, $\tau_2(\addr_j) =
[s[j..i']]_k$ and $c(\addr_j) = c(\addr)$.

The other case is when the node $\absaddr(s,i,\addr)$ is the target of an edge from some (alive) node $\ell^d$ such that $i < \ell \leq i'$, i.e.,  
$\absaddr(s,i,\addr)$ is an $i$-head but is not an $i'$-head.
This new position  $\ell^d$ belongs to some path that never goes beyond position $i'$, and the $i'$-head of this path is represented, by
construction of $T_\la$, by some variable. 
If this variable is precisely $X_{\addr'}$, then one gets that
$X_\addr$ at position $i$ flows into $X_{\addr'}$ at position $i'$, by construction of variable update in $T_\la$.  It is depicted in right-side of Figure~\ref{combo}.
On the figure, the path from node $\absaddr(s,i',\addr')$ contains node $\absaddr(s,i,\addr)$. 
Therefore the content of variable $X_{\addr'}$ at position $i'$
depends on the content of variable $X_{\addr}$ at position $i$. 

From this characterization of variable flow, it is easy to see that a
variable cannot flow multiple times to another variable, since there
exists only one path from $\absaddr(s,i',\addr')$ to
$\absaddr(s,i,\addr)$.
\end{proof}

Based on the two previous lemmas, we are now able to express the
``relative'' flow of states and variables of $T_\la$ in between two
positions of a string $s\in dom(T_\la)$ in FO.

\begin{lemma}\label{lem:relativeflow}(Relative State-Variable Flow)
Given a tuple $t = (q,q',R,R',X_\addr,X_\addr',m)\in Q\times Q\times
2^{Q_A}\times 2^{Q_A}\times \varsst\times \varsst\times \mathbb{N}$, there exists an FO-formula
$\text{rflow}_t(x,y)$ of quantifier rank at most $k+4$ such that for all strings
$s\in\dom(T_\la)$ of length $n\geq 1$ and any two positions $i\leq i'\in dom(s)$,  if $r =
(q_0,R_0)\dots (q_n,R_n)$ is the accepting run of $T_\la$ on $s$, then 
$s\models \text{rflow}_t(i,i')$ iff
$(q_{i-1},R_{i-1},X_\addr)\flows^{s[i{:}i']}_m (q_{i'},R_{i'},X_{\addr'})$ for
some $m\geq 1$. Moreover, $(q_{i-1},R_{i-1},X_\addr)\flows^{s[i{:}i']}_m (q_{i'},R_{i'},X_{\addr'})$ for
some $m\geq 1$ iff $(q_{i-1},R_{i-1},X_\addr)\flows^{s[i{:}i']}_1 (q_{i'},R_{i'},X_{\addr'})$.
\end{lemma}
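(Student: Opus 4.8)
The plan is to assemble $\text{rflow}_t$ from the two ingredients already at hand: the state-flow formula of Lemma~\ref{lem:stateflowb}, which pins down the configurations of the unique accepting run at positions $i{-}1$ and $i'$, and a variable-flow formula encoding the output-structural characterization of Lemma~\ref{var-add}. Concretely, I would set
\[
\text{rflow}_t(x,y) \rmdef \text{sflow}_{q,q',R,R'}(x,y) \wedge \text{vflow}_{\addr,\addr'}(x,y),
\]
where $\text{sflow}$ guarantees $x\leq y$ together with $(q_{x-1},R_{x-1}) = (q,R)$ and $(q_{y},R_{y}) = (q',R')$, and $\text{vflow}$ expresses conditions $(1)$ and $(2)$ of Lemma~\ref{var-add}. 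Since those conditions refer only to the heads $\absaddr(s,x,\addr)$, $\absaddr(s,y,\addr')$ and to the output order $\phi_\preceq$, both of which are FO-definable (the heads by Lemma~\ref{lem:addrfo}, the order being part of the FOT), the whole property becomes first-order, and both parts rely on there being a unique accepting run.

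For the variable-flow part I would quantify over the two output head positions $u$ (the role of $j$) and $v$ (the role of $j'$):
\begin{align*}
\text{vflow}_{\addr,\addr'}(x,y) \rmdef \bigvee_{c,c'\in C} \exists u\, \exists v.\Big( &\Phi_{\absaddr(\addr)}^{c}(x,u) \wedge \Phi_{\absaddr(\addr')}^{c'}(y,v) \wedge \phi^{c',c}_\preceq(v,u) \\
&{}\wedge \neg\exists w. \bigvee_{c''\in C}\big(\phi^{c',c''}_\preceq(v,w)\wedge\phi^{c'',c}_\preceq(w,u)\wedge w\succ y\big)\Big).
\end{align*}
Here the existence of witnesses $u,v$ asserts that both heads are defined (condition $(1)$), the conjunct $\phi^{c',c}_\preceq(v,u)$ asserts the order/path from $\absaddr(s,y,\addr')$ to $\absaddr(s,x,\addr)$, and the final conjunct forbids this path from visiting any output position beyond $y=i'$ (condition $(2)$). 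Correctness is then a direct reading-off of Lemma~\ref{var-add}: $s\models \text{rflow}_t(i,i')$ iff the run is in configuration $(q,R)$ before $i$ and $(q',R')$ after $i'$ and $X_\addr$ at $i$ flows into $X_{\addr'}$ at $i'$. The ``moreover'' clause, i.e.\ the equivalence between flow with some $m\geq 1$ and flow with $m=1$, is inherited verbatim from the corresponding statement of Lemma~\ref{var-add}, since the witnessing path is unique.

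I expect the quantifier-rank bookkeeping, rather than the logical correctness, to be the delicate point. The order formulas $\phi_\preceq^{c,d}$ come from the FOT and hence have quantifier rank at most $k-1$ (recall $k = qr(T)$ is the maximal formula rank plus one), so the inner $\neg\exists w(\dots)$ has rank at most $k$; conjoining it with the two head-formulas $\Phi_{\absaddr(\cdot)}^{c}$, each of rank at most $k+2$ by Lemma~\ref{lem:addrfo}, gives a matrix of rank $k+2$, and the two outermost existentials $\exists u\,\exists v$ raise this to $k+4$. Since $\text{sflow}$ has rank at most $k+3$ by Lemma~\ref{lem:stateflowb}, the conjunction $\text{rflow}_t$ has rank $\max(k+3,k+4)=k+4$, as required. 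The care needed is precisely to extract the two head-addresses with only two additional quantifiers, reusing $u$ and $v$ rather than nesting copies of the defining formulas $\Phi_{\absaddr(\cdot)}^c$, and to keep the ``stays below $i'$'' test within a single extra quantifier.
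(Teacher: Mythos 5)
Your proposal is correct and takes essentially the same route as the paper's own proof: the paper likewise defines $\text{rflow}_t$ as the conjunction of $\text{sflow}_{q,q',R,R'}$ with a disjunction over copies $c,c'$ of a formula that uses two existential quantifiers to extract the head positions via Lemma~\ref{lem:addrfo} and one further (negated) quantifier to forbid intermediate output nodes beyond position $i'$, arriving at the same quantifier-rank bound $k+4$. The only differences are cosmetic: the paper additionally writes down a formula for the degenerate case $m=0$ (the negation of the $m\geq 1$ formula conjoined with $\text{sflow}$), which the statement as phrased does not strictly require, and your guard $w\succ y$ is in fact the more faithful transcription of condition $(2)$ of Lemma~\ref{var-add}.
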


\begin{proof}
    We express the conditions of Lemma \ref{var-add} in FO and take
    the resulting formula in conjunction with the formula 
    $\text{sflow}_{q,q',R,R'}(x,y)$ obtained from
    Lemma \ref{lem:stateflowb}. The full proof is in Appendix
    \ref{relativeflow}. 
\end{proof}

The formulas $\text{rflow}_t(x,y)$ for tuples $t =
(q,q',R,R',X_\addr,X_\addr',m)$ describe the flow between two
positions $x$ and $y$ of some string $s\in \dom(T_\la)$, with respect to the unique
run of $T_{la}$ on $s$. However to prove aperiodicity of the
transition monoid of $T_{la}$, one has to express the flow on a whole
string $s$ (which is not necessarily in $\dom(T_\la)$), and this flow
must only depend on the starting and ending configurations $(q,R)$ and
$(q',R')$ resp. In particular, $(q,R,X_\addr)$ flows
to $(q',R',X_\addr')$ on $s$ is not equivalent to $s\models
\text{rflow}_t(1,n)$ where $n=|s|$, because the
run of $T_{la}$ on $s$ may not start with $(q,R)$. However, the flow
of an SST with look-ahead is defined between useful configurations only,
i.e. configurations which are both accessible from an initial state and
co-accessible (a final state is accessible from them). Thanks to this
requirement, we are able to express the flow on a string by using
$\text{rflow}_t(x,y)$. This formula is first transformed into an
aperiodic automaton that runs on strings extended with boolean values
that indicate the positions of $x$ and $y$. Then we take the quotient
of this automaton to define the set of substrings from position $x$ to
position $y$ and project the boolean values away. All these steps
preserve aperiodicity. A proof of Lemma \ref{lem:flowFO} can be found in Appendix \ref{app:flowFO}.

\begin{lemma}\label{lem:flowFO}
Given a tuple $t = (q,q',R,R',X_\addr,X_\addr',m)\in Q\times Q\times
2^{Q_A}\times 2^{Q_A}\times \varsst\times \varsst\times\mathbb{N}$ such that $(q,R)$ and $(q',R')$ are
both useful, there exists an FO-sentence
$\text{flow}_t$ of quantifier rank at most $k+4$ such that for all strings
$s\in\Sigma^*$ and any two positions $i<i'\in dom(s)$,  $s\models
\text{flow}_t$ iff $(q,R,X_\addr)\flows_{m}^s (q',R',X_{\addr'})$ for
some $m\geq 1$. Moreover, $(q,R,X_\addr)\flows_{m}^s (q',R',X_{\addr'})$ for
some $m\geq 1$ iff $(q,R,X_\addr)\flows_{1}^s (q',R',X_{\addr'})$.
\end{lemma}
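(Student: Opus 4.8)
The plan is to turn the two-variable formula $\text{rflow}_t(x,y)$ of Lemma~\ref{lem:relativeflow}, which only speaks about the unique run of $T_\la$ on strings in $\dom(T_\la)$ between two marked positions, into a \emph{sentence} $\text{flow}_t$ that speaks about the abstract flow of an arbitrary string $s$ between the two fixed useful configurations $(q,R)$ and $(q',R')$. The first step is a reduction exploiting usefulness. Since $(q,R)$ is accessible and $(q',R')$ is co-accessible, fix a string $u$ whose run from the initial configuration reaches $(q,R)$ and a string $v$ along which $(q',R')$ reaches an accepting configuration. I would then establish the equivalence: $(q,R,X_\addr)\flows^s_m (q',R',X_{\addr'})$ for some $m\geq 1$ iff $usv\in\dom(T_\la)$ and $usv\models \text{rflow}_t(|u|+1,|u|+|s|)$. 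The forward direction glues $u$, the witnessing $s$-run, and $v$ into one accepting run; the backward direction uses the fact, recalled after Definition~\ref{sst-LA}, that through useful configurations there is at most one run of $T_\la$, so the unique accepting run of $usv$ restricts on the $s$-block to exactly the run witnessing the flow and realizes the same multiplicity. The clause ``$m\geq1$ iff $m=1$'' is inherited verbatim from Lemma~\ref{lem:relativeflow}.

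Next I would pass to automata, as in the text. By Theorem~\ref{thm:fo-aperiodic}, encoding the two free position-variables by two marker bits, the FO formula $\text{rflow}_t(x,y)$ (conjoined with the domain of $T_\la$, which is FO-definable by construction of $Q_f$ together with Proposition~\ref{prop:hintikka}) yields an aperiodic automaton $\mathcal{B}$ over $\Sigma\times\{0,1\}^2$ recognizing the marked strings $(w,i,i')$ with $w\in\dom(T_\la)$ and $w\models\text{rflow}_t(i,i')$. From $\mathcal{B}$ I build an automaton $\mathcal{C}$ over $\Sigma$ whose language is $\{s : s\models\text{flow}_t\}$: $\mathcal{C}$ re-declares as initial every $\mathcal{B}$-state reachable by reading some \emph{unmarked} prefix (realizing ``$\exists u$''), simulates $\mathcal{B}$ on $s$ while forcing the $x$-marker on the first letter and the $y$-marker on the last, and re-declares as accepting every state from which $\mathcal{B}$ reaches acceptance by reading some \emph{unmarked} suffix (realizing ``$\exists v$''). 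By Theorem~\ref{thm:fo-aperiodic} again, $L(\mathcal{C})$, being aperiodic, is FO-definable, and I take $\text{flow}_t$ to be a defining sentence; its correctness is exactly the reduction of the first paragraph.

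The main obstacle is to justify that $\mathcal{C}$ stays aperiodic, since aperiodic (equivalently star-free) languages are \emph{not} closed under projection in general, while the construction superficially erases the marker bits and existentially quantifies $u$ and $v$. The resolution is that no genuine projection occurs: the two markers are pinned to the first and last positions of $s$, which are FO-definable, so erasing them is a deterministic relabeling of the two boundary letters; and closing over arbitrary unmarked prefixes and suffixes is a two-sided quotient of $L(\mathcal{B})$ by $\Sigma^*$, an operation under which star-free languages are closed. Hence each manipulation preserves aperiodicity of the transition monoid. A secondary point requiring care is the look-ahead bookkeeping inside the reduction, and the stated quantifier-rank bound $k+4$: rather than passing through the uncontrolled round-trip via $\mathcal{C}$, I would obtain the rank by relativizing $\text{rflow}_t$ directly to the $s$-block of fixed witnesses $u,v$, keeping the rank equal to that of $\text{rflow}_t$. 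For the intended use of the lemma only FO-definability with \emph{some} fixed rank $b$ is needed, so that Proposition~\ref{prop:ktype}.\ref{prop:ape} gives $s^{2^b}\models\text{flow}_t \Leftrightarrow s^{2^b+1}\models\text{flow}_t$, and thus aperiodicity of $M_{T_\la}$.
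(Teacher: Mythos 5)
Your proposal is correct and follows essentially the same route as the paper's proof: reduce via usefulness to an occurrence of $\text{rflow}_t$ inside a witness string $s_1 s s_2\in\dom(T_\la)$, encode the two free variables as marker bits, take the two-sided quotient by unmarked prefixes/suffixes, and project the markers away while observing that this projection is a bijective renaming pinned to the first and last positions and hence preserves aperiodicity. The only detail the paper treats that you omit is the degenerate case $|s|=0$ (handled there by a separate sentence $\flow_t^0$, since the marker pattern and your reduction both presuppose a nonempty block); conversely, your remark on controlling the quantifier rank is a point the paper glosses over.
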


\begin{proof}[Sketch of Proof]
    The proof of this result is based on automata. The formula
    $\text{rflow}_t(x,y)$ is transformed into an aperiodic automaton $A_1$ that runs
    on strings extended with Boolean values that indicate the
    positions $x$ and $y$. This automaton can be modified into 
    an automaton $A_2$ that accepts only factors of strings $s$ accepted by
    $A_1$ from
    position $x$ to position $y$, while preserving aperiodicity. 
    The automaton $A_2$ is then projected on alphabet $\Sigma$, getting
    an aperiodic automaton $A_3$. Then the sentence $\flow_t$ is
    defined as an FO-sentence equivalent to $A_3$. Usefulness of
    $(q,R)$ and $(q',R')$ is needed to ensure that $(q,R,X_\addr)$
    $s$-flows in $(q',R',X_{\addr'})$ implies $s\models \flow_t$. 
    Indeed, in that case, there exist two strings $s_1,s_2$ and the accepting run of 
    $T_{la}$ on $s_1ss_2$ reaches $(q,R)$ after reading $s_1$ and
    $(q',R')$ after reading $s_1s$ and therefore, $s_1ss_2\models
    \text{rflow}_t(|s_1|+1, |s_1|+|s|)$ by Lemma
    \ref{lem:relativeflow}, from which we can prove that $s\models
    \flow_t$. 
\end{proof}

\begin{corollary}\label{coro:aperiodicity-sstla}
    The $\sst_\la$ $T_\la$ is aperiodic and 1-bounded. 
\end{corollary}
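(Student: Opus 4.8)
The plan is to obtain the corollary as an essentially immediate consequence of the flow analysis carried out in Lemmas~\ref{var-add}, \ref{lem:relativeflow} and~\ref{lem:flowFO}, combined with the aperiodicity of $k$-types recorded in Proposition~\ref{prop:ktype}.\ref{prop:ape}. Recall that $M_{T_\la}$ consists of the matrices $M_s$ indexed by pairs $((q,R),X_\addr)$ of a useful configuration and a variable, and that $M_\bullet\colon(\Sigma^*,\cdot,\epsilon)\to(M_{T_\la},\times,\mathbf 1)$ is a morphism; since flow is only meaningful between useful configurations, we index by these. The $1$-boundedness is the easy half: by the ``Moreover'' clauses of Lemmas~\ref{var-add}, \ref{lem:relativeflow} and~\ref{lem:flowFO}, whenever $(q,R,X_\addr)\flows^s_m(q',R',X_{\addr'})$ for some $m\geq 1$ we in fact have $m=1$, so every coefficient of every $M_s$ lies in $\{\bot,0,1\}$.

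For aperiodicity, I would fix an arbitrary entry $M_s[(q,R),X_\addr][(q',R'),X_{\addr'}]$ and observe that its value is determined by two pieces of information: (i) whether there is a run of $T_\la$ from $(q,R)$ to $(q',R')$ on $s$ (this separates the $\bot$ entries from the rest), and (ii) the flow count, which by $1$-boundedness is $0$ or $1$. For useful $(q,R)$ and $(q',R')$, Lemma~\ref{lem:flowFO} supplies an FO-sentence $\text{flow}_t$ of quantifier rank at most $k+4$ with $s\models \text{flow}_t$ iff the flow count is $\geq 1$ (hence exactly $1$). The run-existence predicate (i) is governed by the underlying state structure of $T_\la$: the state component ranges over $\Theta_{k+2}$ and evolves by type composition, which is aperiodic by Proposition~\ref{prop:ktype}.\ref{prop:ape}, while the look-ahead component is driven by $A=\biguplus_{\tau}A_\tau$, a disjoint union of the aperiodic automata $A_\tau$, so its transition monoid is aperiodic as well; thus run-existence is itself an aperiodic property, stabilising under powers. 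Setting $b=k+4$, Proposition~\ref{prop:ktype}.\ref{prop:ape} gives, for every $m\geq 2^{b}$ (note $2^{b}\geq 2^{k+2}$, so all the component bounds are covered) and every $s\in\Sigma^*$, that $s^m\equiv_b s^{m+1}$. Consequently $s^m\models \text{flow}_t$ iff $s^{m+1}\models \text{flow}_t$ for every tuple $t$, and the run-existence information likewise agrees on $s^m$ and $s^{m+1}$; hence $M_{s^m}=M_{s^{m+1}}$ for all $s$, which is exactly aperiodicity of $M_{T_\la}$.

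The genuinely delicate work is already discharged by Lemma~\ref{lem:flowFO}, so I expect the main obstacle here to be organisational rather than technical. One must be careful that $\text{flow}_t$ speaks about the flow \emph{on the whole string} $s$ between prescribed useful configurations, and not merely about the flow along the unique run of $T_\la$ on $s$ (which is the content of the relative formula $\text{rflow}_t$ of Lemma~\ref{lem:relativeflow}); the passage from the relative to the absolute statement, using usefulness to embed $s$ inside some $s_1 s s_2\in\dom(T_\la)$, is precisely what Lemma~\ref{lem:flowFO} provides. The one point that $\text{flow}_t$ alone does not settle is the distinction between a $0$ entry (a run exists but $X_\addr$ does not reach $X_{\addr'}$) and a $\bot$ entry (no run at all); this is why the argument must also invoke the aperiodicity of the type monoid $\Theta_{k+2}$ and of the look-ahead automaton $A$ to control run-existence. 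Once these two aperiodic ingredients are combined with the single bound $b=k+4$, the corollary follows.
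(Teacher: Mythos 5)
Your overall strategy is the paper's: $1$-boundedness is read off the ``Moreover'' clauses of Lemmas~\ref{var-add}--\ref{lem:flowFO}, and aperiodicity follows because each matrix entry is governed by FO-sentences of quantifier rank at most $b=k+4$, so that Proposition~\ref{prop:ktype}.\ref{prop:ape} forces $M_{s^{n_0}}=M_{s^{n_0+1}}$ for $n_0\geq 2^{b}$. The one place where you depart from the paper is the treatment of the $0$-versus-$\bot$ distinction, and there your argument has a gap. You claim that run-existence between configurations stabilises under powers because the state component lives in the aperiodic monoid $\Theta_{k+2}$ and the look-ahead automaton $A$ is aperiodic. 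For the state component this is fine, but the look-ahead component of a configuration is a \emph{set} $R\subseteq Q_A$ that evolves by $R_{i+1}\supseteq \delta_A(R_i,a_{i+1})\cup\{p_{i+1}\}$, i.e.\ by accumulating the witnessing look-ahead states of the transitions taken; reachability under this set-accumulation semantics is not simply the image of $R$ under the transition monoid of $A$, and aperiodicity of $A$ does not by itself transfer to it. The paper sidesteps this entirely: entries involving a non-useful configuration are $\bot$ for \emph{every} string, hence trivially stable under $s\mapsto s^{m}$, and between useful configurations ``a run exists but no copy of $X_\addr$ reaches $X_{\addr'}$'' is itself one of the FO-definable flow properties (the $m=0$ instance of the $\text{rflow}_t$/$\text{flow}_t$ construction in Lemmas~\ref{lem:relativeflow} and~\ref{lem:flowFO}), so it is covered by the same quantifier-rank bound and needs no separate automata-theoretic argument. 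If you route the $0$-versus-$\bot$ case through the $m=0$ flow formulas and the uniform $\bot$-ness of non-useful rows and columns, rather than through a direct aperiodicity claim for the look-ahead sets, your proof coincides with the paper's and is complete.
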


\newpage
\appendix
\section{Proofs from Section  \ref{fo:prop}}

\subsection{Proof of Proposition \ref{prop:ktype}.\ref{prop:indis}}

\begin{proof}
    We prove the proposition for formulas with two free variables. The
    case of one free variable is a particular case. The proof is based on the
    composition result of Proposition \ref{prop:ktype}.\ref{prop:compo}. 
    
    Let $s = s_1as_2bs_3$ and $s' = s'_1as'_2bs'_3$. 
Considering the extended alphabet $\Sigma'=\Sigma \times \{0,1\}^2$,  we define the string 
$u=u_1\tiny{\left( \begin{array}{c} a \\ 1 \\ 0  \end{array} \right)}
u_2 \tiny{\left( \begin{array}{c} b \\ 0 \\ 1  \end{array} \right)} u_3$ where 
$u_i \in 
\{\tiny{\left( \begin{array}{c} c \\ 0 \\ 0  \end{array} \right)}   \mid c \in \Sigma\}^*$. $u$ is an extension of $s$ (hence 
$u_i$ is an extension of $s_i$).    
 The two extra bits serve as the interpretation of first order variables $x,y$ with 
$x=1$ at position $i_1$ and $y=1$ at position $i_2$. In a similar manner, we define 
$u'$ as well as $u'_i$ as  extensions of $s'$ and $s'_i$ respectively.  

Since $s_i\equiv_{k+2} s'_i$ for all $i\in\{1,2,3\}$, 
we obtain $u_i\equiv_{k+2} u'_i$ by extending 
the signature of FO to $\Sigma\times \{0,1\}^2$.  
 Therefore by Proposition~\ref{prop:ktype}.\ref{prop:compo}, we get $u \equiv_{k+2} u'$. 
Replacing every atomic formula $L_\gamma(z)$ of $\phi(x,y)$  by 
$\bigvee_{m,n\in\{0,1\}} L_{\tiny{\left(\begin{array}{c} \gamma\\ m\\n\end{array} \right)}}(z)$, 
we obtain the formula $\phi'(x,y)$. Quantifying $x,y$ we obtain the sentence 
    $\psi_{xy} = \exists x\exists y \bigvee_{c,d\in\Sigma}
    L_{\tiny{\left( \begin{array}{c} c \\ 1 \\ 0 \end{array}\right)}}(x) \wedge L_{\tiny{\left( \begin{array}{c} d\\ 0 \\ 1 \end{array}\right)}}(y)\wedge \phi'(x,y)$.
   It can be easily checked that   $s\models \phi(i_1,i_2)$ iff $u\models
    \psi_{xy}$ and $s'\models \phi(i'_1,i'_2)$ iff $u'\models
    \psi_{xy}$. Since the quantifier rank of $\phi$ is at most $k$, the quantifier rank of 
    $\psi_{xy}$ is at most $k+2$. Since $u\equiv_{k+2} u'$, we get  
    $s\models \phi(i_1,i_2)$ iff $u\models \psi_{xy}$ iff  $u'\models \psi_{xy}$ iff $s'\models
    \phi(i'_1,i'_2)$. 
\end{proof}

\section{Proofs from Section \ref{sec:ap}}

\subsection{Proof of Lemma  \ref{lem:complexity}}
\label{complexity}
\begin{lemma}
Given an SST $T$, checking whether its transition monoid $M_T$ is 1-bounded is in \textsc{PSPACE}.
\end{lemma}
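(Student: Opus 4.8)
The plan is to reduce $1$-boundedness to a reachability question in a \emph{finite} monoid of polynomial-width matrices, and then to decide that reachability by a nondeterministic polynomial-space search, which gives $\textsc{PSpace}$ via Savitch's theorem.

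First I would replace the (possibly infinite) semiring $\mathbb{N}_\bot$ by the finite \emph{saturated} semiring $S = \{\bot, 0, 1, \top\}$, where $\top$ stands for ``at least two'', equipped with saturating addition and multiplication: any sum or product that would exceed $1$ collapses to $\top$, while $\bot$ retains the absorbing/neutral behaviour fixed in the paper. For each string $s$ let $\widehat{M}_s$ be the $S$-matrix indexed by $Q\times\varsst$ obtained from $M_s$ by replacing every entry $\geq 2$ by $\top$. Since the capping map $\mathbb{N}_\bot \to S$ is a semiring morphism, $s\mapsto \widehat{M}_s$ is still a monoid morphism, so $\widehat{M}_{s_1s_2} = \widehat{M}_{s_1}\times\widehat{M}_{s_2}$ computed in $S$. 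The crucial observation is that $M_T$ fails to be $1$-bounded iff some reachable matrix $\widehat{M}_s$ has an entry equal to $\top$: capping at $\top$ forgets the exact count but exactly preserves the predicate ``does this entry reach $2$''.

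The generators of this monoid are computable in polynomial time: for a letter $a$, determinism gives $\widehat{M}_a[p,Y][q,X] = \bot$ unless $q = \delta(p,a)$, in which case it equals the number of occurrences of $Y$ in $\rho(p,a)(X)$, capped at $\top$. Each matrix $\widehat{M}_s$ has $(|Q|\cdot|\varsst|)^2$ entries over the four-element set $S$, so a single matrix is storable in polynomial space, even though the whole monoid may have exponential size. It then remains to search the monoid for a matrix carrying a $\top$ entry. I would do this with a nondeterministic machine that guesses $s$ letter by letter while keeping only the current capped matrix: starting from the unit matrix $\widehat{M}_\epsilon$, it repeatedly guesses a letter $a$, right-multiplies the current matrix by $\widehat{M}_a$ in $S$, and accepts as soon as the current matrix has a $\top$ entry. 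To force termination I would maintain a binary counter bounding the number of steps by the number of distinct matrices, $|S|^{(|Q||\varsst|)^2}$; this counter has only polynomially many bits, and if a $\top$-matrix is reachable at all then it is reachable within that many multiplications. Hence non-$1$-boundedness is in $\textsc{NPSpace} = \textsc{PSpace}$, and since $\textsc{PSpace}$ is closed under complement, $1$-boundedness is in $\textsc{PSpace}$.

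The main obstacle is the soundness of the capping step: one has to verify that saturation at $\top$ commutes with the addition and multiplication used in matrix composition, so that the reachable $\top$-entries of the capped monoid correspond exactly to the entries $\geq 2$ of the true transition monoid. Once this correspondence is in place, the space-bounded search is routine.
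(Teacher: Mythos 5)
Your proposal is correct and follows essentially the same route as the paper: nondeterministically guess a string letter by letter while maintaining its transition matrix in polynomial space, accept upon seeing an entry $\geq 2$, and conclude via $\textsc{NPSpace}=\textsc{PSpace}$ and closure under complement. The only difference is presentational: the paper phrases the witness as a product $M_{s_1}\times M_{s_2}$ of two guessed strings, whereas you track a single running matrix and, usefully, make explicit the saturation of entries at ``$\geq 2$'' (and the termination counter) that the paper leaves implicit when asserting the matrices fit in polynomial space.
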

\begin{proof}
Let $T=(\Sigma,\Gamma,Q,q_0,Q_f,\delta,\varsst,\rho,F)$ be an SST. 
To check if $T$ is 1-bounded, we have to check that there does {\it not} exist a string $s=s_1s_2$ 
having a run from some state $p$ to state $q$ 
such that 
\begin{itemize}
\item There is  a run on string $s_1$ from  state $p$ to state $r$, such that variable $X$ 
1-flows into variables $Y,Z$.  
\item There is a run on string $s_2$ from state $r$ to state $q$ such that, variables $Y,Z$ 1-flow into variable $G$
\end{itemize}
Clearly, if the above situation happens, $X$ 2-flows into variable $G$, and 
 $M_{s_1}[(p,X)][(r,Y)]=1=M_{s_1}[(p,X)][(r,Z)]$,
$M_{s_2}[(r,Y)][(q,G)]=1=M_{s_2}[(r,Z)][(q,G)]$, and hence 
  $M_s[(p,X)][(q,G)]=2$, which means  $T$ is not 1-bounded.
 We give below, the algorithm to check if  $T$ is 1-bounded.
 \begin{enumerate}
 \item Successively guess the symbols of two strings $s_1$ and $s_2$
and along the way, keep computing the transition matrices $M_{s_1}$ and $M_{s_2}$. 
 This is possible to be done in PSPACE.
 \item Compute $M_{s_1} \times M_{s_2}$ and check if it contains an integer $i \geq 2$. If so, 
 then as discussed above, there is a variable $X$ that $i$-flows into some variable $G$.  
  \end{enumerate}
  Clearly, the overall complexity of this algorithm is NPSPACE. 
Thanks to Savitch's Theorem, we have a PSPACE algorithm.
    \end{proof}

\begin{lemma}
Checking whether a given 1-bounded SST is aperiodic is \textsc{PSPACE-complete}.
\end{lemma}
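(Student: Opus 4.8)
The plan is to prove both containment in PSPACE and PSPACE-hardness. For membership I would exploit the fact that, since $T$ is promised to be 1-bounded, its transition monoid $M_T$ is finite and every matrix $M_s$ has entries in $\{\bot,0,1\}$ and is indexed by the polynomially many pairs in $Q\times\varsst$; hence each element of $M_T$ has polynomial size and products $M_{s_1}\times M_{s_2}$ are computable in polynomial time. Recall from the discussion preceding the lemma that $M_T$ is aperiodic iff it contains no nontrivial cycle. The first step is therefore to make this criterion precise for our matrices: with $M_s$ I associate the directed graph $G_s$ on vertex set $Q\times\varsst$ that has an edge $(p,X)\to(q,Y)$ whenever $M_s[p,X][q,Y]=1$. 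I would then show that $M_T$ is non-aperiodic iff there is a string $s$ such that $G_s$ contains a simple cycle of length at least $2$.

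The forward direction of that equivalence is the crux. Starting from $M_s^\omega\neq M_s^{\omega+1}$ (the idempotent power differs from its successor), the recurrent index-pairs of $G_s$ carry a nontrivial group in the sense of Green, so $M_s$ permutes them with a cycle of length at least $2$; the converse follows because a cycle of length $\ell\ge 2$ forces $M_s^{\omega}$ to fix each of its vertices while $M_s^{\omega}\times M_s$ moves one of them. This is exactly the argument underlying the PSPACE cycle-detection procedure of Stern~\cite{stern} for deterministic automata, and the only point to verify is that treating the pairs $(p,X)$ as \emph{generalized states} and the 1-bounded variable flow as the transition relation does not break it. This holds because 1-boundedness guarantees that matrix products stay in $\{\bot,0,1\}$, so $G_s$ composes exactly like a Boolean transition relation and $G_{s^j}$ is the $j$-fold relational composition of $G_s$. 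Given this, the algorithm nondeterministically builds $M_s$ one letter at a time, storing only the current matrix in polynomial space and updating $M\leftarrow M\times M_a$ for a guessed $a\in\Sigma$, and after each step it tests in polynomial time whether the current graph has a cycle through two distinct vertices, accepting if so. This nondeterministic procedure runs in polynomial space, and Savitch's theorem yields membership in PSPACE.

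For PSPACE-hardness I would reduce from the aperiodicity problem for deterministic finite automata, which is PSPACE-complete~\cite{cho-dung}. Given a complete DFA $A=(Q,q_0,\Sigma,\delta,F)$, I construct a 1-bounded SST $T_A$ with the same states, initial state and transitions, a single variable $X$, the trivial (copyless) update $X:=X$ on every transition, and $F(q)=X$ for every $q\in F$. Along the unique run of $A$ on a string $s$ the variable $X$ flows into itself exactly once, so $M_s[p,X][q,X]=1$ iff $p\rightsquigarrow^s q$ in $A$ and $=\bot$ otherwise; thus $M_{T_A}$ is isomorphic as a monoid to the transition monoid of $A$ (mapping $\bot$ to $0$). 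Consequently $T_A$ is aperiodic iff $A$ is aperiodic, and $T_A$ is visibly 1-bounded, which makes it a legal instance of our promise problem. The reduction is computable in logarithmic space, establishing PSPACE-hardness and hence PSPACE-completeness.

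I expect the main obstacle to be the membership direction, specifically justifying the nontrivial-cycle characterization over the richer index set $Q\times\varsst$ and adapting Stern's automaton procedure to it; once 1-boundedness is used to keep the matrices polynomially sized and Boolean-composable, the remaining bookkeeping (guessing the witnessing word on the fly and invoking Savitch) is routine, while the hardness reduction is immediate.
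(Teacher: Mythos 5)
Your proposal is correct and follows essentially the same route as the paper: both reduce aperiodicity to a cycle condition on the flow graph over $Q\times\varsst$ (your ``simple cycle of length $\geq 2$'' is equivalent to the paper's non-trivial-cycle property, and both equivalences hinge on exactly the same use of 1-boundedness to rule out two coexisting return paths, plus determinism of the underlying automaton), then run a Stern-style guess-and-multiply procedure in nondeterministic polynomial space and apply Savitch, with hardness via the Cho--Huynh result. The only differences are presentational: you justify the ``non-aperiodic implies cycle'' direction by appeal to Green's relations where the paper gives an elementary pigeonhole argument, and you spell out the one-variable copyless reduction from DFA aperiodicity that the paper leaves implicit.
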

\begin{proof}
Given an SST $T=(\Sigma,\Gamma,Q,q_0,Q_f,\delta,\varsst,\rho,F)$, we first construct 
an automaton $A_T$ such that the transition monoids of $T$ and $A_T$ are the same. 
 By definition, $T$ is aperiodic iff 
its transition monoid $M_T$  is aperiodic. 
  It is known \cite{cho-dung} that a deterministic (not necessarily minimal) 
  finite state automaton is non-aperiodic iff there is some string $u \in \Sigma^*$ 
  with the ``non-trivial cycle property''.  
  We cannot directly apply this result of \cite{cho-dung} to $A_T$ since 
  in general, $A_T$ could be non-deterministic. However, we show that 
  $M_T$ is non-aperiodic iff there exists a non-trivial cycle in $A_T$ 
  (note that this result is in general not true for arbitrary automata: for instance, one can have an automaton 
  $A$ accepting the aperiodic language $(ab)^*$; however the  transition monoid of $A$ could be non-aperiodic).

   Given an automaton, we explain what the ``non-trivial cycle property'' means: 
            There is a string $u$ 
    and a state $p$ such $p \notin \delta(p,u)$, and for some positive
    integer $r$, $p \in \delta(p,u^r)$.  In this proof, we show that
    $M_{A_T}$ is non-aperiodic  iff  there is a string  $u \in \Sigma^*$ that has the ``non-trival cycle property''    
  in $A_T$. 

 First, we explain the construction of $A_T$ from $T$.  Given $T$, $A_T$ is constructed as 
 $(Q \times \varsst, \Sigma, \delta_A, q_0 \times \varsst, Q_f \times \varsst)$ where 
 $\delta((p,X),a)= \{(q,Y) \mid$ there is a transition from $p$ to $q$ on $a$, such that  
  on the variable update on this transition, $X$ flows to $Y\}$.       
 Corresponding to one transition from $p$ to $q$ on $a$ in $T$, we have the transitions  
 from $(p,X_i)$ to $(q,X_j)$ on $a$ in $A_T$, whenever variable $X_j$ is updated 
  and $X_i$ flows into $X_j$ on that update. It is easy to see that the transition monoids of $T,A_T$ are same.

  Suppose now that there exists a non-trivial cycle in $A_T$. Then,  there exists a string $u$
and a state $(p,X)$ and $m \geq 0$ such that
 $M_u[(p,X)][(p,X)] = 0$, and 
 $M_{u^m}[(p,X)][(p,X)] = 1$. 
We want to show that $M_T$ is {\it not} aperiodic. That is,  there exists some string $v$ such that, 
for all $k$,  $M_{v^k} \neq M_{v^{k+1}}$. 

Let $k\geq 0$. We show that $M_{u^k}[(p,X)][(p,X)] = 1$ implies
$M_{u^{k+1}}[(p,X)][(p,X)] = 0.$

\begin{enumerate}
\item  
If $k = 0$, this is trivially true, since  $M_{\epsilon}[(p,X)][(p,X)] = 1$
and $M_u[(p,X)][(p,X)] = 0$. 
\item 
If $k > 0$,  then assume that $M_{u^k}[(p,X)][(p,X)] = 1$. We show that
$M_{u^{k+1}}[(p,X)][(p,X)] = 0$. Therefore suppose that $M_{u^{k+1}}[(p,X)][(p,X)] =
1$ and we will arrive at a contradiction. 

By assumption, $M_u[(p,X)][(p,X)] = 0$. Since we also assume 
$M_{u^{k+1}}[(p,X)][(p,X)] =1$, it
is necessarily the case that $M_{u^k}[(p,X)][(q,Y)] = 1$ and $M_u[(q,Y)][(p,X)] = 1$
for some $(q,Y) \neq (p,X)$. Since the underlying SST $T$ is deterministic and $p$ is reachable from
$p$ on $u^k$ (since $M_{u^k}[(p,X)][(p,X)] = 1$), we necessarily have that $q = p$.
Therefore $X \neq Y$. Now, we have the situation depicted in Figure \ref{case}. 
Clearly, this  contradicts the 1-boundedness of the SST. Therefore, we get 
$M_{u^{k+1}}[(p,X)][(p,X)] = 0$. 
\end{enumerate}
\begin{figure}[h]
\begin{center}
\begin{tikzpicture}[->,>=stealth',shorten >=1pt,auto,scale=0.7]

\tikzstyle{alivenode}=[circle,fill=black!80,thick,inner
sep=0pt,minimum size=2mm]

\tikzstyle{deadnode}=[circle,fill=black!10,thick,inner
sep=0pt,minimum size=2mm]

\node  (y1) at (0,-1) {$(p,X)$} ;
\node  (y4) at (4,0) {$(p,X)$} ;
\node  (y5) at (4,-2) {$(p,Y)$} ;
\node  (y6) at (7,-2) {$(p,X)$} ;
\node  (y11) at (11,-1) {$(p,X)$} ;

  \draw[->,snake=snake]   (y1) -- node[above] {$u^k$} (y4) ;
  \draw[->,snake=snake]   (y1) -- node[above] {$u^k$} (y5) ;
  \draw[->,snake=snake]   (y4) -- node[above] {$u^{k+1}$} (y11) ;
  \draw[->,snake=snake]   (y5) -- node[above] {$u$} (y6) ;
  \draw[->,snake=snake]   (y6) -- node[above] {$u^{k}$} (y11) ;

% blue rectangle

\fill[blue!20,fill opacity=0.2] (-0.7,0.3) rectangle (12.3,-2.3);

\end{tikzpicture}
\end{center}
\caption{\label{case} Multiple paths in $A_T$: $X$ flows into $X$ and $Y$ on $u^k$; 
further on $u^{k+1}$, $X$ flows into $X$, and $Y$ flows into $X$.
This gives $(p,X) \rightsquigarrow^{w}_2 (p,X)$, for $w=u^{2k+1}$, contradicting 1-boundedness. 
 }
\end{figure}
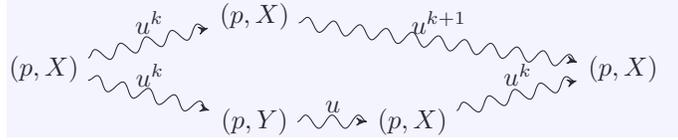

 It cannot be the case that $T$ is aperiodic : If it were, 
then there exists $m_0$ such that
for all $n\geq m_0$ we have $M_{u^n} = M_{u^{n+1}}$. 
We know that $M_{u^m}[(p,X)][(p,X)] = 1$, therefore $M_{u^{i.m}}[(p,X)][(p,X)] = 1$
for all $i$. Take $i$ such that
$i.m \geq  m_0$.  Then $M_{u^{i.m}}[(p,X)][(p,X)] = M_{u^{i.m+1}}[(p,X)][(p,X)] = 1$. 
  This however, contradicts what we just showed i.e,  
   $M_{u^k}[(p,X)][(p,X)] = 1 \Rightarrow 
M_{u^{k+1}}[(p,X)][(p,X)] = 0$.\\
  
  Conversely, assume that $M_T$ is not aperiodic.  
  Then there is a string $u$ such that for all $m$, 
  $M_{u^m} \neq M_{u^{m+1}}$. 
  We show the
existence of a non-trivial cycle in $A_T$.

Assume now that, for all states $(p,X)$, and for all $m \geq 1$, and all strings $u$,   
$M_{u^m}[(p,X)][(p,X)]=1$ iff $M_{u}[(p,X)][(p,X)]=1$.  Note that 
this is the same as saying that all strings $u$ 
give rise only to trivial cycles. We will arrive at a contradiction to this assumption. 
By non-aperiodicity, we can  pick some large $m$ for which 
  $M_{u^m} \neq M_{u^{m+1}}$. Then there are states 
$(p,X)$ and $(q,Y)$ such that 
$M_{u^m}[(p,X)][(q,Y)] \neq M_{u^{m+1}}[(p,X)][(q,Y)] $. 

\begin{enumerate}
\item 
Without loss of generality, assume 
$M_{u^m}[(p,X)][(q,Y)]=1$. If we take $m > |Q|.|\varsst|$, 
then on the run of $u^m$ from $(p,X)$ to $(q,Y)$ in $A_T$, we will revisit a state $(r,Z)$ more than once. Assume that 
the run is such that $(p,X) \rightsquigarrow^{u^{l_1}} (r,Z) \rightsquigarrow^{u^{l_2}} 
(r,Z) \rightsquigarrow^{u^{l_3}} (q,Y)$ where $l_1+l_2+l_3=m$. 
By our assumption on ``only trivial cycles'', we know that 
$M_{u}[(r,Z)][(r,Z)]=1$ since $M_{u^{l_2}}[(r,Z)][(r,Z)]=1$. Hence, we 
also have the run  
$(p,X) \rightsquigarrow^{u^{l_1}} (r,Z) \rightsquigarrow^{u^{l_2+1}} 
(r,Z) \rightsquigarrow^{u^{l_3}} (q,Y)$ in $A_T$. This gives 
$M_{u^{m+1}}[(p,X)][(q,Y)]=1$, contradicting 
our assumption of $M_{u^m}[(p,X)][(q,Y)] \neq M_{u^{m+1}}[(p,X)][(q,Y)] $.
\item 
Consider the case 
$M_{u^m}[(p,X)][(q,Y)]=0$.  We now consider the run 
in $A_T$ from $(p,X)$ to $(q,Y)$ on $u^{m+1}$, where $(r,Z)$ is revisited on 
$u^{l}$ for some $l>0$.  Again, the ``only trivial cycles'' assumption 
then gives us a run on $u^m$ from $(p,X)$ to $(q,Y)$ contradicting  
$M_{u^m}[(p,X)][(q,Y)] \neq M_{u^{m+1}}[(p,X)][(q,Y)]$. 
\end{enumerate}  
  
    Thus, we have shown that $M_T$ is aperiodic iff all strings satisfy the trivial cycle property in $A_T$. It
    remains now to check the existence of a string $u$ having the
    non-trivial cycle property in $A_T$. Adapting Stern's algorithm \cite{stern} 
    to non-deterministic automata,  we show that checking the existence 
    of a string $u$ having the non-trivial cycle property can be done in PSPACE.  
    Briefly, we successively guess the symbols of a string $u$ and compute the transition matrix 
    of $u$. Next, we guess a state $(p,X)$. From the transition matrix of $u$, we can check if 
     $M_{u}[(p,X)][(p,X)]=0$. If so, we guess an integer $r \leq |Q \times \varsst|$  
and compute $M_{u^r}$. If $M_{u^r}[(p,X)][(p,X)]=1$, then we have found a non-trivial cycle. 
Using the PSPACE-hardness of checking non-trivial cycles in \cite{cho-dung}, we conclude that 
checking aperiodicity of SSTs is PSPACE-complete. 
  \end{proof}                                

\section{Proofs from Section \ref{fo:varflow}}
\label{app:varflow}
\subsection{Proof of Proposition \ref{prop:foflow}}

First, we show that states of accepting runs of aperiodic \sst{} are
FO-definable:

\begin{proposition}
\label{prop:fostates}
Let $T$ be an aperiodic \sst{} $T$. For all states $q$, there exists an
FO-formula $\phi_q(x)$ such that for all strings $s\in \Sigma^+$, for
all positions $i$, $s\models \phi_q(i)$ iff $s\in dom(T)$ and the state of the (unique)
accepting run of $T$ before reading the $i$-th symbol of $s$ is $q$. 
There exists an FO-sentence $\phi_q^{last}$ that defines
the last state of the accepting run of $T$ on $s$ (if it exists). 
\end{proposition}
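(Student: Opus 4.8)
The plan is to reduce to the classical FO-characterisation of aperiodic automata (Theorem~\ref{thm:fo-aperiodic}) applied to the input automaton underlying $T$, and then to relativise the resulting sentences to a prefix of the input.

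First I would consider the deterministic, complete automaton $A$ underlying $T$, obtained by forgetting the variables, updates and output function and keeping only $(Q, q_0, \Sigma, \delta)$. Since $\delta$ is a total function, every string $s$ induces exactly one run $q_0 q_1 \dots q_n$, and the state ``before reading the $i$-th symbol'' is precisely the state reached on the prefix $s[1{:}i)$. Moreover $A$ is aperiodic: as established in the proof of Proposition~\ref{prop:fodomain}, the transition monoid of $A$ is a homomorphic image of $M_T$, and homomorphic images of aperiodic monoids are aperiodic. Now fix a state $q$ and let $A_q$ be $A$ with the single accepting state $\{q\}$. The transition monoid of an automaton depends only on its transition function, so $A_q$ is aperiodic too, and by Theorem~\ref{thm:fo-aperiodic} the language $L_q = \{ w \in \Sigma^* : q_0 \flows^w q \}$ it accepts is FO-definable; let $\psi_q$ be an $\FO(\Sigma)$-sentence with $L(\psi_q) = L_q$.

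Next I would relativise $\psi_q$ to the prefix lying strictly before a position $x$. Concretely, define $\psi_q^{\prec x}(x)$ by structural induction, replacing each $\exists y.\,\theta$ by $\exists y.\,(y \prec x \wedge \theta^{\prec x})$ and each $\forall y.\,\theta$ by $\forall y.\,(y \prec x \to \theta^{\prec x})$. For every $s$ and every $i \in \dom(s)$, the substructure of $\struc{s}$ induced on the positions $\{1, \dots, i{-}1\}$ is exactly the string model of $s[1{:}i)$, so $s \models \psi_q^{\prec x}(i)$ iff $s[1{:}i) \models \psi_q$ iff $q_0 \flows^{s[1{:}i)} q$, which by determinism is the state before reading $s[i]$ (the case $i=1$ evaluates $\psi_q$ over the empty model and correctly yields $q_0$). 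I then set $\phi_q(x) \rmdef \phi_{\dom} \wedge \psi_q^{\prec x}(x)$, where $\phi_{\dom}$ is the domain sentence of Proposition~\ref{prop:fodomain}, so that $s \models \phi_q(i)$ iff $s \in \dom(T)$ and the state before reading $s[i]$ is $q$. For the last state I take $\phi_q^{last} \rmdef \phi_{\dom} \wedge \psi_q$, which holds exactly when $s \in \dom(T)$ and the accepting run ends in $q$.

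All the ingredients are standard once aperiodicity of $A$ is in hand, so I do not anticipate a genuine obstacle here; the only points requiring care are the (already granted) transfer of aperiodicity from $M_T$ to the input automaton and the verification that relativisation to $\{1,\dots,i{-}1\}$ faithfully evaluates $\psi_q$ on the prefix $s[1{:}i)$. This proposition then serves as the base case for the FO-definability of variable flow (Proposition~\ref{prop:foflow}).
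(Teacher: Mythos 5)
Your proof is correct and follows essentially the same route as the paper: pass to the aperiodic underlying input automaton, obtain an FO sentence for $L_q = \{w \::\: q_0 \rightsquigarrow^w q\}$ via Theorem~\ref{thm:fo-aperiodic}, and relativise it to the positions strictly before $x$. The only (harmless) difference is how acceptance is enforced: the paper conjoins $[\psi^R_q]_{x\preceq}$, the sentence for the language $R_q$ of strings admitting a run from $q$ to an accepting state, relativised to the suffix starting at $x$, whereas you conjoin the global domain sentence $\phi_{\dom}$ of Proposition~\ref{prop:fodomain}; by determinism of the underlying automaton these two formulations are equivalent.
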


\begin{proof}
Let $A$ be the underlying (deterministic) automaton of $T$. Since $T$
is aperiodic, so is $A$. For all
$q$, let $L_q$ be the set of strings $s$ such that there exists a run of $T$ on
$s$ that ends in $q$. Clearly, $L_q$ can be defined by some aperiodic
automaton $A_q$ obtained by setting the set of final states of $A$ to
$\{q\}$. Therefore $L_q$ is definable by some FO-formula $\psi^L_q$. 
Let $R_q$ be the set of strings $s$ such that there exists a run of
$T$ on $s$ from $q$ to some accepting state. Clearly, $u\in dom(T)$
iff there exists $q\in Q$, $v\in L_q$ and $w\in R_q$ such that $u=vw$.
The language $R_q$ is also definable by the aperiodic automaton
obtained by setting the initial state of $A$ to $q$, and therefore is
definable by some FO-formula $\psi^R_q$.

Then, $\phi_q(x)$ is defined as
$$
\phi_q(x) = [\psi^L_q]_{\prec x} \wedge [\psi^R_q]_{x\preceq}
$$
where $[\psi^L_q]_{\prec x}$ is the formula $\psi^L_q$ in which 
all quantifications of any variable $y$ is guarded by $y\prec x$ and,
similarly, $[\psi^R_q]_{x\preceq}$ is the formula $\psi^R_q$ is which
all quantifications of any variable $y$ is guarded by $x\preceq y$.
Therefore, $s\models
\phi_q(i)$ iff $s[1{:}i)\in L_q$ and $s[i{:}|s|]\in R_q$.  %

The formula $\phi_q^{\text{last}}$ is constructed similarly.
%
% The formula $\phi_q^{last}$ is defined as $\exists x. \psi_q^L\wedge \neg \exists
% y\cdot x\prec y$. 
\end{proof}
Now we start the proof of Proposition \ref{prop:foflow}.    
\begin{proof}
    For all states $p,q\in Q$, 
    let $L_{(p,X)\flows (q,Y)}$ be the language of strings $u$ such that $(p,X)\flows_1^{u}
    (q,Y)$. We show that $L_{(p,X)\flows (q,Y)}$ is an aperiodic
    language. It is indeed
    definable by an aperiodic non-deterministic automaton $A$ that keeps track of flow information
    when reading $u$. It is constructed from $T$ as follows.
    Its state set $Q'$ are pairs $(r,Z)\in 2^{Q\times \mathcal{X}}$. Its initial state is
    $\{(p,X)\}$ and final states are all states $P$ such that
    $(q,Y)\in P$. There exists a transition $P\xrightarrow{a} P'$ in
    $A$ iff for all $(p_2,X_2)\in P'$, there exists $(p_1,X_1)\in P$
    and a transition $p_1\xrightarrow{a|\rho} p_2$ in $T$ such that
    $\rho(X_2)$ contains an occurrence of $X_1$. Note that by
    definition of $A$, there exists a run from a state $P$ to a state
    $P'$ on some $s\in\Sigma^*$ iff for all $(p_2,X_2)\in P'$, there
    exists $(p_1,X_1)\in P$ such that $(p_1,X_1)\flows^{s}_1
    (p_2,X_2)$  (Remark $\star$).

    Clearly, $L(A) = L_{(p,X)\flows (q,Y)}$. It remains to show that $A$ is
    aperiodic, i.e. its transition monoid $M_A$ is aperiodic.
    Since $T$ is aperiodic, there exists $m\geq 0$ such
    that for all matrices $M\in M_T$, $M^m = M^{m+1}$. 
    For $s\in\Sigma^*$, let $\Phi_A(s) \in M_A$ (resp. $\Phi_T(s)$) 
    the square matrix of dimension
    $|Q'|$ (resp. $|Q|$) associated with $s$ in $M_A$ (resp. in $M_T$). We show that $\Phi_A(s^m) =
    \Phi_A(s^{m+1})$, i.e. $(P,P')\in \Phi_A(s^m)$ iff
    $(P,P')\in\Phi_A(s^{m+1})$, for all $P,P'\in Q'$.

    First, suppose that $(P,P')\in \Phi_A(s^m)$, and let
    $(p_2,X_2)\in P'$.  By definition of $A$, there exists 
    $(p_1,X_1)\in P$ such that $(p_1,X_1)\flows^{s^m}_1 (p_2,X_2)$, and 
    by aperiodicity of $T$, it implies that
    $(p_1,X_1)\flows^{s^{m+1}}_1 (p_2,X_2)$. Since it is true for all
    $(p_2,X_2)\in P'$, it implies by Remark $(\star)$ that there exists a run 
    of $A$ from $P$ to $P'$ on $s^{m+1}$, i.e. $(P,P')\in
    \Phi_A(s^{m+1})$. The converse is proved similarly.

    We have just proved that $L_{(p,X)\flows (q,Y)}$ is aperiodic. Therefore it is
    definable by some FO-formula $\phi_{(p,X)\flows (q,Y)}$. 
    Now, $\phi_{X\flows Y}(x,y)$ is defined by
    $$
    \phi_{X\flows Y}(x,y)\equiv x\preceq y \wedge
    \bigvee_{p,q\in Q}\{ [\phi_{(p,X)\flows (q,Y)}]^{x\preceq
      \cdot\preceq y} \wedge \phi_p(x)\wedge
    ((\text{last}(y)\rightarrow \phi_q^{\text{last}})\wedge
    (\neg\text{last}(y)\rightarrow \bigvee_{r \in Q} \phi_r(y+1)))\},
    $$
    where $\phi_p$, $\phi_r$ and $\phi_q^{\text{last}}$ were defined in Proposition
    \ref{prop:fostates} and $[\phi_{(p,X)\flows (q,Y)}]^{x\preceq \cdot\preceq y}$ is
    obtained from $\phi_{(p,X)\flows (q,Y)}$ by guarding all the
    quantifications of any variable $z$ by $x\preceq z\preceq y$. 
\end{proof}

\subsection{Proof of Proposition \ref{prop:contribution}}
\begin{proof}
The formula $\contribute_X(x)$ is defined by
$$
\begin{array}{llllllll}
\contribute_X(x) & = & \exists y\cdot \text{last}(y)\wedge \bigwedge_{p\in
  Q, q\in
  Q_f} (\Phi_q^{last}\rightarrow 
 \bigvee_{Y\in F(q)} \Phi_p(x)\wedge
\Phi_{X\flows Y}(x,y))
\end{array}
$$
where $\text{last}(y)$ defines the last position of the string,
$\Phi_p(x)$ is defined in proposition \ref{prop:fostates} and 
$\Phi_{X\flows Y}(x,y)$ in proposition \ref{prop:foflow}.
\end{proof}

\subsection{Definition of \sst-output graphs}
\label{app:sst-output}
Let $T = (Q, q_0, \Sigma, \Gamma, \varsst, \delta, \rho, Q_f)$ be an
\sst{}.  Let $u\in (\Gamma\cup X)^*$ and $s\in
\Gamma^*$. The string $s$ is said to \emph{occur} in $u$ if $s$ is a
factor of $u$. In particular, $\epsilon$ occurs in $u$ for all $u$. 
Let $O_T$ be the set of constant strings
occurring in variable updates, i.e. $O_T = \{ s\in\Gamma^*\ |\ \exists
t\in \delta,\ s\text{ occurs in } \rho(t)\}$. Note that $O_T$ is
finite since $\delta$ is finite.

Let $s\in dom(T)$. The \emph{\sst-output graph} of $s$ by $T$, denoted by
$G_T(s)$, is defined as a directed graph whose edges are labelled by
elements of $O_T$. Formally, it is the graph $G_T(s) =
(V,(E_\gamma)_{\gamma\in O_T})$ where 
$V = \{0,1,\dots,|w|\}\times \mathcal{X}\times \{in,out\}$ is the set of
vertices, $E := \bigcup_{\gamma\in O_T} E_\gamma \subseteq V\times V$
is the set of labelled edges defined as follows.

Vertices $(i,X,d)\in V$ are denoted by $(X^d,i)$. 
Let $n=|s|$ and $r = q_0\dots q_n$ the
accepting run of $T$ on $s$. The set $E$ is defined as the smallest
set such that for all $X\in \varsst$, 
\begin{enumerate}
\item $((X^{in},0),(X^{out},0))\in E_\epsilon$ if $(X,0)$ is useful,

\item for all $i<n$ and $X\in X$, if $(X,i)$ is useful and
if $\rho(q_{i},s[i+1],q_{i+1})(X) = \gamma$, then 
$((X^{in},i+1),(X^{out},i+1))\in E_\gamma$,

\item for all $i<n$ and $X\in X$, if $(X,i)$ is useful 
and if $\rho(q_{i},s[i+1],q_{i+1})(X) = \gamma_1X_1\dots
\gamma_kX_{k}\gamma_{k+1}$ (with $k>1$), then 
\begin{itemize}

\item $((X^{in},i+1), (X_1^{in},i))\in E_{\gamma_1}$
\item $((X_k^{out},i), (X^{out},i+1))\in E_{\gamma_{k+1}}$
\item for all $1\leq j< k$, $((X_j^{out},i), (X_{j+1}^{in},i))\in E_{\gamma_{j+1}}$

\end{itemize}      

\end{enumerate}

Note that since the transition monoid of $T$ is $1$-bounded, it is
never the case that two copies of some variable (say $X$) flows into 
some variable (say $Y$), therefore this graph is
well-defined and there is \textbf{no} multiple edges between two
nodes.

\section{Proofs fom Section \ref{sst2fot}}

\begin{proposition}\label{prop:uniquepath}
    $G_T(s)$ consists of a unique directed path. Moreover, the concatenation of edge labels occurring
    along this path equals $T(s)$.
\end{proposition}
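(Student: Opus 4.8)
The plan is to prove the two assertions in turn: first the edge-label invariant that the main text states informally, and then, using this invariant together with $1$-boundedness and usefulness, to force the whole structure into a single simple path. Throughout I may assume, without loss of generality, that $F(q_n)$ consists of a single variable $Y$ (if $F(q_n)=Y_1\cdots Y_m$ one obtains a sequence of $m$ paths that concatenate to the same output, and the single-variable case is the essential one).

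First I would prove by induction on the position $i\in\{0,\dots,n\}$ the following invariant: for every variable $X$ such that $(X,i)$ is useful, there is a directed path in $G_T(s)$ from $(X^{in},i)$ to $(X^{out},i)$ whose sequence of edge labels spells exactly $\sigma_\epsilon\sigma_{r,i}(X)\in\Gamma^*$, the value of $X$ after reading $s[1{:}i]$. The base case $i=0$ is immediate from the type-$(1)$ edges, since each useful variable has value $\epsilon$ and carries the $\epsilon$-labelled edge $(X^{in},0)\to(X^{out},0)$. For the inductive step I would inspect $\rho(q_i,s[i{+}1],q_{i{+}1})(X)$: if it is a constant $\gamma$, the type-$(2)$ edge directly spells the new value; if it is $\gamma_1X_1\gamma_2\cdots X_k\gamma_{k+1}$, then each $X_j$ is useful at position $i$ (its content flows into the useful $X$ at $i{+}1$), so by the induction hypothesis the path $(X_j^{in},i)$ to $(X_j^{out},i)$ spells $\sigma_\epsilon\sigma_{r,i}(X_j)$, and splicing these with the type-$(3)$ edges yields a path whose labels read $\gamma_1\,\sigma_\epsilon\sigma_{r,i}(X_1)\,\gamma_2\cdots\gamma_{k+1}=\sigma_\epsilon\sigma_{r,i+1}(X)$.

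Next I would bound degrees: every vertex of $G_T(s)$ has in-degree at most $1$ and out-degree at most $1$. The in-degree bound is structural, since each $(X^{out},i)$ is entered only from $(X^{in},i)$ or from the last variable of $X$'s update at step $i$, and each $(X^{in},i)$ is entered only when $X$ occurs in some update read at step $i{+}1$. The out-degree bound is where $1$-boundedness is essential: if $(X^{out},i)$ had two outgoing edges, the content of $X$ at position $i$ would occur in two useful destinations of the updates at step $i{+}1$, each of which reaches the output variable, so the content of $X$ would appear at least twice in $T(s)$, contradicting $1$-boundedness of $M_{s[i{+}1{:}n]}$; the same reasoning bounds the out-degree of $(X^{in},i)$. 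Consequently $G_T(s)$ is a disjoint union of simple paths and simple cycles. Applying the invariant at $i=n$ to $Y$ gives a simple path $P$ from $(Y^{in},n)$ to $(Y^{out},n)$ spelling $\sigma_\epsilon\sigma_{r,n}(Y)=\sigma_\epsilon\sigma_rF(q_n)=T(s)$; to finish I would argue $P$ exhausts $G_T(s)$. Here usefulness enters: every vertex $(X^d,i)$ lies in $G_T(s)$ only if $(X,i)$ is useful, so its content is a factor of the output; following the unique outgoing edges from any vertex cannot enter a cycle, because a cycle would form a separate component whose vertices never reach the sink and hence could not contribute to $T(s)$, contradicting usefulness. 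Thus every vertex reaches the unique sink along $P$, so $G_T(s)=P$.

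The step I expect to be the main obstacle is this last one, namely cleanly ruling out extra components and cycles: the invariant and the degree bounds are local and essentially routine, but converting the global statement ``every node is useful'' into ``every node lies on the single output path'' forces one to connect the abstract flow relation $\flows_1$ back to concrete reachability in $G_T(s)$, and it is precisely here that usefulness and $1$-boundedness must be used together.
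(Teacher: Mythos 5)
The paper gives no proof of Proposition~\ref{prop:uniquepath}: it is stated bare in the appendix, and the main text only asserts informally that the two invariants (nodes exist only for useful pairs, and $(X^{in},i)$ reaches $(X^{out},i)$ spelling the current value of $X$) hold and that usefulness "implies" a single component. So your argument is not an alternative route but a filling-in of a gap the paper leaves open, and it is the right filling-in: the spelling invariant by induction on $i$, the degree bounds, the decomposition into disjoint simple paths and cycles, and the final step turning usefulness into reachability of the sink are exactly the ingredients needed. Your observation that the single-path claim literally requires $F(q_n)$ to be a single variable (otherwise one gets $m$ disjoint paths, since the appendix's edge set contains nothing joining $(Y_j^{out},n)$ to $(Y_{j+1}^{in},n)$) is a genuine imprecision in the paper that you are right to flag.

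Two small points to tighten. First, the in-degree bound on nodes $(X^{in},i)$ is \emph{not} purely structural: $(X^{in},i)$ is entered once for each occurrence of $X$ across all (useful) right-hand sides of the updates at step $i{+}1$, so it needs the same $1$-boundedness-plus-usefulness argument you give for the out-degree of $(X^{out},i)$; only the in-degree of out-nodes and the out-degree of in-nodes are forced by determinism of the run. Second, the closing step "useful $\Rightarrow$ reaches the sink" should be made an explicit induction on $n-i$: if $(X,i)$ is useful then $X$ occurs in the update of some useful $Z$ at step $i{+}1$, so $(X^{out},i)$ has an outgoing edge leading (after splicing in the invariant paths of the intermediate variables of that update) to $(Z^{out},i{+}1)$, which reaches the sink by induction. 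You correctly identify this as the crux; writing out that one induction removes any appearance of circularity in "a cycle could not contribute to $T(s)$".
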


\subsection{Proof of Lemma \ref{lem:fopath}}
\label{app:sst2fot}
\begin{proof}
    For all variables $X,Y\in \varsst$, we denote by $C_{X,Y}$ the set of 
    pairs $(p,q,a)\in Q^2\times \Sigma$ such that there exists a transition from $p$
    to $q$ on $a$ whose variable update concatenate $X$ and $Y$ (in
    this order). We first define a formula for condition $(3)$:
$$
%\begin{array}{llllllll}
\Psi_3^{X,Y}(x,y) \ \equiv \ \exists z\cdot x\preceq z
\wedge y\preceq z 
\wedge \bigvee_{X',Y'\in\varsst, (p,q,a)\in C_{X',Y'}}[ \\
L_a(z)\wedge 
\phi_{X\flows X'}(x,z) \wedge 
\phi_{Y\flows Y'}(y,z)  \wedge \phi_p(z)\wedge \phi_q(z+1)]
%\end{array}
$$

Then, formula $\text{path}_{X,Y,d,d'}(x,y)$ is defined by
$$
\begin{array}{llllllllll}
\text{path}_{X,Y,in,in}(x,y) & \equiv & 
    \phi_{Y\flows X}(y,x) \vee \Psi_3^{X,Y} \\
\text{path}_{X,Y,in,out}(x,y) & \equiv & 
    \phi_{Y\flows X}(y,x) \vee \phi_{X\flows Y}(x,y) \vee \Psi_3^{X,Y} \\
\text{path}_{X,Y,out,in}(x,y) & \equiv & 
    \text{false} \\
\text{path}_{X,Y,out,out}(x,y) & \equiv & 
    \phi_{X\flows Y}(x,y) \vee \Psi_3^{X,Y} \\
\end{array}
$$
\end{proof}

\subsection{Proof of Lemma \ref{lem:sst2fo}}

    We show here that the transformation which associates 
    a string $s$ with its \sst-output graph $G_T(s)$ is FO-definable
    whenever $T$ is aperiodic and 1-bounded, based on Lemma \ref{lem:fopath} and 
    the construction of  \cite{FiliotTrivediLics12,AC10}. The
    idea of \cite{FiliotTrivediLics12,AC10} is to define
    the accepting runs of $T$ by using set variables, as for classical 
    automata-to-MSO transformations, and to use state information
    in order to determine which variable updates apply and then define
    the edge relations. There is a copy of the domain for each 
    variable $x$ and each $d\in\{in,out\}$. 
    Since states, variable flow and paths are all
    FO-definable when $T$ is aperiodic and 1-bounded, it follows that $G_T$ is FOT-definable. We refer
    the reader to \cite{FiliotTrivediLics12,AC10} for more
    details, but we recall here that the domain formula $\phi_{\dom}$
    is a sentence defining the domain of $T$, and therefore in our
    case is FO-definable, since $\dom(T)$ is aperiodic. To illustrate
    the construction, we also give
    the formula $\phi_{E_\gamma}^{X^{in},
      X^{in}}(y,x)$ that defines the $\gamma$-labelled edge
    relation for the domain copy $X^{in}$. It is defined by
    $$
    y = x+1\wedge \bigvee_{t := (p,a,q)\in \delta,
      \rho(t)(X) = \gamma X \beta...} L_a(y)\wedge \phi_p(x)\wedge
    \phi_q(y)\wedge \contribute_X(y)
    $$
    where $\phi_p$ and $\phi_q$ are FO-formulas defined in proposition
    \ref{prop:fostates} and $\contribute_X(y)$ has been defined in
    Proposition \ref{prop:contribution}.

    Thanks to Lemma \ref{lem:fopath}, the transitive closure between
    some copy $X^d$ and some copy $Y^{d'}$ is
    defined by the FO-formula
    $$
    \phi_{\preceq}^{X^d, Y^{d'}}(x,y)\ \equiv\ \text{path}_{X,Y,d,d'}(x,y)
    $$
    \qed

\section{Proofs from Section \ref{sec:sst-la}}
\label{app:sst-la}
We first define the transition monoid of an $\sstla$ $(T,A)$ where
$A = (Q_A, \Sigma, \delta_A, P_f)$ is a deterministic lookahead automaton
and $T = (\Sigma, \Gamma, Q, q_0, Q_f, \delta, \varsst, \rho, F)$.

\textbf{Uniqueness of accepting runs} Let $s=s_1\dots s_n\in\Sigma^*$
and $r : (q_0,P_0)\xrightarrow{s_1} (q_1,P_1)\dots
(q_{n-1},P_{n-1})\xrightarrow{s_n} (q_n,P_n)$ be an accepting run of
$(T,A)$ on $s$. We not only show that $r$ is unique, but that the
sequence of transitions associated with $r$ is unique. Given a
sequence of transitions of $T$, it is clear that there exists exactly
one run associated with that sequence, since $A$ is deterministic. 
 
Suppose the
sequence of transitions is not unique, i.e. there exists another
accepting run on $r$ which follows another transition of $T$
eventually. Let $i\geq 1$ be the
smallest index where the $i$-th transitions are different on both
runs. Before taking the $i$-th transition, both runs are
in the configuration $(q_{i-1}, P_{i-1})$. Suppose that the $i$-th
transition on the first run is $(q_{i-1}, a, p, q_i)$ for some
look-ahead state $p$, and is $(q_{i-1}, a, p', q'_i)$ on the other
run, for some state $q'_i$ and look-ahead state $p'$ such that either 
$p\neq p'$ or $q_i\neq q'_i$. Since both runs are accepting, the
suffix $s_{i+1}\dots s_n$ is in $L(A_p)\cap L(A_{p'})$, which is
impossible by the mutual-exclusiveness of look-aheads. Therefore $p =
p'$, but in that case, $q_i = q'_i$ since $\delta$ is a function. This
leads to a contradiction.

\textbf{Variable Flow and Transition Monoid for \sstla{}}. 
Let $Q_A$ represent the states of the (deterministic) lookahead automaton $A$, and $Q$ denote states of the 
  \sstla{}.

The transition monoid of an \sst{} with look-ahead depends on its configurations
and variables. 
It extends the notion of transition monoid for \sst{} with look-ahead states
components but is defined only on {\it useful} configurations $(q,P)$. 
A configuration $(q,P)$ is {\it useful} iff it is {\it accessible} and 
{\it co-accessible} : that is, $(q,P)$ is reachable from the initial
configuration $(q_0,\varnothing)$ and some accepting configuration $(q_f,P)\in Q_f\times 2^{P_f}$ is reachable from $(q,P)$.

Note that given two useful configurations $(q,P)$, $(q',P')$ and a
string $s\in\Sigma^*$, there exists at most one run from $(q,P)$ to
$(q',P')$ on $s$. Indeed, since $(q,P)$ and $(q',P')$ are both useful, 
there exists $s_1,s_2\in\Sigma^*$ such that 
$(q_0,\varnothing)\flows^{s_1} (q,P)$ and $(q',P') \flows^{s_2}
(q_f,R_f)$ where $(q_f,R_f)$ is accepting. If there are two runs from
$(q,P)$ to $(q',P')$ on $s$, then we can construct two accepting runs
on $s_1ss_2$, which contradicts the fact that accepting runs are
unique. We can even strengthen this result by showing that the
sequence of transitions associated with the unique run from $(q,P)$
to $(q',P')$ on $s$ is as well unique. We denote by 
$\textsf{useful}(T,A)$ the useful configurations of $(T,A)$.

Thanks to the uniqueness of the sequence of transitions associated
with the run of an \sstla{} from and to useful
configurations on a given string, one can extend the notion of
variable flow naturally by considering, as for \sst{}, the composition
of the variable updates along the run.

A string $s\in \Sigma^*$ maps to a square  matrix $M_s$ of dimension
$|Q\times 2^{Q_A}|\cdot |\varsst|$ and is defined by  $M_{s}[(q, P), X][(q',P'),
X']=n$ if there exists a run $r$ from $(q,P)$ to
$(q',P')$ on $s$ such that $n$ copies of $X$ flows to $X'$ over the
run $r$, and $(q,P)$ and $(q',P')$ are both useful  (which implies
that the sequence of transitions of $r$ from $(q,P)$ to $(q',P')$ is
unique, as seen before), otherwise $M_{s}[(q, P), X][(q',P'), X']=\bot$.

\subsection{Proof of  Lemma \ref{lem:aperiodicSSTLA}}
\label{sec:sstla2sst}

\begin{proof}
Let $(T,A)$ be an $\sst_\la$, with  $A = (Q_A, \Sigma, \delta_A, P_f)$  a deterministic lookahead automaton, \\
and $T = (\Sigma, \Gamma, Q, q_0, Q_f, \delta, \varsst, \rho,
F)$. Without loss of generality, we make the following assumption
$$
\text{\textbf{Assumption} } \star:\ \forall q,q',q''\in Q,~\forall p,p'\in
Q_A,~\forall a\in\Sigma,~\quad p\neq p'\wedge \delta(q,a,p)=q'\wedge
\delta(q,a,p')=q''\implies q'\neq q''
$$
This is indeed wlog: if $(T,A)$ does not satisfy this assumption, then
we can have as many copies of states $Q$ as states of $Q_A$ (i.e. the new set of
states of $T$ is $Q\times Q_A$) and transform the transitions
accordingly to maintain uniqueness of the successor states w.r.t. to
input symbols and look-ahead states. Moreover, it is easy to show that
this transformation preserves aperiodicity.

\vspace{2mm}
\textbf{Construction of $T'$} We construct an aperiodic and
1-bounded \sst{} $T'$ equivalent to $T$.
%we show how to obtain an equivalent aperiodic \nsst{} $T'$. 
As explained in definition \ref{sst-LA}, the unique  run of a string $s$ on $(T,A)$ is not only a sequence of $Q$-states, but also a collection of the look ahead states $2^{Q_A}$.
At any time, the current state of $Q$, and collection of look-ahead states $P \subseteq Q_A$ is a configuration. 
A configuration $(q_1,P_1)$, on reading $a$, evolves into 
$(q_2,P_2 \cup\{p_2\})$, where $\delta(q_1,a,p_2)=q_2$ is a transition  in the $\sstla$ and  
$\delta_A(P_1,a)=P_2$, where $\delta_A$ is the transition function of the look ahead automaton $A$.
Note that the transition monoid of the $\sstla$ is aperiodic and 1-bounded by assumption. We now show 
how to remove the look-ahead, resulting in an equivalent $\sst{}$ $T'$ whose transition monoid 
is aperiodic and 1-bounded. 

% The first step is to ``determinize'' $T$ by collecting sets of states 
%starting from $q_0$. Thus, given $T=(Q,q_0,\Sigma, \Gamma, \varsst, \delta, \rho, Q_f)$, we define 
While defining $T'$, we ``collect'' together all 
the states resulting from transitions of the form $(q,a,p,q')$ and $(q,a,p',q'')$ in the $\sstla$.  
We define $T'=(\Sigma, \Gamma,Q',q'_0, \delta', \varsst', \rho',Q_f')$
with: 
\begin{itemize}
\item $Q' = 2^{\textsf{useful}(T,A)}$ where $\textsf{useful}(T,A)$ are the
useful configurations of $(T,A)$ ($\textsf{useful}(T,A)$ is computable in exponential
time from $(T,A)$),

\item $q'_0=\{(q_0,\emptyset)\}$ (wlog we assume that $(T,A)$ accepts
  at least one input therefore $(q_0,\emptyset)$ is useful),

\item $Q'_f$, the set of accepting states, is defined by 
$\{ S\in Q'\ |\ \exists (q,P)\in S,\ q\in Q_f\wedge P\subseteq
P_f\}$. 

\item $\varsst'=\{X_{q'} \mid X \in \varsst, q' \in
  \textsf{useful}(T,A)\}$,

\item   The transitions are defined as follows:
$\delta'(S,a)=\bigcup_{(q,P)\in S}\Delta((q,P),a)$ where \\
$\Delta((q,P),a)=\{(q',P' \cup\{p'\}) \mid (q,a,p',q') \in \delta$ and  
$\delta_A(P,a)=P'\}\cap \useful(T,A)$.

\end{itemize}

Before defining the update function, we first assume a total ordering
$\preceq_{\useful(T,A)}$ on $\useful(T,A)$. 
For all $(p,P) \in Q \times 2^{Q_A}$, we define the substitution $\sigma_{(p,P)}$ as  $X \in
\varsst \mapsto X_{(p,P)}$.  Let $(S,a,S')$ be a transition of $T'$. Given a
state $(q',P') \in S'$, there might be several predecessor states $(q_1,P_1),\dots,(q_k,P_k)$
in $S$ on reading $a$. The set $\{(q_1,P_k),\dots,(q_k,P_k)\} \subseteq S$ is denoted by 
$Pre_S((q',P'),a)$. Formally, it is defined by $\{ (q,P)\in S\ |\
(q',P')\in \Delta((q,P),a)\}$.

We consider only the variable update of the
transition from the  minimal predecessor state.
Indeed, since any string has at most one accepting run in the $\sstla$
$(T,A)$ (and at most one associated sequence of transitions), if two runs reach the same state at
some point, they will anyway define the same output and therefore we
can drop one of the variable update, as shown in
\cite{FiliotTrivediLics12}. Formally, the variable update
$\rho'(S,a,S')(X_{(q',P')})$, for all $X_{(q',P')}\in\varsst'$ is
defined by $\epsilon$ if $(q',P') \notin S'$, and by
$\sigma_{(q,P)}\circ \rho(q,a,p,q')(X)$, where $(q,P) = \text{min}\ \{
(r,R)\in S\ |\ (q',P')\in \Delta((r,R),a)\}$, and $\delta(q,a,p) = q'$
(by Assumption $\star$ the look-ahead state $p$ is unique). 
It is shown in \cite{FiliotTrivediLics12} that indeed $T'$ is
equivalent to $T$. We show here that the transition monoid of $T'$ is aperiodic and
$1$-bounded.

For all $S\in Q'$, let us define $\Delta^*(S,s)=\{ (q',P') \mid \exists
(q,P) \in S$ such that $(q,P) \rightsquigarrow^s_{T,A}
(q',P')\}\cap\useful(T,A)$.

\vspace{2mm}
\textbf{Claim} Let $M_{T'}$ be the transition monoid of $T'$ and
$M_{T,A}$ the transition monoid of $(T,A)$. Let
$S_1,S_2\in Q'$, $X_{q,P},Y_{q',P'}\in \varsst'$ and $s\in\Sigma^*$. Then
one has $M_{T',s}[S_1,X_{(q,P)}][S_2,Y_{(q',P')}]=i\geq 0$
iff $S_2 = \Delta^*(S_1,s)$ and one of the 
following hold:
\begin{enumerate}
\item either $i=0$ and, $(q,P)\not\in S_1$ or $(q',P')\not\in S_2$, or

\item $(q,P) \in S_1$, $(q',P')\in S_2$, $(q,P)$ is the minimal
  ancestor in $S_1$ of $(q',P')$ (i.e. $(q,P) = \text{min}\ \{ (r,R)\in S_1\ |\ (q',P')\in
  \Delta^*((r,R),s)\}$), and $M_{(T,A),s}[(q,P),X][(q',P'),Y] = i$. 
\end{enumerate}

\vspace{2mm}
\textit{Proof of Claim.} It is easily shown that 
$M_{T',s}[S_1,X_{(q,P)}][S_2,Y_{(q',P')}]\geq 0$ iff $S_2 =
\Delta^*(S_1,s)$. Let us show the two other conditions. Assume that 
$M_{T',s}[S_1,X_{(q,P)}][S_2,Y_{(q',P')}] = i\geq 0$. The variable
update function is defined in such a way that after reading $s$ from
$S_1$, all the variables $Z_{(r,R)}$ such that $(r,R)\not\in S_2$ have
just been reset to $\epsilon$ (and therefore no variable can flow from
$S_1$ to them). In particular, if  $(q',P')\not\in S_2$, then no
variable can flow in $Y_{(q',P')}$ and $i=0$.

Now, assume that $(q',P')\in S_2$, and consider the sequence of states 
$S_1,S'_1,S'_2,\dots,S'_k,S_2$ of $T'$ on reading $s$. By definition
of the variable update, the variables that are used to update
$Y_{(q',P')}$ on reading the last symbol of $s$ from $S'_k$ 
are copies of the form $Z_{(r,R)}$ such that $(r,R)$ is the minimal
predecessor in $S'_k$ of $(q',P')$ (by $\Delta$). By induction, it is
easily shown that if some variable $Z_{(r,R)}$ flows to $Y_{(q',P')}$ 
from $S_1$ to $S_2$ on reading $s$, then $(r,R)$ is necessarily the
minimal ancestor (by $\Delta^*$) of $(q',P')$ on reading $s$. 
In particular if $(q,P)\not\in S_1$, then $i=0$. 

Finally, if $i>0$, then necessarily $(q,P)$ is the minimal ancestor in
$S_1$ of $(q',P')$ on reading $s$, from $S_1$ to $S_2$, and since $T'$
mimics the variable update of $(T,A)$ on the copies, we get 
that $M_{(T,A),s}[(q,P),X][(q',P'),Y]=i$.

The converse is shown similarly.  \hfill \textit{End of Proof
  of Claim}.

\vspace{2mm}
\textbf{1-boundedness and aperiodicity of $\mathbf{T'}$} 
1-boundedness is an obvious consequence of the claim and the fact that $(T,A)$ is
$1$-bounded. Let us show that $M_{T'}$ is aperiodic. We know that
$M_{T,A}$ is aperiodic. Therefore there exists $n\in\mathbb{N}$ such
that for all strings $s\in\Sigma^*$,  $M_{(T,A),s}^n =
M_{(T,A),s}^{n+1}$.

Let us first show that for all $S_1,S_2\in Q'$, and all strings $s\in\Sigma^*$,
$\Delta^*(S_1,s^n) = S_2$ iff $\Delta^*(S_1,s^{n+1}) = S_2$.
  Indeed, 
  \begin{itemize}
\item $S_2 = \Delta^*(S_1,s^n)$, iff 
$S_2 = \{ (q',P')\in\textsf{useful}(T,A)\ |\ \exists (q,P)\in S_1,\
(q,P)\flows^{s^n}_{T,A},(q',P')\}$, iff
\item $S_2 = \{ (q',P')\in\textsf{useful}(T,A)\ |\ \exists (q,P)\in S_1,\
M_{(T,A),s}^n[(q,P),X][(q',P'),Y]\geq 0\text{ for some
}X,Y\in\varsst\}$, iff
\item by aperiodicity of $M_{T,A}$, $S_2 = \{ (q',P')\in\textsf{useful}(T,A)\ |\ \exists (q,P)\in S_1,\
M_{(T,A),s}^{n+1}[(q,P),X][(q',P'),Y]\geq 0 \\ \text{ for some
}X,Y\in\varsst\}$, iff 
\item $S_2 = \Delta^*(S_1,s^{n+1})$.
\end{itemize}

Let $S_1,S_2\in Q'$ and $X_{(q,P)}, Y_{(q',P')}\in \varsst$. Let also
$s\in\Sigma^*$. We study condition $(1)$ of the claim and show that 

\begin{quote}
$M_{T',s}^n[S_1,X_{(q,P)}][S_2, Y_{(q',P')}] = i$ and condition $(1)$
of the claim holds, iff 
$M_{T',s}^{n+1}[S_1,X_{(q,P)}][S_2,
Y_{(q',P')}] = i$ and condition $(1)$ of the claim holds. 
\end{quote}

\begin{itemize}
\item Indeed, $M_{T',s}^n[S_1,X_{(q,P)}][S_2, Y_{(q',P')}] = 0$ and,
$(q,P)\not\in S_1$ or $(q',P')\not\in S_2$ iff (by the claim) 
$\Delta^*(S_1,s^n) = S_2$, and $(q,P)\not\in S_1$ or $(q',P')\not\in
S_2$, iff by what we just showed, $\Delta^*(S_1,s^{n+1}) = S_2$, and $(q,P)\not\in S_1$ or $(q',P')\not\in
S_2$, iff (by the claim) $M_{T',s}^{n+1}[S_1,X_{(q,P)}][S_2,
Y_{(q',P')}] = 0$ and condition $(1)$ of the claim holds. 
\end{itemize}

Let us now treat condition $(2)$ of the claim, and  show that 
\begin{quote}
$M_{T',s}^{n}[S_1,X_{(q,P)}][S_2,
Y_{(q',P')}] = i$ and condition $(2)$ of the claim holds, iff 
$M_{T',s}^{n+1}[S_1,X_{(q,P)}][S_2, Y_{(q',P')}] = i$ and condition
$(2)$ of the claim holds.
\end{quote}

\begin{itemize}
\item   We only show one direction, the other being
proved exactly similarly. Suppose that
$M_{T',s}^{n}[S_1,X_{(q,P)}][S_2, Y_{(q',P')}] = i$ and $(q,P)\in
S_1$, $(q',P')\in S_2$, and $(q,P)$ is the minimal ancestor in $S_1$
of $(q',P')$, and $M_{(T,A),s}^n[(q,P),X][(q',P'),Y] = i$. It implies,
by the claim, that $\Delta^*(S_1,s^n) = S_2$, and therefore 
$\Delta^*(S_1,s^{n+1}) = S_2$. Now, we have
$(q,P) = \text{min}\ \{ (r,R)\in S_1\ |\ (q',P')\in
\Delta^*((r,R),s^n)\}$. Since $\Delta^*((r,R),s^n) =
\Delta^*((r,R),s^{n+1})$ for all $(r,R)\in S_1$, we have
$(q,P) = \text{min}\ \{ (r,R)\in S_1\ |\ (q',P')\in
\Delta^*((r,R),s^{n+1})\}$. Finally, $M_{T,A,s}^{n+1}[q,P,X][q',P',Y] = 
M_{T,A,s}^{n}[q,P,X][q',P',Y] = i$ (by aperiodicity of $(T,A)$). 
By the claim, it implies that $M_{T',s}^{n+1}[S_1,X_{(q,P)}][S_2,
Y_{(q',P')}] = i$ and condition $(2)$ of the claim is satisfied. 
\end{itemize}

Since by the claim we can be only in case $(1)$ or $(2)$, it implies
that $M_{T'}$ is aperiodic.

\end{proof}

\section{Proofs from Section \ref{fot:sstla}}

\subsection{Proof of Lemma \ref{lem:addr}}
\label{app:fot-sstla}
\begin{proof}
Intuitively, if $j_1\neq
j_2$ and $j_1^c$, $j_2^c$ are both $i$-heads, then the string $s$
can be decomposed as in the following figure:
\begin{center}
\begin{tikzpicture}[->,>=stealth',shorten >=1pt,auto,scale=0.9]

\tikzstyle{graphnode}=[circle,fill=black,thick,inner
sep=0pt,minimum size=1.5mm]

\node [draw=none] (input) at (-0.2,0) {$s$} ;

\node [graphnode] (j1) at (1.5,0) {} ;
\node [graphnode] (j2) at (3,0) {} ;
\node [graphnode] (i) at (5,0) {} ;
\node [graphnode] (j3) at (7,0) {} ;

\node  (j1) at (1.5,-0.3) {$j_1$} ;
\node  (j2) at (3,-0.3) {$j_2$} ;
\node  (i) at (5,-0.3) {$i$} ;
\node  (j3) at (7,-0.3) {$j_3$} ;

\node  (j1) at (1.5,0.3) {$a$} ;
\node  (j2) at (3,0.3) {$a$} ;

\draw[|-|] (0,0) -- (8,0) ;

\node [graphnode] (j1c) at (1.5,-2) {} ;
\node [graphnode] (j2c) at (3,-2) {} ;
\node [graphnode] (j3d) at (7,-1) {} ;

\node  (j1ct) at (1.2,-2) {$j_1^c$} ;
\node  (j2ct) at (2.7,-2) {$j_2^c$} ;
\node  (j3dt) at (7.3,-1) {$j_3^d$} ;

\draw [-,snake=snake] (j1c) -- (2.5,-3);
\draw [-,snake=snake] (j2c) -- (4,-3);
\draw [-,dotted,snake=snake] (2.5,-3) -- (2.9,-3.4);
\draw [-,dotted,snake=snake] (4,-3) -- (4.4, -3.4);

\draw [-,densely dotted] (i) -- (5,-3.5);

\draw [-,dotted] (j1c) -- (1.5,-4) ;

\draw [-,dotted] (j2c) -- (3,-4.5) ;

\draw [-,dotted] (j3d) -- (7,-4.5) ;

\draw [|-(] (0,-4) -- node[below] {$s_1$} (1.4, -4) ;

\draw [)-(] (1.6,-4) -- node[below] {$s_2$} (6.9, -4) ;

\draw [)-|] (7.1,-4) -- node[below] {$s_3$} (8, -4) ;

\draw [|-(] (0,-4.5) -- node[below] {$s'_1$} (2.9, -4.5) ;

\draw [)-(] (3.1,-4.5) -- node[below] {$s'_2$} (6.9, -4.5) ;

\draw [)-|] (7.1,-4.5) -- node[below] {$s'_3$} (8, -4.5) ;

\fill[blue!20,fill opacity=0.2] (0,-0.5) rectangle (8,-3.5);

\node [rotate=90] at (-0.2, -2) {output string graph};

\path[densely dashed,->] (j3d) edge [bend right=20]  (j2c) ;
\path[densely dashed,->] (j3d) edge [bend right=25]  (j1c) ;

\end{tikzpicture}
\end{center}

Since the output is a string, there is necessarily some edge from
a position $j_3^d$ such that $j_3>i$, to $j_1^c$ or $j_2^c$. It can be
easily shown that  the
existence of such an edge is FO-definable by a formula with two-free
variables of quantifier rank at most
     $k$. Since the two decompositions are indistinguishable by
formulas with two-free variables of quantifier rank at most $k$, by Proposition
\ref{prop:ktype}.\ref{prop:indis}, one gets that an edge from $j_3^c$ to the other
considered $i$-head also exist, which contradicts the fact that
the output is a string.

    We formally prove the result now. Suppose that there exist $j_1\neq j_2$ that both satisfy the
    preconditions and suppose that $j_1^c$ and $j_2^c$ are both
    $i$-heads. We exhibit a contradiction.

    By definition of $i$-heads, $j_1^c$ and $j_2^c$ are
    alive, and therefore both contribute to the output $T(s)$. 
    Since $T(s)$ is a string (i.e. a unique directed path), there is
    necessarily some incoming edge 
    to $j_1^c$ or $j_2^c$ in $T(s)$, say $j_1^c$. Formally, there
    exists a position $j_3$ and a copy $d\in C$ such that 
    $(j_3^d, j_1^c)$ is an edge of $T(s)$, i.e. $s\models
    \phi^{d,c}_\suc(j_3,j_1)$. Since $j_1^c$ is an $i$-head, 
    it is necessarily the case that $j_3>i$. We claim that 
    $s\models \phi^{d,c}_\suc(j_3,j_2)$, i.e. there exists an edge in
    $T(s)$ from $j_3^d$ to $j_2^c$, which contradicts the fact that
    $T(s)$ is a string.

    Indeed, let decompose the input string $s$ as 
    $$
    \begin{array}{llllllllllllll}
    s_1 & = & s[1{:}j_1) &  \quad &   s'_1 & = & s[1{:}j_2) \\
 s_2 & =& s(j_1{:}j_3) & \quad & s'_2 & =& s(j_2{:}j_3) \\
 s_3 & =&  s(j_3{:}|s|] & \quad &  s'_3 & =& s(j_3{:}|s|] \\
    \end{array}
    $$

    We show that the conditions of Proposition~\ref{prop:ktype}.\ref{prop:indis} are
    satisfied by this decomposition.  Clearly, $s = s_1s[j_1] s_2 s[j_3] s_3 = s'_1 s[j_2] s'_2 s[j_3]
    s'_3$. Moreover, $s_1 \equiv_{k+2} s'_1$ by hypothesis, and, $s_3
    \equiv_{k+2} s'_3$ since $s_3 = s'_3$.  We also have
    $s_2 = s(j_1{:}i]s(i{:}j_3)$ and $s'_2 =
    s(j_2{:}i]s(i{:}j_3)$ and by hypothesis, 
    $s(j_1{:}i] \equiv_{k+2} s(j_2{:}i]$. Hence,  by Proposition 
    \ref{prop:ktype}.\ref{prop:compo} one gets $s \equiv_{k+2} s'$. 
   Since $s \models \phi^{d,c}_\suc(j_3,j_1)$, 
  and  $s \equiv_{k+2} s'$,  using 
      Proposition~\ref{prop:ktype}.\ref{prop:indis} we get 
      $s\models \phi^{d,c}_\suc(j_3,j_1)$ iff 
    $s \models \phi^{d,c}_\suc(j_3,j_2)$ (Recall  that by definition of
    quantifier rank $k$ of $T$, $\phi^{d,c}_\suc$ has quantifier rank at most $k$). Since $s\models
    \phi^{d,c}_\suc(j_3,j_1)$, one gets $s\models
    \phi^{d,c}_\suc(j_2,j_1)$, which leads to the contradiction
    mentioned earlier. 
    The proof is the same when assuming that $j_1^c$ and $j_2^c$ are both $i$-tails.
\end{proof}

\subsection{Proof of Lemma \ref{lem:addrfo}}
\label{app:addrfo}
\begin{proof}
    Let $\alpha\in\Addr_T$ and $c\in C$. Let us first prove the Lemma
    for the heads. Let $x,y$ be two variables (intended to capture positions $i$ and
    $j$ respectively).

    The condition that $\tau_1(\alpha) = \ktype{s[1{:}j)}{k+2}$ can be
    expressed, thanks to Proposition \ref{prop:hintikka},  by the formula 
  $\phi_2(x,y)$ of quantifier rank at most $k+2$ obtained by guarding all the quantifications of any
  variable $z$ in $\phi_{\tau_1(\addr)}$ by $z\prec y$.

  The condition $a(\addr) = s[j]$ is expressed by the formula
  $\phi_3(y) = L_{a(\addr)}(y)$.

  The condition $\tau_2(\addr) = \ktype{s(j{:}i]}{k+2}$ is defined, again
  by using Proposition
  \ref{prop:hintikka}, by the formula $\phi_4(x,y)$ of quantifier rank
  at most $k+2$, obtained by
  guarding all the quantifications of any variable $z$ in
  $\phi_{\tau_2(\addr)}$ by $y\prec z\preceq x$.

  Finally, the formula $\Phi_{\absaddr(\addr)}^c(x,y)$ is defined by

  $$
  \Phi_{\absaddr(\addr)}^c(x,y)\ \equiv\ \text{head}_c(x,y)\wedge
  \phi_2(x,y)\wedge \phi_3(y)\wedge \phi_4(x,y)
  $$

  The formula $\text{head}_c(x,y)$ has quantifier rank at most $k+2$,
  therefore $\Phi_{\absaddr(\addr)}^c(x,y)$ has quantifier rank at
  most $k+2$.

  The formula $\Phi_{\tailaddr(\addr)}^c(x, y)$ is defined by

  $$
  \Phi_{\tailaddr(\addr)}^c(x, y)\ \equiv\ \exists z.\bigvee_{c'\in C}
  \Phi_{\absaddr(\addr)}^{c'}(x,z)\wedge \phi_{\preceq}^{c', c}(z,
  y)\wedge \forall z'>x.\neg \bigvee_{c''\in C}(\phi_{\preceq}^{c', c''}(z,z')\wedge \phi_{\preceq}^{c'', c'}(z',y))
  $$

  This formula has quantifier rank at most $k+3$. 
\end{proof}

\section{Proofs from Section \ref{sec:aperiodicSSTLA}}

\subsection{$T_\la$ admits exactly one accepting runs per string $s\in
dom(T_\la)$}
\label{onerun}
\begin{proof}
    For any two transitions $(\tau_1, a, p_\tau, \tau'_1), (\tau_2, a,
    p_{\tau'}, \tau'_2)$ of $T_\la$, if $\tau_1 = \tau_2$, then on suffix
    $u\in \Sigma^*$, at most one of the two transitions can be
    triggered, because $u$ cannot satisfy both types $\tau$ and
    $\tau'$, since $k$-types partition  $\Sigma^*$.
\end{proof}

\subsection{Proof of Lemma \ref{lem:stateflowb}}
\label{stateflowb}

\begin{proof}
First, recall that the look-ahead automaton has transitions of the
form $(\tau,\tau')\xrightarrow{a} (\tau, \tau'.\ktype{a}{k+2})$ and
accepting state of the form $(\tau,\tau)$ for all $(k+2)$-types $\tau$.

Since we assume that $s\in\dom(T_\la)$, given an integer
$j\in\{0,\dots,n\}$, we can precisely define the
$j$-th configuration $(q_j,P_j)$ of the unique accepting run $r$ of
$T_\la$ on $s$. By definition of $T_\la$ and its look-ahead automaton,
we indeed have:

\begin{enumerate}
  \item $q_j = \ktype{s[1{:}j)]}{k+2}$ (recall that $q_j$ is a
    $k+2$-type)

  \item $R_j = \{ (\ktype{s[{\ell{+}1{:}n}]}{k+2},\ktype{s[\ell{+}1{:}j]}{k+2})\
    |\ 1\leq \ell\leq j\}$
    \end{enumerate}

Notice that $q_0$ is indeed equal to $\ktype{\epsilon}{k+2} =
\ktype{s[1{:}0]}{k+2}$ and $R_0 = \varnothing = 
\{ (\ktype{s[{\ell{+}1{:}n}]}{k+2},\ktype{s[\ell{+}1{:}j]}{k+2})\
    |\ 1\leq \ell\leq 0\}$. Let us express  equalities 1. and 2. in FO.

We construct a formula $\Phi_{q,R}(x)$ such that $s\models
\Phi_{q,R}(j)$ iff $(q,R) = (q_j,R_j)$, for all positions
$j\in\dom(s)$. It is defined by:

$$
\Phi_{q,R}(x)\ \equiv\ \Phi_q^{\cdot\preceq x}(x)\wedge \Phi_R(x)
$$

where $\Phi_q^{\cdot\preceq x}(x)$ expresses the fact that the prefix up
to position $x$ has type $q$, and is obtained by 
guarding all the quantifiers of $\Phi_q$ (the Hintikka formula
corresponding to type $q$, see Proposition \ref{prop:hintikka})
by $\preceq x$. The formula $\Phi_R(x)$ expresses the fact that the
look-ahead states after reading position $x$ are $R$:

The second property is expressed by the conjunction of the two
following formulas $\Phi_R^2(x)$ and $\Phi_R^3(x)$, where
$$
\Phi_R^2(x) = \forall z\cdot (1\preceq z\preceq x\rightarrow
\bigvee_{(\tau,\tau')\in R} \Phi_\tau^{z\prec\cdot}\wedge
\Phi_{\tau'}^{z\prec \cdot \preceq x})
$$

$$
\Phi_R^3(x) = \bigwedge_{(\tau,\tau')\in R} \exists z\cdot 1\preceq
z\preceq x \wedge \Phi_\tau^{z\prec \cdot}\wedge
\Phi_{\tau'}^{z\prec \cdot \preceq x}
$$

where the superscript $z\prec \cdot$ and $z\prec \cdot\preceq x$
indicates the guards applied to the quantifiers.

Finally, the formula $\text{sflow}_{q,q',R,R'}(x,y)$ is defined by
distinguishing among the cases $x=y=1$, $x=1\prec y$ and $1\prec
x\preceq y$:
$$
\begin{array}{llllllll}
\text{sflow}_{q,q',R,R'}(x,y)\ \equiv\  x\preceq y\wedge \\
(x=y=1\ \wedge\ \Psi_{q=q_0,R=\varnothing}\wedge \Psi_{q'=q_0,R'=\varnothing}) \vee
(x=1\prec y\ \wedge\ \Psi_{q=q_0,R=\varnothing}\wedge \Phi_{q',R'}(y)) \vee 
(x>1 \wedge \Phi_{q,R}(x-1)\wedge \Phi_{q',R'}(y))
\end{array}
$$
where $\Psi_{q=q_0,R=\varnothing}\equiv \top$ if $q=q_0$ and
$R=\varnothing$, otherwise $\perp$, and similarly for
$\Psi_{q'=q_0,R'=\varnothing}$.
The formula $\text{sflow}_{q,q',R,R'}(x,y)$
has a quantifier rank at most $k+3$. 
\end{proof}

\subsection{Proof of Lemma \ref{lem:relativeflow}}
\label{relativeflow}
\begin{proof}
We define two different formulas, depending on whether $m=0$ or $m\geq
1$. 

Suppose first that $m\geq 1$. 
We show how to define the formula $\text{rflow}_{t}(x,y)$ in
  FO by expressing the conditions of Lemma \ref{var-add} and taking
  the resulting formula in conjunction with the formula
  $\text{sflow}_{q,q',R,R'}(x,y)$ obtained from Lemma
  \ref{lem:stateflowb}. One uses two free variables $x'$ and $y'$ to extract
  the $x-$ and $y-$heads corresponding to adresses $\alpha$ and
  $\alpha'$, thanks to Lemma \ref{lem:addrfo}. The whole formula is
  defined by:

$$
\text{rflow}_{t}(x,y)\ \equiv\
\text{sflow}_{q,q',R,R'}(x,y)\wedge \bigvee_{c,c'\in C} \exists x'\exists y'\
\Phi_{\absaddr(\addr)}^c(x,x')\wedge
\Phi_{\absaddr(\addr')}^{c'}(y,y')\wedge 
\Phi_\preceq^{c',c}(y',x')\wedge$$
$$~~~~~~~~~~~~~~~~\neg \exists z[ z>y'\wedge
\bigvee_{c''\in C} \Phi_\preceq^{c',c''}(y',z)\wedge \Phi_{\preceq}^{c'',c}(z,x')]
$$
This formula has quantifier rank at most $k+4$.

If $m=0$, then the formula $\text{rflow}_t(x,y)$ is obtained by
taking the conjunction of the negation of the previous formula with
the formula $\text{sflow}_{q,q',R,R'}(x,y)$. 
\end{proof}

%%%at state p x:=ay goto q, in state q, z:=bx goto state r
%%%x_q=a y_p, z_r=b x_q

\subsection{Proof of Lemma \ref{lem:flowFO}}
\label{app:flowFO}
 
\begin{proof}
    We have to distinguish two cases, depending on whether $|s|=0$ or
    $|s|>0$. For these two cases, we construct two formula $\flow_t^0$
    and $\flow_t^{>0}$, and then define $\flow_t$ by
    $$
    \flow_t\ \equiv\ \flow_t^0 \wedge \flow_t^{>0}
    $$

    In case $|s|=0$, it should be true that $q=q'$, $R=R'$, 
    $X_\alpha = X_{\alpha'}$ and $m=1$. It is defined by the formula
    $$
    \text{flow}_t^0\equiv (\neg \exists x.\top)\rightarrow B_{q=q',R=R',\alpha=\alpha',m=1}
    $$
    where $B_{q=q',R=R',\alpha=\alpha',m=1} = \top$ if indeed $q=q'$,
    $R=R'$, $\alpha=\alpha'$ and $m=1$, and $\perp$ otherwise.

    Then, we consider the case $|s|>0$ and construct the formula
    $\flow_t^{>0}$ as follows. We first transform the formula $\text{rflow}_t(x,y)$ into a
    sentence $\text{rflow}_t^{xy}$ on the FO-signature whose alphabet
    is extended with pairs of Boolean values that indicate the
    positions of $x$ and $y$ respectively, so that 
    $s\models \text{rflow}_t(i,j)$ iff $(s,i,j)\models
    \text{rflow}_t^{xy}$, where $(s,i,j)$ is the string $s$ extended
    with the pair $(0,1)$ at position $i$, the pair $(1,0)$ at
    position $j$, and the pairs $(0,0)$ elsewhere. 

    The formula $\text{rflow}_t^{xy}$ is defined by
    $\exists x\exists y[\text{rflow}_t'(x,y)\wedge \bigvee_{a,b\in\Sigma}
    L_{(a,0,1)}(x)\wedge L_{(b,1,0)}(y)]$
    where $\text{rflow}_t'(x,y)$ is obtained by replacing all atoms of the
    form $L_{a}(z)$ by $\bigvee_{b_1,b_2\in\{0,1\}} L_{a,b_1,b_2}(z)$
    in $\text{rflow}_t(x,y)$.

    Since $\text{rflow}_t^{xy}$ is an FO-formula, there exists an
    aperiodic automaton over the alphabet $\Sigma\times \{0,1\}^2$
    that defines the same language. We intersect this automaton with 
    an (aperiodic) automaton that checks that the sequence of Boolean 
    pairs belongs to $(0,0)^*(0,1)(0,0)^*(1,0)(0,0)^*$. Let 
    $L^b$ ($b$ for Boolean) denote the aperiodic language defined by this
    automaton.

    Let us define the language $L$ of strings $u$ over $\Sigma\times
    \{0,1\}^2$ whose sequence of Boolean pairs is in
    $(0,1)(0,0)^*(1,0)$ and such there exists $u_1,u_2\in
    (\Sigma\times \{(0,0)\})^*$ such that $u_1uu_2\in L^b$.
    The language $L$ can be easily defined by some aperiodic
    automaton obtained from any aperiodic automaton defining $L^b$.
    We now define the language $\pi(L)$ obtained by projecting 
    $L$ on the component $\Sigma$, i.e. 
    $\pi(L)$ is the set of strings $s$ such that $s$ can be extended
    with Boolean pairs into a string $u$ such that $u\in L$. 
    The language $\pi(L)$ is aperiodic. Indeed, there is a bijection
    between the strings $s$ of $\pi(L)$ to the strings $L$, defined by
    extending the first symbol of $s$ with $(0,1)$, its last symbol by
    $(1,0)$, and the symbols in between by $(0,0)$. Aperiodic
    languages are not closed by projection in general, but they are
    preserved by bijective renaming \cite{dg08SIWT}. Therefore $\pi(L)$
    is aperiodic, and definable by some FO formula $\Phi_{\pi(L)}$. We
    let $\flow_t^{>0} = (\exists z.\top)\rightarrow \Phi_{\pi(L)}$.

    Let us prove the correctness of $\flow_t^{>0}$. Suppose that $s\models
    \flow_t^{>0}$ and $|s|>0$. Therefore there exists an extension $u$ of $s$ on the
    alphabet $\Sigma\times \{0,1\}^2$ such that $u\in L$,
    i.e. $u\in L$. By definition of $L$, the Boolean part of $u$ is
    necessarily of the form $(0,1)(0,0)^*(1,0)$, and 
    there exist $u_1,u_2\in (\Sigma\times \{(0,0)\})^*$ such that
    $u_1uu_2\models L^b$. By definition of $L^b$, we get $u\models
    \text{rflow}_t^{xy}$, i.e. $s_1ss_2\models \text{rflow}_t(i,j)$, where
    $s_1,s_2$ are the projections of $u_1,u_2$ on $\Sigma$, $i$ is
    the starting position of $s$ and $j$ its ending position. 
    In other words,
    $(q,R,X_\addr)\flows^{(s_1ss_2)[i{:}j]}_m(q',R',X_{\addr'})$ and in
    particular, $(q,R,X_\addr)\flows^s_m(q',R',X_{\addr'})$.

    Conversely, suppose that $(q,R,X_\addr)\flows^s
    (q',R',X_{\addr'})$ with $|s|>0$. 
    Since $(q,R)$ and $(q',R')$ are useful, there
    exists $s_1,s_2$ such that there exists a run from the initial
    pair $(q_0,R_0)$ to $(q,R)$ on $s_1$, and there exists an
    accepting run from $(q',R')$ to an accepting pair on $s_2$. 
    In particular $s_1ss_2\in dom(T_\la)$. Therefore
    $(q,R,X_\addr)\flows^{(s_1ss_2)[i{:}j]}_m(q',R',X_{\addr'})$, where $i$
    and $j$ are respectively the starting and ending position of $s$
    in $s_1ss_2$. Therefore $s_1ss_2\models \text{rflow}_t(i,j)$. If one
    extends $s_1$ with Boolean pairs $(0,0)$, $s$ with
    $(0,1)(0,0)^{n-2}(1,0)$, where $n = |s|$, and $s_2$ with the
    Boolean pairs $(0,0)$, one gets three strings $u_1,u,u_2$ such that 
    $u_1uu_2\models \text{rflow}_t^{xy}$, i.e. $u_1uu_2\in L^b$. By
    definition of $L$, we also get $u\in L$ and thus $u\in L$ and clearly, $s$ (the projection of $u$ on
    $\Sigma$) satisfies $\flow_t^{>0}$. 
\end{proof}

\subsection{Proof of Corollary \ref{coro:aperiodicity-sstla}}

\begin{proof}
From Lemma \ref{lem:flowFO}, it is clear that $T_\la$ is $1$-bounded. 
We show that the transition monoid $M$ of $T_\la$ is aperiodic. Let
$s\in\Sigma^*$ and let $t = (q,q',R,R',X_\addr,X_\addr',m)\in Q\times Q\times
2^{Q_A}\times 2^{Q_A}\times X\times X\times \mathbb{N}$.

If $(q,R)$ is not useful or $(q',R')$ is not useful, then for all $m$, 
$M_{s^m}[q,R][q',R'] = \bot$.

Now suppose that $(q,R)$ and $(q',R')$ are both useful. 
By Lemma \ref{lem:flowFO}, there exists an FO-sentence
$\text{flow}_t$ such that $s\models
\text{flow}_t$ iff $(q,R,\addr)\rightsquigarrow^s_n (q',R',\addr')$.
Let $b$ be the maximal quantifier rank of all the formulas
$\text{flow}_t$. By Proposition~\ref{prop:ktype}.\ref{prop:ape} there exists $n_0$ such that 
$s^{n_0} \equiv_b s^{n_0+1}$. Therefore there exists $n_0$ such that 
for all tuples $t$, $s^{n_0}\models \text{flow}_t$ iff $s^{n_0+1}\models
\text{flow}_t$, i.e. $(q,R,\addr)\rightsquigarrow^{s^{n_0}}_m
(q',R',\addr')$
iff $(q,R,\addr)\rightsquigarrow^{s^{n_0+1}}_m
(q',R',\addr')$. In other words $M_{s^{n_0}}[(q,R,\addr)][(q',R',\addr')]=m$ iff
$M_{s^{n_0+1}}[(q,R,\addr)][(q',R',\addr')]=m$. Therefore the transition monoid of
$T_\la$ is aperiodic.

\end{proof}

\section{$f_\halve$ is not FO-definable}
\label{halve}
\begin{proof}
    Let assume that it is FO-definable by some FO-transducer $T$ that outputs
    strings over a signature that does not contain the transitive
    closure of the successor relation. We show a contradiction, which
    will therefore imply the non FO-definability of $f_\halve$ by an
    FO-transducer that, additionally, must output the transitive
    closure of the successor relation.

    Let $k = qr(T)$ be the quantifier rank of $T$, and $C>0$ be the number of copies of $T$. We know by Proposition \ref{prop:ktype}.\ref{prop:ape} that for all $n\geq 2^{k+2}$,
    $a^n \equiv_{k+2} a^{n+1}$. Take such an $n$ and consider the string
    $s := a^{8nC}$.

    Clearly, $f_\halve(s) = a^{4nC}$. Therefore the output graph of
    $T(s)$ contains $4nC-1$ edges. Suppose that $s\models
    \phi_{\suc}^{c,d}(i,j)$ for some copies $c,d$ of $T$ and some
    input positions $i \leq j$ (the case $j\leq i$ is symmetric).

    Suppose that $j-i> n$ and $i>n$.  Therefore $s[1{:}i-1) = a^{i{-}1}\equiv_{k+2} a^i =
    s[1{:}i-1]$ and $s[i,j] = a^{j-i+1} \equiv_{k+2} a^{j-i} = s[i+1,j]$. Since $s\models
    \phi_\suc^{c,d}(i,j)$ and the quantifier rank of
    $\phi_\suc^{c,d}$ is at most $k$, by Proposition
    \ref{prop:ktype}.\ref{prop:indis}, it
    is also the case that $s\models \phi_{\suc}^{c,d}(i-1,j)$. It is a
    contradiction since it that case, there would be two incoming
    edges to the output node $j^d$, and the output would not be a
    string.

    A similar contradiction being obtained symmetrically for the case
    $j-i>n$ and $j < 8nC-n$, it is implies that necessarily, if  
    $j-i > n$, then $i\leq n$ and $j\geq 8nC-n$. In other
    words, either the edge $(i^c, j^d)$ is ``local'' or one of its 
    element is close from the extremities of $s$. In both cases, we
    show again a contradiction.

    Now, there exist necessarily two positions $i',j'$ and two copies $c',d'$ such that 
    $s\models \phi_{\suc}^{c',d'}(i',j')$ such that $n < i'$ and $j' < 8nC-n$. If it was
    not the case, then since the input nodes in $[1,n]$ and $[8nC-n, 8nC]$
    contribute to at most $2nC$ edges (otherwise the output would
    not be a string as two edges would have either same target or same
    source), there would not be a sufficient number of edges
    to define the output.

    Therefore, since $n<i'$ and $j'<8nC-n$, we have just shown that necessarily, it is the case that
    $|i'-j'|\leq n$. Since $s\models \phi_\suc^{c',d'}(i',j')$, by a similar reasoning as before (in particular by
    applying Proposition \ref{prop:ktype}.\ref{prop:indis}), we can
    show that many other edges can be obtained by shifting the edge
    $(i'^{c'}, j'^{d'})$ left or right. More precisely, for all $\ell\in\mathbb{Z}$ such that $max (i'+\ell, j'+\ell)\leq 7n$ and 
    $min (i'+\ell, j'+\ell)\geq n$, it is the case that 
    $s\models \Phi_{\suc}^{c',d'}(i'+\ell, j'+\ell)$. 
Since there exist
    at least $8nC-3n$ such $\ell$ (because $|i'-j'|\leq n$ and $n<
    i'$ and $j'<8nC-n$), it means that the output graph of $s$ by $T$
    contains at least $8nC-3n$ edges, which is a
    contradiction. Indeed, we know that the output contains exactly
    $4nC-1$ edges, and $8nC-3n-4nC+1 = 4nC-3n + 1 > 0$. 
\end{proof}


\begin{thebibliography}{10}

\bibitem{AC10}
R.~Alur and P.~{\v C}ern\'y.
\newblock {Expressiveness of streaming string transducers}.
\newblock In {\em FSTTCS}, volume~8, pages 1--12, 2010.

\bibitem{AC11}
R.~Alur and P.~{\v C}ern\'y.
\newblock Streaming transducers for algorithmic verification of single-pass
  list-processing programs.
\newblock In {\em POPL}, pages 599--610, 2011.

\bibitem{AD12}
R.~Alur and L.~D'Antoni.
\newblock Streaming tree transducers.
\newblock In {\em ICALP (2)}, pages 42--53, 2012.

\bibitem{ADD+11}
R.~Alur, L.~D'Antoni, J.~V. Deshmukh, M.~Raghothaman, and Y.~Yuan.
\newblock Regular functions and cost register automata.
\newblock In {\em LICS}, 2013.

\bibitem{ADT13}
R.~Alur, A.~Durand-Gasselin, and A.~Trivedi.
\newblock From monadic second-order definable string transformations to
  transducers.
\newblock In {\em LICS}, pages 458--467, 2013.

\bibitem{FiliotTrivediLics12}
R.~Alur, E.~Filiot, and A.~Trivedi.
\newblock Regular transformations of infinite strings.
\newblock In {\em LICS}, pages 65--74, 2012.

\bibitem{DBLP:journals/corr/Bojanczyk13}
M.~Bojanczyk.
\newblock Transducers with origin information.
\newblock In {\em ICALP}, 2014.
\newblock To appear.

\bibitem{Bu60}
J.~R. B\"uchi.
\newblock Weak second-order arithmetic and finite automata.
\newblock {\em Zeitschrift f\"ur Mathematische Logik und Grundlagen der
  Mathematik}, 6(1--6):66--92, 1960.

\bibitem{Carton13}
O.~Carton and L.~Dartois.
\newblock Aperiodic two-way transducers.
\newblock In {\em Highlights of Logic, Automata and Games}, 2013.
\newblock Oral communication, slides available at
  \url{http://highlights-conference.org/pub/3-1-Dartois.pdf}.

\bibitem{cho-dung}
S.~Cho and D.~T. Huynh.
\newblock Finite state automaton aperiodicity is pspace-complete.
\newblock {\em Theoretical Computer Science}, 88:99--116, 1991.

\bibitem{Cour94}
B.~Courcelle.
\newblock Monadic second-order definable graph transductions: a survey.
\newblock {\em Theoretical Computer Science}, 126(1):53--75, 1994.

\bibitem{dg08SIWT}
V.~Diekert and P.~Gastin.
\newblock First-order definable languages.
\newblock In {\em Logic and Automata: History and Perspectives}, Texts in Logic
  and Games, pages 261--306. Amsterdam University Press, 2008.

\bibitem{Elg61}
C.~C. Elgot.
\newblock Decision problems of finite automata design and related arithmetics.
\newblock {\em In Transactions of the American Mathematical Society},
  98(1):21--51, 1961.

\bibitem{EH01}
J.~Engelfriet and H.~J. Hoogeboom.
\newblock {MSO} definable string transductions and two-way finite-state
  transducers.
\newblock {\em ACM Trans. Comput. Logic}, 2:216--254, 2001.

\bibitem{Engelfriet03}
J.~Engelfriet and S.~Maneth.
\newblock Macro tree translations of linear size increase are {MSO} definable.
\newblock {\em SIAM Journal on Computing}, 32:950--1006, 2003.

\bibitem{McKenzieEtAl06}
P.~McKenzie, T.~Schwentick, D.~Therien, and H.~Vollmer.
\newblock The many faces of a translation.
\newblock {\em JCSS}, 72, 2006.

\bibitem{stern}
J.~Stern.
\newblock Complexity of some problems from the theory of automata.
\newblock {\em Information and Control}, 66:163--176, 1985.

\bibitem{Strau94}
H.~Straubing.
\newblock {\em Finite Automata, Formal Logic, and Circuit Complexity}.
\newblock Birkh{\"a}user, Boston, 1994.

\bibitem{Tho96}
W.~Thomas.
\newblock Languages, automata, and logic.
\newblock In {\em Handbook of Formal Languages}, pages 389--455. Springer,
  1996.

\bibitem{Tra62}
B.~A. Trakhtenbrot.
\newblock Finite automata and monadic second order logic.
\newblock {\em Siberian Mathematical Journal}, 3:101--131, 1962.

\end{thebibliography}
\end{document}